\newcommand{\high}[1]{{\color{black}{#1}}}
\begin{document}

\title{\LARGE \bf
Achieving Optimal Throughput and Near-Optimal Asymptotic Delay Performance 
in Multi-Channel Wireless Networks with Low Complexity: A Practical Greedy Scheduling Policy } 
%Low-complexity Delay-efficient Scheduling for OFDM Downlink}

%\author{ \parbox{3 in}{\centering Huibert Kwakernaak*
%         \thanks{*Use the $\backslash$thanks command to put information here}\\
%         Faculty of Electrical Engineering, Mathematics and Computer Science\\
%         University of Twente\\
%         7500 AE Enschede, The Netherlands\\
%         {\tt\small h.kwakernaak@autsubmit.com}}
%         \hspace*{ 0.5 in}
%         \parbox{3 in}{ \centering Pradeep Misra**
%         \thanks{**The footnote marks may be inserted manually}\\
%        Department of Electrical Engineering \\
%         Wright State University\\
%         Dayton, OH 45435, USA\\
%         {\tt\small pmisra@cs.wright.edu}}
%}
%\author{Bo Ji, Changhee Joo and Ness B. Shroff}
%\author{}

% \author{Bo~Ji, Gagan~R.~Gupta, Xiaojun~Lin, and Ness~B.~Shroff}
\author{Bo~Ji, Gagan~R.~Gupta, Manu~Sharma, Xiaojun~Lin, and Ness~B.~Shroff
\thanks{B. Ji is with AT\&T Labs. 
% G. R. Gupta is with AT\&T Labs.
M. Sharma is with Qualcomm Technologies Inc, and made 
contribution on this work while he was at Purdue University.
X. Lin is with School of ECE at Purdue University.
N. B. Shroff is with Departments of ECE 
and CSE at the Ohio State University.
Emails: ji.33@osu.edu, gagan.gupta@iitdalumni.com, sharma50@purdue.edu, linx@ecn.purdue.edu, shroff.11@osu.edu.}
% \thanks{This work was supported in part by ARO MURI grant W911NF-08-1-0238 and NSF grants 
% CNS-1065136 and CNS-1012700 and CNS-0643145.}
\thanks{A preliminary version of this work was presented at the IEEE INFOCOM 2013, Turin, Italy, April, 2013.}
% \thanks{This work has been supported in part by the Army Research Office MURI Award W911NF-08-1-0238; the NSF grants CNS-1065136, CNS-1012700, and CNS-0643145; and Intel Corporation.}
% \thanks{Emails: ji@ece.osu.edu, gagan.gupta@iitdalumni.com, linx@ecn.purdue.edu, shroff@ece.osu.edu.}
}

\newcounter{theorem}
\newtheorem{theorem}{\it Theorem}

\newcounter{proposition}
\newtheorem{proposition}{\it Proposition}

\newcounter{lemma}
\newtheorem{lemma}{\it Lemma}

\newcounter{corollary}
\newtheorem{corollary}{\it Corollary}

\newcounter{definition}
\newtheorem{definition}{\it Definition}

\newcounter{assumption}
\newtheorem{assumption}{\it Assumption}

\newcommand{\Graph}{\mathcal{G}}
\newcommand{\Node}{\mathcal{V}}
\newcommand{\Vertex}{\mathcal{V}}
\newcommand{\Edge}{\mathcal{E}}
\newcommand{\Link}{\mathcal{L}}
\newcommand{\Flow}{\mathcal{S}}
\newcommand{\Route}{\mathcal{H}}
\newcommand{\bZ}{\mathbb{Z}}
\newcommand{\uZ}{\underline{Z}}
\newcommand{\uP}{\underline{P}}
\newcommand{\uR}{\underline{R}}
\newcommand{\uB}{\underline{B}}
\newcommand{\uU}{\underline{U}}
\newcommand{\uQ}{\underline{Q}}
\newcommand{\utP}{\tilde{\underline{P}}}
\newcommand{\utB}{\tilde{\underline{B}}}
\newcommand{\tB}{\tilde{B}}
\newcommand{\Hop}{{H}}
\newcommand{\Matching}{\mathcal{M}}
\newcommand{\Pair}{\mathcal{P}}
\newcommand{\Int}{\mathbb{Z}}
\newcommand{\Expect}{\mathbf{E}}
\newcommand{\System}{\mathcal{Y}}
\newcommand{\Markov}{\mathcal{X}}
\newcommand{\Prob}{\mathbb{P}}
\newcommand{\vM}{\vec{M}}
\newcommand{\bM}{\mathbf{M}}
\newcommand{\s}{{(s)}}
\newcommand{\sk}{{s,k}}
\newcommand{\hsk}{{\hat{s},\hat{k}}}
\newcommand{\rj}{{r,j}}
\newcommand{\xm}{{x_m}}
\newcommand{\xml}{{x_{m_l}}}
\newcommand{\UGraph}{{U}}
\newcommand{\UVertex}{{X}}
\newcommand{\UEdge}{{Y}}
\newcommand{\vlambda}{\vec{\lambda}}
\newcommand{\vpi}{\vec{\pi}}
\newcommand{\vphi}{\vec{\phi}}
\newcommand{\vpsi}{\vec{\psi}}
\newcommand{\blambda}{\mathbf{\lambda}}
\newcommand{\tM}{\tilde{M}}
\newcommand{\tpi}{\tilde{\pi}}
\newcommand{\tphi}{\tilde{\phi}}
\newcommand{\tpsi}{\tilde{\psi}}
\newcommand{\vmu}{\vec{\mu}}
\newcommand{\vnu}{\vec{\nu}}
\newcommand{\valpha}{\vec{\alpha}}
\newcommand{\vbeta}{\vec{\beta}}
\newcommand{\ve}{\vec{e}}
\newcommand{\vxi}{\vec{\xi}}
\newcommand{\hgamma}{\gamma}
\newcommand{\GMSLambda}{\Lambda_{\text{\it GMS}}}
\newcommand{\argmax}{\operatornamewithlimits{argmax}}
\newcommand{\card}{\aleph}
\newcommand{\V}{\mathcal{V}}
\newcommand{\X}{\mathcal{X}}
\newcommand{\Y}{\mathcal{Y}}
\newcommand{\Z}{\mathcal{Z}}
\newcommand{\mS}{\mathcal{S}}
\newcommand{\mP}{\mathcal{P}}
\newcommand{\mM}{\mathcal{M}}
\newcommand{\mE}{\mathcal{E}}
\newcommand{\mA}{\mathcal{A}}
\newcommand{\mB}{\mathcal{B}}
\newcommand{\mC}{\mathcal{C}}
\newcommand{\mD}{\mathcal{D}}
\newcommand{\indicator}{\mathbb{1}}

\maketitle
%\thispagestyle{empty}
%\pagestyle{empty}

%%%%%%%%%%%%%%%%%%%%%%%%%%%%%%%%%%%%%%%%
\begin{abstract}
%%%%%%%%%%%%%%%%%%%%%%%%%%%%%%%%%%%%%%%%
In this paper, we focus on the scheduling problem in multi-channel wireless 
networks, e.g., the downlink of a single cell in fourth generation (4G) 
OFDM-based cellular networks. Our goal is to design practical scheduling 
policies that can achieve \emph{provably good performance} in terms of both 
\emph{throughput} and \emph{delay}, at a \emph{low complexity}. While a class
of $O(n^{2.5} \log n)$-complexity hybrid scheduling policies are recently 
developed to guarantee both rate-function delay optimality (in the many-channel 
many-user \emph{asymptotic} regime) and throughput optimality (in the
general \emph{non-asymptotic} setting), their practical complexity 
is typically high. 
% This makes the hybrid policies impractical for modern OFDM systems. 
To address this issue, we develop a simple greedy policy called \emph{Delay-based 
Server-Side-Greedy (D-SSG)} with a \emph{lower complexity $2n^2+2n$}, 
and rigorously prove that D-SSG not only achieves \emph{throughput optimality}, 
but also guarantees \emph{near-optimal asymptotic delay performance}. 
Specifically, we define the delay-violation probability as the steady-state 
probability that the largest packet waiting time in the system exceeds a certain 
fixed integer threshold $b>0$, and we study the rate-function (or decay-rate) of 
such delay-violation probability when the number of channels or users, $n$, goes 
to infinity. We show that the rate-function attained by D-SSG for any such threshold 
$b$, is no smaller than the maximum achievable rate-function by any scheduling 
policy for threshold $b-1$. Thus, we are able to achieve a reduction in complexity 
(from $O(n^{2.5} \log n)$ of the hybrid policies to $2n^2 + 2n$) with a 
minimal drop in the delay performance. More importantly, in practice, D-SSG generally 
has a substantially lower complexity than the hybrid policies that typically have 
a large constant factor hidden in the $O(\cdot)$ notation. Finally, we conduct 
numerical simulations to validate our theoretical results in various scenarios. 
\high{The simulation results show that in all scenarios we consider, D-SSG not 
only guarantees a near-optimal rate-function, but also empirically has a similar 
delay performance to the rate-function delay-optimal policies.}

\end{abstract}

%%%%%%%%%%%%%%%%%%%%%%%%%%%%%%%%%%%%%%%%
\section{Introduction} \label{sec:intro}
%%%%%%%%%%%%%%%%%%%%%%%%%%%%%%%%%%%%%%%%
% The dramatic increases in demands from broadband multimedia applications
% (e.g., voice or video), have resulted in a very active deployment of fourth 
% generation (4G) OFDM-based wireless cellular systems (e.g., LTE and WiMax), 
% which target higher data rates, smaller delay, higher efficiency, and better 
% user experience \cite{3gpp25913,3gpp23882,wimax06}. An important requirement 
% for achieving this goal is to design efficient scheduling policies with high 
% performance in terms of both throughput and delay. In these OFDM systems, the 
% Transmission Time Interval (TTI), within which the scheduling decisions need 
% to be made, is typically in the order of a few milliseconds. On the other hand, 
% there are hundreds of orthogonal channels that can be allocated to different 
% users. Hence, many decisions have to be made within a short scheduling cycle. 
% To this end, it is important that the scheduling policies must be of low complexity.

In this paper, we consider the scheduling problem in a multi-channel 
wireless network, where the system has a large bandwidth that can be 
divided into multiple orthogonal sub-bands (or channels). A practically 
important example of such a multi-channel network is the downlink of 
a single cell of a fourth generation (4G) OFDM-based wireless cellular 
system (e.g., LTE and WiMax). In such a multi-channel system, a key 
challenge is \emph{how to design efficient scheduling policies that 
can simultaneously achieve high throughput and low delay.} This problem 
becomes extremely critical in OFDM systems that are expected to meet 
the dramatically increasing demands from multimedia applications with 
more stringent Quality-of-Service (QoS) requirements (e.g., voice and 
video applications), and thus look for new ways to achieve higher data 
rates, lower latencies, and a much better user experience. Yet, an even 
bigger challenge is \emph{how to design such high-performance scheduling 
policies at a low complexity.} For example, in the OFDM-based LTE systems, 
the \emph{Transmission Time Interval (TTI)}, within which the scheduling 
decisions need to be made, is only one millisecond. 
On the other hand, there are hundreds of orthogonal channels that need to 
be allocated to hundreds of users. Hence, the scheduling decision has to 
be made within a very short scheduling cycle. 
% This is very much desired for modern OFDM systems, as a large number of 
% channels need to be allocated to many users in a very short scheduling cycle.

We consider a single-cell multi-channel system consisting of $n$ channels 
and a proportionally large number of users, with intermittent connectivity 
between each user and each channel. We assume that the Base Station (BS) 
maintains a separate First-in First-out (FIFO) queue associated with each 
user, which buffers the packets for the user to download. 
A series of works studied the delay performance of scheduling policies in 
the large-queue asymptotic regime, where the buffer overflow threshold tends 
to infinity (see \cite{ying06,stolyar08,shakkottai08,venkat10} and references 
therein). One potential difficulty of the large-queue asymptotic is that the 
estimates become accurate only when the queue-length or the delay becomes large.
However, for a practical system that aims to serve a large number of users
with more stringent delay requirements (as anticipated in the 4G systems), 
it is more important to ensure small queue-length and small delay \cite{bodas09}. 
Note that even in the wireline networks, there was a similar distinction between 
the large-buffer asymptotic and the many-source 
asymptotic \cite{courcoubetis96,shakkottai01}. It was shown that the many-source 
asymptotic provides sharper estimates of the buffer violation probability when 
the queue-length threshold is not very large. Hence, the delay metric that we 
focus on in this paper is the \emph{decay-rate} (or called the \emph{rate-function} 
in large-deviations theory) of the steady-state probability that the largest 
packet waiting time in the system exceeds a certain fixed threshold when the 
number of users and the number of channels both go to infinity. (See Eq.~(\ref{eq:rf}) 
for the formal definition of rate-function.) We refer to this setting as the 
\emph{many-channel many-user asymptotic regime}. 

A number of recent works have considered a multi-channel system similar
to ours, but looked at delay from different perspectives. A line of works
focused on queue-length-based metrics: average queue length \cite{kittipiyakul09} 
or queue-length rate-function in the many-channel many-user asymptotic regime 
\cite{bodas09,bodas10,bodas11a,bodas11b}. 
In \cite{kittipiyakul09}, the authors focused on minimizing cost functions 
over a finite horizon, which includes minimizing the expected total queue 
length as a special case. The authors showed that their goal can be achieved 
in two special scenarios: 1) a simple two-user system, and 2) systems where 
fractional server allocation is allowed. 
% In \cite{kittipiyakul09}, the 
% authors studied a minimization problem of the queue-length-based cost 
% functions over a finite horizon. The cost function considered there is 
% convex and strictly increasing, and includes the expected total queue 
% length as a special case. The authors showed that their defined delay 
% optimality can be achieved in two special cases: 1) a simple two-user 
% system, and 2) systems where fractional server allocation is allowed. 
In \cite{bodas09,bodas10,bodas11a,bodas11b}, delay performance is evaluated 
by the queue violation probability, and its associated rate-function, i.e., 
the asymptotic decay-rate of the probability that the largest queue length 
in the system exceeds a fixed threshold in the many-channel many-user
asymptotic regime. Although \cite{bodas09} and \cite{bodas11b} 
proposed scheduling policies that can guarantee both throughput optimality 
and rate-function optimality, 
\high{there are still a number of important dimensions that have space for 
improvement.}
%they suffer from the following shortcomings. 
First, although the decay-rate of the queue violation probability may be 
mapped to that of the delay-violation probability when the arrival process 
is deterministic with a constant rate \cite{venkat10}, this is not true in 
general, especially when the arrivals are correlated over time. Further, 
\cite{sharma11,sharma11b,ji13c} have shown through simulations that good 
queue-length performance does not necessarily imply good delay performance. 
Second, their results on rate-function optimality strongly rely on the 
assumptions that the arrival process is \emph{i.i.d. not only across users, 
but also in time}, and that per-user arrival at any time is no greater than 
the largest channel rate. Third, even under this more restricted model, 
the lowest complexity of their proposed rate-function-optimal 
algorithms is $O(n^3)$. For more general models, no algorithm with 
provable rate-function optimality is provided.

Similar to this paper, another line of work \cite{sharma11,sharma11b,ji13a} 
proposed delay-based scheduling policies\footnote{Delay-based 
policies were first introduced in \cite{mekkittikul96} for scheduling problems 
in Input-Queued switches, and were later studied for wireless networks 
\cite{andrews01,shakkottai02,andrews04,eryilmaz05,sadiq09,neely10,ji13c}. 
Please see \cite{ji13c} and references therein for more discussions on the 
history and the recent development of delay-based scheduling policies.} and 
directly focused on the delay performance rather than the queue-length performance. 
The performance of delay is often harder to characterize, because the delay 
in a queueing system typically does not admit a  Markovian representation. 
The problem becomes even harder in a multi-user system with fading channels 
and interference constraints, where the service rate for individual queues 
becomes more unpredictable. In \cite{sharma11,sharma11b}, the authors developed 
a scheduling policy called Delay Weighted Matching (DWM), which maximizes the 
sum of the delay of the scheduled packets in each time-slot. It has been shown 
in \cite{sharma11,sharma11b,ji13a} that DWM is not only throughput-optimal, 
but also rate-function delay-optimal (i.e., maximizing the \emph{delay rate-function}, 
rather than the \emph{queue-length rate-function} as considered in 
\cite{bodas09,bodas10,bodas11a,bodas11b}). 
Moreover, the authors of \cite{sharma11b} used the derived 
rate-function of DWM to develop a simple threshold policy for admission control 
when the number of users scales linearly with the number of channels in the system.
However, DWM incurs a high complexity $O(n^5)$, which renders it 
impractical for modern OFDM systems with many channels and users (e.g., on the 
order of hundreds). In \cite{ji13a}, the authors proposed a class of hybrid 
scheduling policies with a much lower complexity $O(n^{2.5} \log n)$, while 
still guaranteeing both throughput optimality and rate-function delay optimality
(with an additional minor technical assumption). 
However, the practical complexity of the hybrid policies is still high as the 
constant factor hidden in the $O(\cdot)$ notation is typically large due to 
the required two-stage scheduling operations and the operation of computing 
a maximum-weight matching in the first stage. Hence, scheduling policies with 
an even lower (both theoretical and practical) complexity are needed in the 
multi-user multi-channel systems.

This leads to the following natural but important questions: \emph{Can 
we find scheduling policies that have a significantly lower complexity, 
with comparable or only slightly worse performance? How much complexity 
can we reduce, and how much performance do we need to sacrifice?} In 
this paper, we answer these questions positively. Specifically, we develop 
a \emph{low-complexity} greedy policy that achieves both \emph{throughput 
optimality} and \emph{rate-function near-optimality}. 
% \high{Note that our results do not contradict the recent findings of 
% \cite{shah11}. There the authors find a worst-case network topology 
% such that, in order to achieve even a small fraction of the optimal 
% throughput, either the complexity or the delay must grow exponentially 
% with the network size. However, this result does \emph{not} imply that 
% it is infeasible to find a low-complexity low-delay scheme for specific 
% network structures, or even for typical networks. \emph{In fact, we show 
% that for our single-cell multi-channel cellular network topology that can 
% be modeled as a probabilistic bipartite graph, a simple greedy scheduling 
% policy results in very good performance in a number of different dimensions.} 
% A further difference from \cite{shah11} is that we consider the asymptotic 
% steady-state delay performance rather than the non-asymptotic transient 
% delay performance considered in \cite{shah11}. These key differences allow 
% us to develop practical and low-complexity scheduling policies that achieve 
% both optimal throughput and optimal/near-optimal (asymptotic) delay performance.
% }
%\subsection{Main Contributions}

We summarize our main contributions as follows.

First, we propose a greedy scheduling policy, called \emph{Delay-based 
Server-Side-Greedy (D-SSG)}. D-SSG, in an iterative manner, allocates 
servers one-by-one to serve a connected queue that has the largest 
head-of-line (HOL) delay. We rigorously prove that D-SSG not only 
achieves throughput optimality, but also guarantees a near-optimal 
rate-function. Specifically, the rate-function attained by D-SSG for 
any fixed integer threshold $b>0$, is \emph{no smaller than the maximum 
achievable rate-function by any scheduling policy for threshold $b-1$}. 
We obtain this result by comparing D-SSG with a new \emph{Greedy 
Frame-Based Scheduling (G-FBS)} policy that can exploit a key property 
of D-SSG. We show that G-FBS policy guarantees a near-optimal rate-function, 
and that D-SSG dominates G-FBS in every sample-path. \emph{To the best 
of our knowledge, this is the first work that shows a near-optimal 
rate-function in the above form, and hence we believe that our proof 
technique is of independent interest.} Also, we remark that the gap 
between the near-optimal rate-function attained by D-SSG and the optimal 
rate-function is likely to be quite small. (See Section~\ref{subsec:nodp} 
for detailed discussion.)
% For example, in the special case where the arrival is either 1 or 0, the 
% near-optimal rate-function implies that $I(b) \ge \frac{b}{b+1} I^*(b)$.

D-SSG is a very simple policy and has a \emph{low complexity $2n^2 + 2n$}. 
Note that the queue-length-based counterpart of D-SSG, called Q-SSG, has been 
studied in \cite{bodas10,bodas11a}. However, there the authors were only able 
to prove a positive (queue-length) rate-function for restricted arrival processes 
that are \emph{i.i.d.} not only across users, but also in time. In contrast, 
we show that D-SSG achieves a rate-function that is not only positive but 
also near-optimal, for more general arrival processes. Thus, we are able 
to achieve a reduction in complexity (from $O(n^{2.5} \log n)$ of the hybrid 
policies \cite{ji13a} to $2n^2 + 2n$) with a minimal drop in the delay performance. 
More importantly, the practical complexity of D-SSG is substantially lower than 
that of the hybrid policies since we can precisely bound the constant factor 
in its complexity.

% \emph{Delay-based Queue-Side-Greedy (D-QSG)}. D-QSG, in an iterative manner, 
% schedules the oldest packets remaining in the system one-by-one whenever possible. 

% Unfortunately, however, D-QSG has a higher complexity $O(n^3)$, compared to 
% $O(n^{2.5} \log n)$ of the hybrid policies. To overcome this obstacle, 
% we propose another greedy scheduling policy, called \emph{Delay-based 
% Server-Side-Greedy (D-SSG)}, which has a \emph{lower complexity $O(n^2)$}.
% Specifically, D-SSG has a complexity of $2n^2+2n$. 
% D-SSG, also in an iterative manner, allocates servers one-by-one to serve a 
% connected queue that has the largest head-of-line (HOL) delay. 

Further, we conduct simulations to validate our analytical results in various 
scenarios. The simulation results show that in all scenarios we consider, 
D-SSG not only guarantees a near-optimal rate-function, but also empirically 
has a similar delay performance to the rate-function delay-optimal policies. 

% On the other hand, we also show that the greedy policies are \emph{not} 
% rate-function delay-optimal in general. Specifically, we present a 
% counterexample to show that the rate-function achieved by the greedy 
% policies is strictly less than the optimal rate-function in systems with 
% i.i.d. 0-1 arrivals. 

% As a by-product, we can similarly show that Q-SSG does \emph{not} 
% achieve the optimal (queue-length-based) rate-function either.

The remainder of the paper is organized as follows. In Section~\ref{sec:model},
we describe the details of our system model and performance metrics. In
Section~\ref{sec:ub}, we derive an upper bound on the rate-function that 
can be achieved by any scheduling policy. Then, in Section~\ref{sec:dssg}, 
we present our main results on throughput optimality and near-optimal 
rate-function for our proposed low-complexity greedy policy. 
% In Section~\ref{sec:dssg-alpha}, we propose a class of throughput-optimal 
% policies that allows a trade-off between complexity and delay. 
Further, we conduct numerical simulations in Section~\ref{sec:sim}. Finally, 
we make concluding remarks in Section~\ref{sec:con}.

%%%%%%%%%%%%%%%%%%%%%%%%%%%%%%%%%%%%%%%%%%%
\section{System Model} \label{sec:model}
%%%%%%%%%%%%%%%%%%%%%%%%%%%%%%%%%%%%%%%%%%%
We consider a discrete-time model for the downlink of a single-cell multi-channel 
wireless network with $n$ orthogonal channels and $n$ users. In each time-slot, a channel 
can be allocated only to one user, but a user can be allocated with multiple channels 
simultaneously. As in \cite{bodas09,bodas10,bodas11a,bodas11b,sharma11,sharma11b,ji13a}, 
for ease of presentation, we assume that the number of users is equal to the number 
of channels. (If the number of users scales linearly with the number of channels, 
the rate-function delay analysis follows similarly. However, an admission control policy 
needs to be carefully designed if the number of users becomes too large \cite{sharma11b}.)
We let $Q_i$ denote the FIFO queue associated with the $i$-th user, and let $S_j$ denote 
the $j$-th server\footnote{Throughout this paper, we use the terms ``user" and ``queue" 
interchangeably, and use the terms ``channel" and ``server" interchangeably.}. We consider 
the \emph{i.i.d.} ON-OFF channel model under which the connectivity between each queue 
and each server changes between ON and OFF from time to time.  
We also assume unit channel capacity, i.e., at most one packet from $Q_i$ can be served 
by $S_j$ when $Q_i$ and $S_j$ are connected. Let $C_{i,j}(t)$ denote the connectivity 
between queue $Q_i$ and server $S_j$ in time-slot $t$. Then, $C_{i,j}(t)$ can be modeled 
as a Bernoulli random variable with a parameter $q \in (0,1)$, i.e.,
\[
C_{i,j}(t) = \left\{
\begin{array}{ll}
1, & \text{with probability}~ q,\\
0, & \text{with probability}~ 1-q.
\end{array}
\right. 
\]
We assume that all the random variables $C_{i,j}(t)$ are \emph{i.i.d.} 
across all the variables $i,j$ and $t$. Such a network can be modeled 
as a multi-queue multi-server system with stochastic connectivity, as 
shown in Fig.~\ref{fig:system}. Further, we assume that the perfect 
channel state information (i.e., whether each channel is ON or OFF for 
each user in each time-slot) is known at the BS. This is a reasonable 
assumption in the downlink scenario of a single cell in a multi-channel 
cellular system with dedicated feedback channels.

As in the previous works \cite{kittipiyakul09,bodas09,bodas10,sharma11,sharma11b,ji13a}, 
the above \emph{i.i.d.} ON-OFF channel model is a simplification, and 
is assumed only for the analytical results. The ON-OFF model is a good 
approximation when the BS transmits at a fixed achievable rate if the 
SINR level is above a certain threshold at the receiver, and does not 
transmit successfully otherwise. The sub-bands being \emph{i.i.d.} is 
a reasonable assumption when the channel width is larger than the 
coherence bandwidth of the environment. Moreover, we believe that our 
results obtained for this channel model can provide useful insights for 
more general models. Indeed, we will show through simulations that our 
proposed greedy policies also perform well in more general models, e.g., 
accounting for heterogeneous (near- and far-)users and time-correlated 
channels. Further, we will briefly discuss how to design efficient 
scheduling policies in general scenarios towards the end of this paper.

\begin{figure}[t]
\centering
\epsfig{file=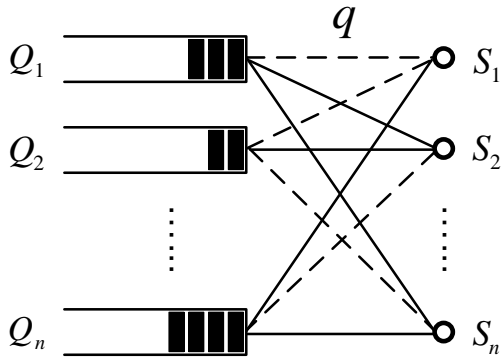,width=0.4\linewidth}
\caption{System model. The connectivity between each pair of queue $Q_i$ and 
server $S_j$ is ``ON" (denoted by a solid line) with probability $q$, and ``OFF" 
(denoted by a dashed line) otherwise.}
\label{fig:system}
\end{figure}

We present more notations used in this paper as follows. Let $A_i(t)$ 
denote the number of packet arrivals to queue $Q_i$ in time-slot $t$. 
Let $A(t)=\sum_{i=1}^n A_i(t)$ denote the cumulative arrivals to the 
entire system in time-slot $t$, and let $A(t_1,t_2) =\sum_{\tau=t_1}^{t_2} A(\tau)$ 
denote the cumulative arrivals to the system from time $t_1$ to $t_2$. 
We let $\lambda_i$ denote the mean arrival rate to queue $Q_i$, and let 
$\lambda \triangleq [\lambda_1,\lambda_2, \dots, \lambda_n]$ denote the 
arrival rate vector. We assume that packets arrive at the beginning of a 
time-slot, and depart at the end of a time-slot. We use $Q_i(t)$ to denote 
the length of queue $Q_i$ at the beginning of time-slot $t$ immediately 
after packet arrivals. Queues are assumed to have an infinite buffer 
capacity. Let $Z_{i,l}(t)$ denote the delay (or waiting time) of the 
$l$-th packet at queue $Q_i$ at the beginning of time-slot $t$, \emph{which 
is measured from the time when the packet arrived to queue $Q_i$ until the 
beginning of time-slot $t$}. Note that at the end of each time-slot, the 
packets that are still present in the system will have their delays increased 
by one due to the elapsed time. Further, let $W_i(t)=Z_{i,1}(t)$ (or $W_i(t)=0$
if $Q_i(t)=0$) denote the HOL delay of queue $Q_i$ at the beginning of time-slot 
$t$. Finally, we define $(x)^+ \triangleq \max(x,0)$, and use $\mathbb{1}_{\{ \cdot \}}$ 
to denote the indicator function.

We now state the assumptions on the arrival processes. The throughput 
analysis is carried out under Assumption~\ref{ass:arr_slln} only, which 
is mild and has also been used in \cite{andrews04,ji13a}.

\begin{assumption}
\label{ass:arr_slln}
For each user $i \in \{1,2,\dots,n \}$, the arrival process $A_i(t)$ 
is an irreducible and positive recurrent Markov chain with countable 
state space, and satisfies the Strong Law of Large Numbers: That is, 
with probability one,
\begin{equation}
\label{eq:slln}
%\textstyle 
\lim_{t \rightarrow \infty} \frac {\sum^{t-1}_{\tau=0} A_i(\tau)} {t} = \lambda_i.
\end{equation}
We also assume that the arrival processes are mutually independent across users 
(which can be relaxed for throughput analysis as discussed in \cite{andrews04}).
\end{assumption}

The rate-function delay analysis is carried out under the following 
two assumptions, which have also been used in the previous works 
\cite{sharma11,sharma11b,ji13a}.

\begin{assumption}
\label{ass:arr_bound}
There exists a finite $L$ such that $A_i(t) \le L$ for any $i$ and $t$,
i.e., instantaneous arrivals are bounded.
\end{assumption}

\begin{assumption}
\label{ass:arr_ld}
The arrival processes are \emph{i.i.d.} across users, and $\lambda_i=p$ for any 
user $i$. Given any $\epsilon>0$ and $\delta>0$, there exist $T>0$, $N>0$, and 
a positive function $I_B(\epsilon,\delta)$ independent of $n$ and $t$ such that
\[
\Prob ( \frac {\sum_{\tau=1}^{t} \mathbb{1}_{\{|\sum_{i=1}^n A_i(\tau) - pn|
> \epsilon n \}}} {t} > \delta ) < \exp (-nt I_B(\epsilon,\delta)),
\]
for all $t>T$ and $n>N$. 
\end{assumption}

Assumption~\ref{ass:arr_bound} requires that the arrivals in each time-slot have 
bounded support, which is indeed true for practical systems. Assumption~\ref{ass:arr_ld} 
is also very general, and can be viewed as a result of the statistical multiplexing 
effect of a large number of sources. Assumption~\ref{ass:arr_ld} holds for \emph{i.i.d.} 
arrivals and arrivals driven by two-state Markov chains (that can be correlated over 
time) as two special cases (see Lemmas 2 and 3 of \cite{sharma11b}).
%For more discussions on Assumption~\ref{ass:arr_ld}, please refer to \cite{sharma11}.

% As in the previous works \cite{sharma11,kittipiyakul09,bodas09,bodas10}, 
% we assume the above i.i.d. ON-OFF channels for the analytical 
% results in this paper. The channels (i.e., sub-bands) being independent 
% is a reasonable assumption if multi-path fading is sufficiently 
% frequency-selective and the narrow sub-bands are sufficiently far apart. 
% Moreover, we will show through simulations that our low-complexity solution 
% proposed later also performs well in more general scenarios, e.g., accounting 
% for heterogeneous channels that are correlated over time. Further, we will 
% briefly discuss how to generalize our solution to more general scenarios at 
% the end of this paper.

% We emphasize that our focus in this paper is to overcome the high 
% complexity issue of the DWM policy \cite{sharma11}, without 
% losing throughput-optimality and rate-function delay-optimality. Hence,
% as in the previous works \cite{sharma11,kittipiyakul09,bodas09,
% bodas10}, we make the i.i.d. assumptions across users and 
% across channels for ease of analysis. However, we believe that the insights 
% obtained in this paper will be useful for more realistic scenarios (e.g., 
% possibly correlated and heterogeneous users and channels, and more general 
% channel models where the channel rate takes a finite set of values).

%%%%%%%%%%%%%%%%%%%%%%%%%%%%%%%%%%%
\subsection{Performance Objectives}
%%%%%%%%%%%%%%%%%%%%%%%%%%%%%%%%%%%
In this paper, we consider two performance metrics: 1) the \emph{throughput} and 
2) the steady-state probability that the largest packet delay in the system exceeds 
a certain fixed threshold, and its associated \emph{rate-function} in the many-channel 
many-user asymptotic regime. 

We first define the \emph{optimal throughput region} (or \emph{stability region}) 
of the system for any fixed integer $n>0$ under Assumption~\ref{ass:arr_slln}.
As in \cite{andrews04}, a stochastic queueing network is said to be \emph{stable}
if it can be described as a discrete-time countable Markov chain and the Markov 
chain is stable in the following sense: The set of positive recurrent states is 
nonempty, and it contains a finite subset such that with probability one, this 
subset is reached within finite time from any initial state. When all the states 
communicate, stability is equivalent to the Markov chain being positive recurrent
\cite{bramson08}.
The \emph{throughput region} of a scheduling policy is defined as the set of arrival 
rate vectors for which the network remains stable under this policy. Then, the 
\emph{optimal throughput region} is defined as the union of the throughput regions 
of all possible scheduling policies, which is denoted by $\Lambda^*$. 
A scheduling policy is \emph{throughput-optimal}, if it can stabilize any arrival 
rate vector strictly inside $\Lambda^*$. 
For more discussions on the optimal throughput region $\Lambda^{*}$ in our 
multi-channel systems, please refer to \cite{ji13a}.

Next, we consider the steady-state probability that the largest packet delay 
in the system exceeds a certain fixed threshold, and its associated
\emph{rate-function} in the many-channel many-user asymptotic regime. 
\high{Assuming that the system is stationary and ergodic, let $W(0) \triangleq 
\max_{1 \le i \le n} W_i(0)$ denote the largest HOL delay over all the queues 
(i.e., the largest packet delay in the system) in the steady state,} and then 
we define rate-function $I(b)$ as the decay-rate of the probability that $W(0)$ 
exceeds any fixed integer threshold $b \ge 0$, as the system size $n$ goes to 
infinity, i.e.,
\begin{equation}
\label{eq:rf}
%\textstyle 
I(b) \triangleq \lim_{n \rightarrow \infty} \frac {-1} {n} \log 
\Prob (W(0)>b).
\end{equation}
Note that once we know this rate-function, we can then estimate the delay-violation
probability using $\Prob (W(0)>b) \approx \exp (-nI(b))$. The estimate tends to be 
more accurate as $n$ becomes larger. Clearly, for systems with a large $n$, a larger 
value of the rate-function implies a better delay performance, i.e., a smaller 
probability that the largest packet delay in the system exceeds a certain threshold. 
As in \cite{sharma11,sharma11b,ji13a}, 
we define the \emph{optimal rate-function} as the maximum achievable rate-function 
over all possible scheduling policies, which is denoted by $I^*(b)$. A scheduling 
policy is \emph{rate-function delay-optimal} if it achieves the optimal rate-function 
$I^*(b)$ for any fixed integer threshold $b \ge 0$.

% Let $\Lambda_{X}(\theta) \triangleq \log \Expect [e^{\theta X}]$ denote
% the cumulant-generating function of random variable $X$, where $\theta$
% is a positive real number. We define the quantities
% \[
% I_A(t,x) \triangleq \sup_{\theta > 0} \{ \theta(t+x) - \Lambda_{A_1(-t+1,0)}(\theta)\},
% \]
% and 
% \[
% I_A(x) \triangleq \inf_{t>0} I_A(t,x).
% \]
% Let $I^*(b)$ denote the optimal rate-function for delay threshold $b$.
% Then $I^*(b)$ can be represented as 
% \begin{equation}
% \label{eq:orf}
% I^*(b) = \left\{
% \begin{array}{ll}
% \min \{ (b+1)I_X, \min_{ 0 \le c \le b} \{ I_A(b-c) + c I_X \} \}, & \text{if}~ L>1,\\
% (b+1)I_X, & \text{if}~ L=1,
% \end{array}
% \right. 
% \end{equation}
% where $I_X = \log (\frac {1} {1-q})$. See \cite{sharma11} for the detailed discussion.

% Note that \emph{the rate-function optimality is studied in an asymptotic regime, 
% i.e., when $n$ goes to infinity. Even if the convergence rate of the rate-function 
% is fast, the throughput performance may be poor for small to moderate values 
% of $n$.} As a matter of fact, a rate-function delay-optimal policy may not 
% even be throughput-optimal for fixed $n$ (e.g., the DWM-$n$ policy that we 
% will propose in Section~\ref{sec:suff}). To this end, we are interested in 
% designing scheduling policies that maximize both the throughput (for any 
% fixed $n$) and the rate-function (in the many-queue many-server asymptotic 
% regime).

%%%%%%%%%%%%%%%%%%%%%%%%%%%%%%%%%%%%%%%%%%%%%%%%%%%%%%%%%%%%%
\section{An Upper Bound on The Rate-Function} \label{sec:ub}
%%%%%%%%%%%%%%%%%%%%%%%%%%%%%%%%%%%%%%%%%%%%%%%%%%%%%%%%%%%%%
In this section, we derive an upper bound of the rate-function for all 
scheduling policies.
% In this section, we derive an upper bound on the rate-function that
% can be achieved by any scheduling algorithm. 

Let $I_{AG}(t,x)$ denote the asymptotic decay-rate of the probability 
that in any interval of $t$ time-slots, the total number of arrivals 
is greater than $n(t+x)$, as $n$ tends to infinity, i.e.,  
% \begin{equation}
% \label{eq:IAGtx}
\[
%\textstyle
I_{AG}(t,x) \triangleq \liminf_{n \rightarrow \infty} \frac {-1}{n} 
\log \Prob(A(-t+1,0) > n(t+x)).
\]
% \end{equation}
Let $I_{AG}(x)$ be the infimum of $I_{AG}(t,x)$ over all $t>0$, i.e., 
% \begin{equation}
% \label{eq:IAGt}
\[
%\textstyle
I_{AG}(x) \triangleq \inf_{t>0} I_{AG}(t,x).
\]
% \end{equation}
Also, we define $I_X \triangleq \log \frac {1} {1-q}$.

\begin{theorem}
\label{thm:ub}
Given the system model described in Section~\ref{sec:model},
for any scheduling algorithm, we have
\[
\begin{split}
& \limsup_{n \rightarrow \infty} \frac {-1}{n} \log \Prob(W(0)>b) \\
& \le \min \{(b+1)I_X, \min_{0 \le c \le b} \{ I_{AG}(b-c) + c I_X \} \} \triangleq I_U(b).
\end{split}
\]
\end{theorem}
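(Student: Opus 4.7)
My plan is to decompose $I_U(b)$ into three families of sub-bounds, and for each exhibit a specific bad event whose probability furnishes the required lower bound on $\Prob(W(0) > b)$. Writing $\bar{I}(b) \triangleq \limsup_{n\to\infty} -\frac{1}{n}\log\Prob(W(0) > b)$, the three pieces I aim for are (i) $\bar{I}(b) \le (b+1) I_X$, (ii) $\bar{I}(b') \le I_{AG}(b')$ for every $b' \ge 0$, and (iii) the recursive inequality $\bar{I}(b) \le \bar{I}(b-c) + c I_X$ for every $c \in \{1,\dots,b\}$. Combining (ii) with (iii) yields $\bar{I}(b) \le I_{AG}(b-c) + c I_X$ for every $c \in \{0,\dots,b\}$, and together with (i) this produces $\bar{I}(b) \le I_U(b)$.

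For (i), I would consider the event that user~$1$ receives at least one arrival at slot $-b-1$ AND every channel to user~$1$ is OFF in each of the $b+1$ slots $-b-1, -b, \dots, -1$. Independence of arrivals from channels and the i.i.d.\ channel model give probability at least $c_0(1-q)^{n(b+1)}$, where $c_0 > 0$ is a constant depending only on the per-user arrival mean; under this event, the tagged packet is forced to persist in $Q_1$ until slot $0$ with delay $b+1 > b$. For (ii), for any $T > b'+1$ I would use the arrivals event $A(-T+1,-b'-1) > n(T-1)$: in those $T-b'-1$ slots at most $n(T-b'-1)$ services are possible, so strictly more than $nb'$ packets with arrival time $\le -b'-1$ (``old'' packets) remain in the system at the start of slot $-b'$; at most $nb'$ of them can be served during $[-b',-1]$, leaving at least one old packet alive at slot $0$ with delay $\ge b'+1 > b'$. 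By stationarity the probability of this event matches the one defining $I_{AG}(T-b'-1,b')$, and the infimum over $T$ gives (ii).

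For (iii), fix $c \in \{1,\dots,b\}$ and choose $i^* \in \arg\max_{1 \le i \le n} \Prob(W_i(0) > b-c)$ (a deterministic user depending only on the model parameters and the policy). Consider the event $\{W_{i^*}(-c) > b-c\} \cap \{C_{i^*,j}(\tau) = 0 \text{ for all } j \text{ and all } \tau \in [-c,-1]\}$. Under it, the HOL packet of $Q_{i^*}$ at slot $-c$ has arrival time $\le -b-1$; since user $i^*$ has no connected channel during $[-c,-1]$, $Q_{i^*}$ receives no service in those slots, and by FIFO this packet is still HOL of $Q_{i^*}$ at slot $0$ with delay $\ge b+1 > b$, so $W(0) > b$. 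The first factor is measurable with respect to the history through slot $-c$ and hence independent of the i.i.d.\ channel variables in $[-c,-1]$; stationarity gives $\Prob(W_{i^*}(-c) > b-c) = \Prob(W_{i^*}(0) > b-c)$, and the definition of $i^*$ together with the union bound yields $\Prob(W_{i^*}(0) > b-c) \ge \Prob(W(0) > b-c)/n$. Multiplying then gives $\Prob(W(0) > b) \ge n^{-1}(1-q)^{nc}\Prob(W(0) > b-c)$; taking $-\frac{1}{n}\log$ of both sides and then $\limsup_{n\to\infty}$ absorbs the $(\log n)/n$ slack and yields (iii).

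The main obstacle I anticipate lies in step (iii): I need to verify carefully that the $(\log n)/n$ loss from the union bound is indeed absorbed in the $\limsup$, and that the independence between the pre-$(-c)$ history event and the post-$(-c)$ channel randomness is legitimately invoked for an arbitrary (possibly asymmetric) scheduling policy, where the identity of the worst user $i^*$ may depend on $n$ and on the policy. A lesser subtlety is that the paper defines $I_{AG}$ via $\liminf$, so step (ii) implicitly relies on the corresponding limit existing, which is guaranteed by the arrival LDP implied by Assumption~\ref{ass:arr_ld}.
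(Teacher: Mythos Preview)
Your proposal is correct and arrives at the same bound, but the organization differs from the paper's. The paper handles the term $I_{AG}(b-c)+cI_X$ for each $c$ in a single shot: it takes the arrivals event $\{A(-t-b,-b-1)>n(t+b-c)\}$, argues that this forces at least one ``old'' packet to survive in some \emph{random} queue $Q_I$ at the start of slot $-c$, and intersects with the event that this very $Q_I$ is disconnected from all servers during $[-c,-1]$. Because the channel variables in $[-c,-1]$ are i.i.d.\ and independent of the history, the conditional disconnection probability is exactly $(1-q)^{nc}$ regardless of which queue $I$ happens to be, so no union bound over users is needed. You instead decouple the argument into the pure-arrivals base case (ii) and the recursive inequality (iii), and in (iii) you fix a deterministic worst user $i^*$, paying a factor $n$ from the union bound that is then absorbed by the $(\log n)/n$ term in the rate function. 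Your decomposition buys a clean modular statement---the recursion $\bar I(b)\le \bar I(b-c)+cI_X$ is a standalone fact that would immediately propagate any future improvement of (ii)---while the paper's direct construction is slightly sharper at the probability level and sidesteps the $(\log n)/n$ slack entirely. The $\liminf$/$\limsup$ subtlety you flag in (ii) is present verbatim in the paper's own proof; neither argument resolves it beyond the implicit assumption that the arrival-tail limit exists under Assumption~\ref{ass:arr_ld}.
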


Theorem~\ref{thm:ub} can be shown by considering two types of events that 
lead to the delay-violation event $\{W(0)>b\}$ no matter how packets are scheduled, 
and computing their probabilities and decay-rates. In the above expression of $I_U(b)$, 
the first term $(b+1)I_X$ is due to sluggish services, which corresponds to the event 
that a queue with at least one packet is disconnected from all of the $n$ servers for 
consecutive $b+1$ time-slots. The second term $\min_{0 \le c \le b} \{ I_{AG}(b-c) + c I_X \}$ 
is due to both bursty arrivals and sluggish services, where 
$I_{AG}(b-c)$ corresponds to the event that the arrivals are too bursty during the 
interval of $[-t-b, -b-1]$ such that at the beginning of time-slot $-c$ for $c \le b$, 
there exists at least one packet remaining in the system, say queue $Q_1$. Then, the 
term $c I_X$ corresponds to the event that the services are too sluggish such that 
queue $Q_1$ is disconnected from all of the $n$ servers for the following consecutive 
$c$ time-slots. Clearly, both of the above events will lead to the delay-violation 
event $\{W(0)>b\}$ under all scheduling policies. 
We provide the detailed proof in Appendix~\ref{app:thm:ub}.

\emph{Remark:} Theorem~\ref{thm:ub} implies that $I_U(b)$ is an upper bound 
on the rate-function that can be achieved by any scheduling policy. Hence, 
even for the optimal rate-function $I^*(b)$, we must have $I^*(b) \le I_U(b)$ 
for any fixed integer threshold $b \ge 0$. 

Note that our derived upper bound $I_U(b)$ is strictly positive in the cases of interest. 
For example, when $L=1$, it has been shown in \cite{ji13a} that 
the optimal rate-function is $I^*(b) = (b+1) \log \frac {1} {1-q}$, and thus $I_U(b) 
\ge I^*(b) > 0$ for all integer $b \ge 0$. This holds for general arrival processes 
under Assumptions~\ref{ass:arr_bound} and \ref{ass:arr_ld}, including two special 
cases of \emph{i.i.d.} Bernoulli arrivals and two-state Markov chain driven arrivals. 
% When $L > 1$, we can show that $I_U(b)$ is strictly positive for the special case of 
% \emph{i.i.d.} 0-$L$ arrivals with feasible arrival rates (please refer to our online 
% technical report \cite{ji13b}); 
When $L > 1$, in Appendix~\ref{app:posi_rf} we show that $I_U(b)$ is strictly positive 
for the special case of \emph{i.i.d.} 0-$L$ arrivals with feasible arrival rates;
further, in Section~\ref{sec:sim}, our simulation results (Fig.~\ref{fig:mc}) also 
demonstrate that the rate-function attained by D-SSG is strictly positive under 
two-state Markov chain driven arrivals.
\section{Delay-based Server-Side-Greedy (D-SSG)} \label{sec:dssg}
%%%%%%%%%%%%%%%%%%%%%%%%%%%%%%%%%%%%%%%%%%%%%%%%%%%%%%%%%%%%%%%%%%%%%%%%%
% We later develop another greedy 
% scheduling policy called \emph{Delay-based Queue-Side-Greedy (D-QSG)}, and show 
% that D-SSG is sample-path equivalent to D-QSG under certain tie-breaking rules. 
% D-QSG, in an iterative manner too, schedules the oldest packets in the system 
% one-by-one whenever possible. In this sense, D-QSG and 
% D-SSG can be viewed as 
% an approximation of the Oldest Packet First (OPF) policies\footnote{A 
% scheduling policy $\mathbf{P}$ is said to be an OPF policy if in any time-slot, 
% policy $\mathbf{P}$ can serve the $k$ oldest packets in that time-slot for the 
% largest possible value of $k \in \{1,2,\dots,n\}$.} \cite{ji13a} that have been 
% shown to achieve rate-function delay optimality. An important policy
% in the class of OPF policies is the DWM-$n$ policy \cite{ji13a}, which in each 
% time-slot maximizes the sum of the delay of the packets scheduled among the $n$ 
% oldest packets in the system. While DWM-$n$ is rate-function delay-optimal, and 
% has a lower complexity $O(n^{2.5} \log n)$ compared to $O(n^5)$ of DWM, DWM-$n$ 
% is not throughput-optimal in general. However, DWM-$n$ can be combined with certain
% throughput-optimal policies to develop two-stage hybrid policies that are both
% throughput-optimal and rate-function delay-optimal \cite{ji13a}. For example, 
% DWM-$n$ combined with the Delay-based MaxWeight Scheduling (D-MWS) policy 
% \cite{ji13a,andrews04,eryilmaz05} yields a $O(n^{2.5} \log n)$ complexity hybrid 
% policy, called the DWM-$n$-MWS policy. 
In \cite{ji13a}, it has been shown that a class of two-stage hybrid policies 
can achieve both throughput optimality and rate-function delay optimality at 
a lower complexity $O(n^{2.5} \log n)$ (compared to $O(n^5)$ of DWM). The hybrid 
policies are constructed by combining certain throughput-optimal policies with 
a rate-function delay-optimal policy DWM-$n$ (where $n$ is the number 
of users or channels), which in each time-slot maximizes the sum of the delay 
of the scheduled packets among the $n$ oldest packets in the system. For example, 
DWM-$n$ combined with the Delay-based MaxWeight Scheduling (D-MWS) policy 
\cite{ji13a,andrews04,eryilmaz05} yields a $O(n^{2.5} \log n)$ complexity hybrid 
policy, called the DWM-$n$-MWS policy. 

The above result leads to the following important questions: \emph{Is it possible 
to develop scheduling policies with an even lower complexity, while achieving 
comparable or only slightly worse performance? If so, how much complexity 
can we reduce, and how much performance do we need to sacrifice?} In this 
section, we answer these questions positively. We first develop a greedy 
scheduling policy called \emph{Delay-based Server-Side-Greedy (D-SSG)} with 
an even lower complexity $2n^2+2n$. Under D-SSG, each server iteratively chooses 
to serve a connected queue that has the largest HOL delay. Then, we show that 
D-SSG not only achieves throughput optimality, but also guarantees a near-optimal 
rate-function. Hence, D-SSG achieves a reduction in complexity (from $O(n^{2.5} 
\log n)$ of the hybrid policies to $2n^2+2n$) with a minimal drop in the delay 
performance. More importantly, the practical complexity of D-SSG is substantially 
lower than that of the hybrid policies.

%%%%%%%%%%%%%%%%%%%%%%%%%%%%%%%%%%
\subsection{Algorithm Description}
%%%%%%%%%%%%%%%%%%%%%%%%%%%%%%%%%%
Before we describe the detailed operations of D-SSG, we would like to remark 
on the D-MWS policy in our multi-channel system, due to the similarity between 
D-MWS and D-SSG. Under D-MWS, each server chooses to serve a queue that has 
the largest HOL delay (among all the queues connected to this server). Note 
that D-MWS is not only throughput-optimal, but also has a low complexity 
$O(n^2)$. However, in \cite{ji13a} it has been shown that D-MWS suffers from 
poor delay performance. (Specifically, D-MWS yields a rate-function of zero 
in certain scenarios, e.g., with \emph{i.i.d.} 0-1 arrivals). The reason 
is that under D-MWS, each server chooses to serve a connected queue that 
has the largest HOL delay without accounting for the decisions of the other 
servers. This way of allocating servers leads to an unbalanced schedule. 
That is, only a small fraction of the queues get served in each time-slot. 
This inefficiency leads to poor delay performance.

% D-MWS is not only throughput-optimal, but also has a low complexity 
% $O(n^2)$. However, we can show that D-MWS suffers from poor delay 
% performance. Specifically, in the following theorem, we show that 
% D-MWS yields a zero rate-function in certain scenarios.

% \begin{theorem}
% \label{thm:dmws-rf}
% Consider \emph{i.i.d.} Bernoulli arrivals, i.e., in each time-slot, 
% and for each user, there is a packet arrival with probability 
% $p$, and no arrivals otherwise. By allocating servers to queues 
% according to D-MWS, we have that
% \[
% \limsup_{n \rightarrow \infty} \frac {-1} {n} 
% \log \Prob \left( W(0)>b \right) = 0,
% \]
% for any fixed integer $b\ge0$.
% \end{theorem}

% The proof follows a similar line of argument as in the proof of Theorem~3 
% in \cite{bodas10}. We provide the proof in Appendix~\ref{app:thm:dmws-rf},
% and explain the intuition behind it as follows. 
% Under D-MWS, each server chooses to serve a connected queue that has the 
% largest HOL delay without accounting for the decisions of the other servers. 
% This way of allocating servers leads to an unbalanced schedule. That is, 
% in each time-slot, only a small fraction of the queues (at most $O(\log n)$ 
% out of $n$ queues) get served, while the number of queues that have arrivals 
% is much larger ($O(n)$). This inefficiency essentially leads to poor delay 
% performance. 

Now, we describe the operations of our proposed D-SSG policy. D-SSG is similar 
to D-MWS, in the sense that it also allocates each server to a connected queue 
that has the largest HOL delay. However, the key difference is that, instead 
of allocating the servers all at once as in D-MWS, D-SSG allocates the servers 
one-by-one, accounting for the scheduling decisions of the servers that are 
allocated earlier. We will show that \emph{this critical difference results in 
a substantial improvement in the delay performance.} 

We present some additional notations, and then specify the detailed operations 
of D-SSG. In each time-slot, there are $n$ rounds, and in each round, one of 
the remaining servers is allocated. Let $Q_i^k(t)$, $Z^k_{i,l}(t)$ and 
$W_i^k(t)=Z^k_{i,1}(t)$ (or $W_i^k(t)=0$ if $Q_i^k(t)=0$) denote the length of 
queue $Q_i$, the delay of the $l$-th packet of $Q_i$, and the HOL delay of $Q_i$ 
after $k \ge 1$ rounds of server allocation in time-slot $t$, respectively. In 
particular, we have $Q_i^0(t)=Q_i(t)$, $Z^0_{i,l}(t)=Z_{i,l}(t)$, and $W_i^0(t)=W_i(t)$. 
Let $\mS_j(t)$ denote the set of queues being connected to server $S_j$ in time-slot 
$t$, i.e., $\mS_j(t) = \{1 \le i \le n ~|~ C_{i,j}(t)=1 \}$. Let $\Gamma^k_j(t)$ 
denote the set of indices of the queues that are connected to server $S_j$ in 
time-slot $t$ and that have the largest HOL delay at the beginning of the $k$-th 
round in time-slot $t$, i.e., $\Gamma^k_j(t) \triangleq \{i \in \mS_j(t) ~|~ 
W_i^{k-1}(t) = \max_{l \in \mS_j(t)} W_l^{k-1}(t) \}$. Let $i(j,t)$ denote the 
index of queue that is served by server $S_j$ in time-slot $t$ under D-SSG. 
 
\noindent {\bf Delay-based Server-Side-Greedy (D-SSG) policy:}
In each time-slot $t$,

\begin{enumerate}
\item Initialize $k=1$.
\item In the $k$-th round, allocate server $S_k$ to serve queue $Q_{i(k,t)}$,
where $i(k,t) = \min \{ i ~|~ i \in  \Gamma^k_k(t)\}$. That is, in the $k$-th 
round, the $k$-th server $S_k$ is allocated to serve the connected queue that 
has the largest HOL delay, breaking ties by picking the queue with the smallest 
index if there are multiple such queues. Then, update the length of $Q_{i(k,t)}$ 
to account for service, i.e., set $Q_{i(k,t)}^k(t) = \left(Q_{i(k,t)}^{k-1}(t) 
- C_{i(k,t),k}(t) \right)^+$ and $Q_i^k(t) = Q_i^{k-1}(t)$ for all $i \neq i(k,t)$. 
Also, update the HOL delay of $Q_{i(k,t)}$ to account for service, i.e., set 
$W_{i(k,t)}^k(t) = Z^k_{i(k,t),1}(t) = Z^{k-1}_{i(k,t),2}(t)$ if $Q_{i(k,t)}^k(t)>0$, 
and $W_{i(k,t)}^k(t)=0$ otherwise, and set $W_i^k(t) = W_i^{k-1}(t)$ for all 
$i \neq i(k,t)$.

\item Stop if $k$ equals $n$. Otherwise, increase $k$ by 1 and repeat step 2.
\end{enumerate} 

\emph{Remark:} 
% Note that both D-SSG and D-QSG aim to allocate each server to a queue with the 
% largest HOL delay. The key difference between D-SSG and D-QSG is that D-SSG 
% iterates over the servers first while D-QSG iterates over the packets/queues 
% first. This key difference leads to the fact that D-SSG is simpler to implement 
% and has a \emph{lower complexity $O(n^2)$}. 
From the above operations, it can be observed that in each round, D-SSG aims 
to allocate the available server with the smallest index. Further, when there 
are multiple queues that are connected to the considered server and that have 
the largest HOL delay, D-SSG favors the queue with the smallest index. We specify 
such tie-breaking rules for ease of analysis only. In practice, we can break ties 
arbitrarily.

We highlight that D-SSG has a low complexity of $2n^2+2n$ due to the following 
operations. Assume that each packet contains the information of its arriving 
time. At the beginning of each time-slot, it requires $n$ addition operations 
to update the HOL delay of each of the $n$ queues (i.e., increasing it by one). 
In each round $k$, it takes $n$ time to check the connectivity between server 
$S_k$ and the $n$ queues, another up to $n$ time to find the connected queue 
with the largest HOL delay, and one more basic operation to update the HOL delay 
of the queue chosen by server $S_k$. Since there are $n$ rounds, the overall 
complexity is $n+n(n+n+1)=2n^2+2n$.

Note that the queue-length-based counterpart of D-SSG, called Q-SSG, has been 
studied in \cite{bodas10,bodas11a}. Under Q-SSG, each server iteratively chooses 
to serve a connected queue that has the largest length. It has been shown that 
Q-SSG not only achieves throughput optimality, but also guarantees a \emph{positive 
(queue-length)} rate-function. However, their results have the following limitations: 
1) a positive rate-function may not be good enough, since the gap between the 
guaranteed rate-function and the optimal is unclear; 2) good queue-length performance 
does not necessarily translate into good delay performance; 3) their analysis 
was only carried out for restricted arrival processes that are \emph{not only 
i.i.d. across users, but also in time}. In contrast, in this section we will show 
that D-SSG achieves a rate-function that is \emph{not only positive but also 
near-optimal} (in the sense of (\ref{eq:norf})) for more general arrival processes, 
while guaranteeing throughput optimality.

%%%%%%%%%%%%%%%%%%%%%%%%%%%%%%%%%%%%%%%%%%%%%%%%%%%%
\subsection{Throughput Optimality} \label{subsec:to}
%%%%%%%%%%%%%%%%%%%%%%%%%%%%%%%%%%%%%%%%%%%%%%%%%%%%
We first establish throughput optimality of D-SSG in general non-asymptotic
settings with any fixed value of $n$. Note that in Section~\ref{subsec:nodp}, 
we will analyze the delay performance of D-SSG in the asymptotic regime, 
where $n$ goes to infinity. Hence, even if the convergence rate of the delay 
rate-function is fast (as is typically the case), the throughput performance 
may still be poor for small to moderate values of $n$. As a matter of fact, 
for a fixed $n$, a rate-function delay-optimal policy (e.g., DWM-$n$) may not 
even be throughput-optimal \cite{ji13a}. To this end, we first focus on studying 
the throughput performance of D-SSG in general non-asymptotic settings. 

We remark that the throughput performance of scheduling policies have been
extensively studied in various settings, including the multi-channel systems
that we consider in this paper. Specifically, for such multi-channel systems, 
\cite{ji13a} proposed a class of Maximum Weight in the Fluid limit (MWF) 
policies and proved throughput-optimality of the MWF policies in very general 
settings (under Assumption~\ref{ass:arr_slln}). The key insight is that to 
achieve throughput-optimality in such multi-channel systems, it is sufficient 
for each server to choose a connected queue with a large enough weight (i.e.,
queue-length or delay) such that this queue has the largest weight in the fluid 
limit \cite{dai95}. 
%A formal definition of the MWF policies is restated in Appendix~\ref{app:thm:dssg-to}.

Next, we prove that D-SSG is throughput-optimal in general non-asymptotic 
settings (for a system with any fixed value of $n$) by showing that D-SSG 
is an MWF policy.

\begin{theorem}
\label{thm:dssg-to}
D-SSG policy is throughput-optimal under Assumption~\ref{ass:arr_slln}.
\end{theorem}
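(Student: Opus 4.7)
The plan is to invoke the Maximum Weight in the Fluid limit (MWF) framework established in [ji13a], where it is shown that a scheduling policy is throughput-optimal provided that, in the fluid limit, each server allocates its service to a connected queue whose weight (here, HOL delay) achieves the maximum weight among its connected queues. Thus my task reduces to verifying that D-SSG meets this per-server sufficient condition.

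First, I would set up the fluid-limit rescaling following the approach of [dai95]: for any sequence of initial states with $\|x^r\|\to\infty$, define $\bar W_i^r(t)\triangleq W_i(\lfloor rt\rfloor)/r$ and $\bar Q_i^r(t)\triangleq Q_i(\lfloor rt\rfloor)/r$, then extract a uniformly convergent subsequence using Arzel\`a--Ascoli. Under Assumption~\ref{ass:arr_slln} and the \emph{i.i.d.}\ ON-OFF channel model, the arrival and connectivity processes satisfy the SLLN, so any subsequential limit $(w_i(t),q_i(t))$ is Lipschitz and satisfies deterministic fluid equations almost everywhere. The fluid service rate supplied by server $S_j$ concentrates around $q$ times the fraction of time it picks any given queue.

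Second, I would verify the MWF condition round-by-round. At the beginning of round $k$ in time-slot $t$, server $S_k$ selects a queue maximizing the weight $W_i^{k-1}(t)$ among connected queues. The only queues whose HOL delays can differ between $W_i^{k-1}(t)$ and $W_i^{0}(t)$ are those served in the preceding $k-1\le n-1$ rounds; each such discrepancy affects a single queue, and the total number of altered queues per time-slot is at most $n$. Under the fluid scaling, the unaltered weights $W_i^{0}(t)/r$ converge to $w_i(t)$; for the altered queues, even a complete collapse of the HOL delay contributes a per-time-slot perturbation of at most $O(1)$ in the unscaled system, hence $O(1/r)$ in the rescaled one, which vanishes as $r\to\infty$. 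Therefore, in the fluid limit, the weight of the queue selected by $S_k$ coincides with $\max_{i\in\mS_k(t)} w_i(t)$, which is precisely the MWF condition.

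The main obstacle is handling potential ties in the fluid limit, i.e., instants at which several connected queues share the maximum fluid HOL delay. In such cases, D-SSG's discrete tie-breaking rule is well-defined and the MWF condition still holds trivially because any selected queue is a maximizer; as usual in such fluid analyses, the set of times at which ties must be resolved in a specific way has Lebesgue measure zero and does not affect the differential inclusion satisfied by the fluid limit. Once the MWF condition is verified, the throughput optimality of D-SSG follows directly from the MWF theorem of [ji13a], completing the proof.
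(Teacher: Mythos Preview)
Your plan and the paper's both invoke the MWF framework of [ji13a], but the routes differ. The paper never reopens the fluid limit: the MWF class in [ji13a] is defined by a \emph{finite-system} condition---there must exist a constant $M$ such that for every server $S_j$ and every $r\in\Gamma_j(t)$ with $Q_r(t)\ge M$, the selected queue satisfies $W_{i(j,t)}(t)\ge Z_{r,M}(t)$. The paper simply takes $M=n$ and checks the chain
\[
W_{i(j,t)}(t)\ \ge\ W^{\,j-1}_{i(j,t)}(t)\ \ge\ W^{\,j-1}_{r}(t)\ \ge\ Z_{r,n}(t),
\]
where the first inequality holds because serving packets can only decrease the HOL delay, the second is the greedy rule in round $j$, and the third holds because at most $j-1\le n-1$ packets have been removed from $Q_r$ before round $j$. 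Throughput optimality then follows immediately from the MWF theorem, with all fluid-limit machinery already encapsulated there.

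Your approach instead re-derives the fluid behavior from scratch. That can be made to work, but it is more laborious, and one step as written is imprecise: the claim that ``a complete collapse of the HOL delay contributes a per-time-slot perturbation of at most $O(1)$'' is not self-evident, since serving even one packet can drop the HOL delay by an arbitrary amount if interarrival gaps are large. What actually controls the perturbation is the same observation the paper uses---at most $n$ packets are removed from any queue within a slot, so $W_r^{k-1}(t)\ge Z_{r,n}(t)$---and only then does the difference $Z_{r,1}(t)-Z_{r,n}(t)$ scale as $O(1)$ (hence $O(1/r)$ after rescaling) on fluid paths with positive arrival rate. Once you supply that bound, your argument goes through; but note that it is precisely the finite-system MWF inequality, so you may as well verify the MWF condition directly and skip the fluid-limit reconstruction.
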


We provide the detailed proof in Appendix~\ref{app:thm:dssg-to}.
% The proof of Theorem~\ref{thm:dssg-to} is straightforward. Hence, we omit 
% the proof and provide it in our online technical report \cite{ji13b}.

%%%%%%%%%%%%%%%%%%%%%%%%%%%%%%%%%%%%%%%%%%%%%%%%%%%%%%%%%%%%%%%%%%%%%%%%%%
\subsection{Near-optimal Asymptotic Delay Performance} \label{subsec:nodp}
%%%%%%%%%%%%%%%%%%%%%%%%%%%%%%%%%%%%%%%%%%%%%%%%%%%%%%%%%%%%%%%%%%%%%%%%%%
In this subsection, we present our main result on the near-optimal rate-function. 
We first define near-optimal rate-function, and then evaluate the delay performance 
of D-SSG. 

A policy $\mathbf{P}$ is said to achieve \emph{near-optimal rate-function} 
if the delay rate-function $I(b)$ attained by policy $\mathbf{P}$ for any 
fixed integer threshold $b>0$, is no smaller than $I^*(b-1)$, the optimal 
rate-function for threshold $b-1$. That is,
\begin{equation}
\label{eq:norf}
%\textstyle
I(b)=\liminf_{n \rightarrow \infty} \frac {-1} {n} \log \Prob \left( W(0) > b \right) 
\ge I^*(b-1).
\end{equation}

We next present our main result of this paper in the following theorem, 
which states that D-SSG achieves a near-optimal rate-function.
 
\begin{theorem}
\label{thm:dssg-lb}
Under Assumptions~\ref{ass:arr_bound} and \ref{ass:arr_ld}, D-SSG
achieves a near-optimal rate-function, as given in Eq. (\ref{eq:norf}).
\end{theorem}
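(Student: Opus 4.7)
The plan is to prove Theorem~\ref{thm:dssg-lb} via the two-step route the authors foreshadow: introduce an auxiliary \emph{Greedy Frame-Based Scheduling (G-FBS)} policy as a technical device, prove that G-FBS already attains the near-optimal rate-function bound
\[
I^{\text{G-FBS}}(b) \;\ge\; I^{*}(b-1),
\]
and then show that D-SSG sample-path dominates G-FBS, so that the delay-violation probability under D-SSG is no larger than under G-FBS. Chaining these two facts yields the desired $I^{\text{D-SSG}}(b) \ge I^{*}(b-1)$.

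First I would design G-FBS to mimic D-SSG's iterative server-side greedy rule, but with a one-slot frame structure in which the HOL delays used for scheduling are frozen at values from the \emph{previous} slot (rather than the freshly updated values exploited by D-SSG between rounds). This use of stale information is exactly what costs G-FBS one unit of delay in the rate-function and, simultaneously, is what decouples the scheduling dynamics across frames so that a clean large-deviations analysis becomes possible.

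Second, I would bound the rate-function of G-FBS by adapting the rate-function analysis of DWM-$n$-type policies from~\cite{sharma11b,ji13a} to the frame boundary. Because within each frame the allocation is greedy using frozen HOL delays, the event $\{W(0) > b\}$ under G-FBS requires either (i) a queue that already contained a packet of age $\ge b-1-c$ at some frame boundary being disconnected from every server in each of the subsequent $c$ frames, contributing a $c\,I_X$ rate term, or (ii) a burst of arrivals over a window of length $t$ producing more packets than the $n$ servers can drain even greedily, contributing the $I_{AG}(b-1-c)$ rate term. A union bound over $c \in \{0,1,\dots,b-1\}$ combined with Assumptions~\ref{ass:arr_bound} and~\ref{ass:arr_ld} then yields $I^{\text{G-FBS}}(b) \ge I_U(b-1) \ge I^{*}(b-1)$, using the upper-bound statement of Theorem~\ref{thm:ub}.

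Third, I would couple D-SSG and G-FBS on a common realization of the arrivals $A_i(t)$ and connectivities $C_{i,j}(t)$ and prove by induction on $t$ that the sorted vector of HOL delays produced by D-SSG is dominated component-wise by that produced by G-FBS. The inductive step is a round-by-round exchange argument inside each slot: if in some round D-SSG and G-FBS allocate their current server to different queues, the G-FBS schedule can be modified by swapping or reassigning service so that the post-round HOL delay vector under D-SSG remains dominated by that under G-FBS. The structural property being exploited is that D-SSG refreshes the HOL delays after each server allocation and therefore always acts on the true residual oldest-packet state, whereas G-FBS is constrained by its frame-start snapshot; intuitively, every round D-SSG can only do as well or better at reducing the longest remaining HOL delay.

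The main obstacle will be formalizing the sample-path dominance in the third step. The two policies can diverge in every round of every slot; tie-breaking, the possibility that a single queue is served by multiple servers under D-SSG once earlier servers have drained its HOL packet, and the interaction between fresh arrivals and the frame boundary all must be handled so that the dominance is preserved at every intermediate round rather than only at slot endpoints. A secondary difficulty is in the second step: pushing the large-deviations bound to exactly $I^{*}(b-1)$ (rather than a weaker positive bound) requires choosing the G-FBS frame structure loose enough to allow the decoupling used in the union bound, yet tight enough that the policy loses only one unit of delay relative to the optimum.
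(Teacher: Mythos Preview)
Your high-level two-step architecture---introduce an auxiliary G-FBS policy with a near-optimal rate-function, then show D-SSG sample-path dominates G-FBS---matches the paper. However, the content of both steps diverges from the paper in a way that constitutes a real gap.

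The paper's G-FBS is not a ``frozen HOL delays'' variant of D-SSG. It is a genuine frame-based scheduler in the spirit of the FBS policy of~\cite{sharma11b}: packets are packed FCFS into frames of capacity $n_0=n-H\sqrt{n}$, an additional ``leftover'' L-frame of capacity $H\sqrt{n}$ carries unserved packets forward, and in each slot G-FBS runs D-SSG restricted to the $n$ packets of the resulting S-frame. The one-slot shift in the rate-function does \emph{not} arise from stale information; it arises because a ``success'' (serving at least $n_0$ packets from the S-frame, including the oldest $f(n)$) may still leave up to $H\sqrt{n}$ packets behind, so completely clearing $x$ frames may require $x+1$ successes. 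Your proposed mechanism---freezing HOL delays at the previous slot---does not obviously produce exactly a one-slot penalty, nor does it decouple the dynamics in any way that makes the large-deviations analysis tractable; you have not argued why it would.

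The central technical ingredient you are missing is what the paper states as Lemma~\ref{lem:success}: when D-SSG is applied to any set of $n$ packets with at most $2H$ per queue, then with probability at least $1-2(1-q)^{n-f(n)\log^2 n}$ it schedules at least $n-H\sqrt{n}$ of them, \emph{including the oldest $f(n)$}. Proving this requires introducing a queue-side greedy policy D-QSG and establishing a sample-path equivalence between D-SSG and D-QSG under the specified tie-breaking rule (Lemma~\ref{lem:equivalent}); the ``oldest $f(n)$'' part is then argued from the D-QSG side. Without this lemma there is no handle on the per-slot success probability of G-FBS, and the frame-based rate-function calculation cannot start. Finally, the dominance the paper uses (Lemma~\ref{lem:dssg-dom}) is packet-wise---every packet served by G-FBS is served by D-SSG---and is proved by a short contradiction argument exploiting that G-FBS runs D-SSG on a subset of the packets; your proposed sorted-HOL-vector dominance with a round-by-round exchange is a harder statement than what is needed.
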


\high{
\emph{We prove Theorem~\ref{thm:dssg-lb} by the following strategy:
1) motivated by a key property of D-SSG (Lemma~\ref{lem:success}), we 
propose the \emph{Greedy Frame-Based Scheduling (G-FBS)} policy, which 
is a variant of the FBS policy \cite{sharma11,sharma11b} that has been 
shown to be rate-function delay-optimal in some cases;
2) show that G-FBS achieves a near-optimal rate-function (Theorem~\ref{thm:g-fbs}); 
3) prove a dominance property of D-SSG over G-FBS. Specifically, in 
Lemma~\ref{lem:dssg-dom}, we show that for any given sample path, by 
the end of each time-slot, D-SSG has served every packet that G-FBS 
has served.} 
}
% \emph{Note that Theorem~\ref{thm:dqsg-lb} holds for D-SSG with any 
% tie-breaking rules, under which, when allocating a server to a queue,
% it does not account for the connectivity between this server and the 
% other queues.} The performance of D-SSG may be further improved, if 
% a better tie-breaking rule is applied.

We now present a crucial property of D-SSG in Lemma~\ref{lem:success}, 
which is the key to proving a near-optimal rate-function for D-SSG.

\high{
\begin{lemma}
\label{lem:success}
Consider a set of $n$ packets satisfying that no more than $2H$ packets 
are from the same queue, where $H>4$ is any integer constant independent
of $n$. Consider any strictly increasing function $f(n)$ such that $f(n) 
< \frac {n} {2}$ and $f(n) \in o(n/\log^2 n)$. Suppose that D-SSG is 
applied to schedule these $n$ packets. Then, there exists a finite integer 
$N_X>0$ such that for all $n \ge N_X$, with probability no smaller than 
$1-2(1-q)^{n-f(n)\log^2n}$, D-SSG schedules at least $n-H\sqrt{n}$ packets, 
including the oldest $f(n)$ packets among the $n$ packets.
\end{lemma}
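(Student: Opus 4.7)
\textbf{Proof plan for Lemma~\ref{lem:success}.}

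The plan is to analyze D-SSG round-by-round, exploiting the fact that each server $S_k$ greedily picks a connected queue of maximum current HOL delay, and to bound the bad event via a union bound over two sub-events: (E1) some packet in the oldest $f(n)$ set is unserved, and (E2) fewer than $n-H\sqrt{n}$ of the $n$ packets are served in total. The target is to bound each sub-event by $(1-q)^{n-f(n)\log^2 n}$, yielding the factor of $2$ in the claim.

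First I would extract a structural priority property of D-SSG. Enumerate the $n$ packets in decreasing order of delay, breaking ties by queue index, as $p_1,\ldots,p_n$; let $\mathcal{O}=\{p_1,\ldots,p_{f(n)}\}$ and let $\mathcal{Q}_0$ be the set of queues containing at least one packet of $\mathcal{O}$, with $m_q \le 2H$ denoting the number of $\mathcal{O}$-packets in $q\in\mathcal{Q}_0$. By FIFO order these $m_q$ packets sit at the front of $q$. The crucial observation is that whenever server $S_k$ is connected to some $q\in\mathcal{Q}_0$ whose current HOL still lies in $\mathcal{O}$, the D-SSG rule forces $S_k$ to serve some queue whose current HOL also lies in $\mathcal{O}$ (possibly a different one with strictly larger delay), because any such queue has HOL delay no smaller than any queue whose current HOL lies outside $\mathcal{O}$. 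Consequently, serving every packet of $\mathcal{O}$ reduces to each $q\in\mathcal{Q}_0$ being chosen at least $m_q$ times across the $n$ rounds.

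For E1, I would fix $q\in\mathcal{Q}_0$ and bound the number of ``stolen'' rounds for $q$, i.e., rounds in which $S_k$ is connected to $q$ but the greedy rule diverts $S_k$ to a queue of strictly larger current HOL. Since only queues in $\mathcal{Q}_0$ can ever hold this priority over $q$, and each such queue absorbs at most $2H$ rounds before its $\mathcal{O}$-packets are exhausted, the total stealing budget is at most $2H|\mathcal{Q}_0| \le 2Hf(n)$. Hence $q$ is chosen $m_q$ times whenever $q$ is connected to at least $2Hf(n)+2H$ servers. A Chernoff tail bound on the $\mathrm{Bin}(n,q)$-distributed number of servers connected to $q$ gives $\Prob[q\text{ has fewer than }f(n)\log^2 n\text{ connections}] \le (1-q)^{n-f(n)\log^2 n}$ up to a polynomial prefactor, and a union bound over $|\mathcal{Q}_0|\le f(n)$ together with $f(n)\in o(n/\log^2 n)$ absorbs this prefactor into the exponent. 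For E2, a complementary counting-plus-concentration argument bounds the number of unserved packets: since each queue contributes at most $2H$ packets, the aggregate deficit under D-SSG concentrates sharply, and the hypothesis $H>4$ furnishes enough slack in the $H\sqrt{n}$ tolerance to secure the same exponential bound.

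The main obstacle, and the most delicate step, will be the sample-path accounting for ``stolen'' rounds in E1: one must verify that the $2Hf(n)$ budget holds for every realization while the priority order among queues in $\mathcal{Q}_0$ evolves across rounds, ruling out any cascade effect that could inflate the count. I expect this combinatorial bookkeeping, together with tuning the Chernoff exponent so that the slack $f(n)\log^2 n$ dominates all polynomial prefactors from union bounds, to constitute the bulk of the technical work and to pin down the required threshold $N_X$.
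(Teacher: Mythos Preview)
Your two-event decomposition is exactly the paper's: Lemma~\ref{lem:first} handles (E1) and Lemma~\ref{lem:atleast} handles (E2), each contributing at most $(1-q)^{n-f(n)\log^2 n}$ (in fact (E2) gives the sharper $(1-q)^n$).

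For (E1), your direct stealing-budget argument is sound and the cascade worry you flag is unfounded. Once you have established that a server connected to $q$ while $q$'s HOL lies in $\mathcal{O}$ must serve some packet of $\mathcal{O}$, the steals from $q$ are bounded by $|\mathcal{O}\setminus\{q\text{'s packets}\}|=f(n)-m_q$, a fixed budget that cannot inflate dynamically; so connectivity to $\ge f(n)$ servers already suffices (your $2Hf(n)+2H$ is looser but harmless). The paper, however, reaches the same conclusion by a different and cleaner device: it first proves a sample-path equivalence between D-SSG and a queue-side greedy policy D-QSG (Lemma~\ref{lem:equivalent}) that explicitly schedules packets oldest-first. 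For D-QSG the implication ``each queue in $\mathcal{Q}_0$ connected to $\ge f(n)$ servers $\Rightarrow$ all of $\mathcal{O}$ served'' is a one-line pigeonhole, with no tie-break bookkeeping. Your route trades that auxiliary equivalence lemma for the priority analysis you anticipate; both work, but the paper's path explains why the obstacle you highlight never materializes.

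For (E2), ``counting-plus-concentration'' is not yet a plan, and a generic bounded-differences inequality on the deficit is unlikely to produce the required exponent. The paper does not use concentration; it takes a union bound over all $\binom{n}{H\sqrt{n}}\le n^{H\sqrt{n}}$ choices of a set $\Xi=\{S_{r_1}<\cdots<S_{r_{H\sqrt{n}}}\}$ of idle servers. For fixed $\Xi$, at round $r_j$ at least $n-r_j+1$ packets remain, hence at least $(n-r_j+1)/(2H)$ queues are nonempty, so conditionally on the past $\Prob(S_{r_j}\text{ idle})\le(1-q)^{(n-r_j+1)/(2H)}$; telescoping over $j$ gives at most $(1-q)^{(H/4)n+\sqrt{n}/4}$. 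The hypothesis $H>4$ enters precisely here: it makes $(H/4-1)n$ dominate $H\sqrt{n}\log n$, so the prefactor $n^{H\sqrt{n}}$ is absorbed and $(1-q)^n$ follows. You should replace the vague concentration step with this argument or something equivalent.
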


To prove Lemma~\ref{lem:success} and thus near-optimal rate-function of 
D-SSG (Theorem~\ref{thm:dssg-lb}), we introduce another greedy scheduling 
policy called \emph{Delay-based Queue-Side-Greedy (D-QSG)} and a sample-path 
equivalence property between D-QSG and D-SSG (Lemma~\ref{lem:equivalent}). 
Please refer to Appendix~\ref{app:sec:dqsg} for details.
}

We provide the proof of Lemma~\ref{lem:success} in Appendix~\ref{app:lem:success},
and explain the importance of Lemma~\ref{lem:success} as follows. 
% We provide the proof of Lemma~\ref{lem:success} in our online technical report \cite{ji12},
% and explain the importance of Lemma~\ref{lem:success} as follows. 
We first recall how DWM is shown to be rate-function delay-optimal 
(for some cases) in \cite{sharma11,sharma11b}. Specifically, the 
authors of \cite{sharma11,sharma11b} compare DWM with another policy 
FBS. In FBS, packets are filled into frames with size $n-H$ in a 
First-Come First-Serve (FCFS) manner such that no two packets in the 
same frame have a delay difference larger than $h$ time-slots, where 
$h>0$ is a suitably chosen constant independent of $n$ and $H=Lh$. 
The FBS policy attempts to serve the entire HOL frame whenever possible. 
The authors of \cite{sharma11,sharma11b} first establish the rate-function 
optimality of the FBS policy. Then, by showing that DWM dominates FBS 
(i.e., DWM will serve the same packets in the entire HOL frame whenever 
possible), the delay optimality of DWM then follows.

However, this comparison approach will not work directly for D-SSG. 
In order to serve all packets in a frame whenever possible, one would 
need certain back-tracking (or rematching) operations as in a typical 
maximum-weight matching algorithm like DWM. For a simple greedy algorithm 
like D-SSG that does not do back-tracking, it is unlikely to attain the 
same probability of serving the entire frame. In fact, even if we reduce 
the maximum frame size to $n-H\sqrt{n}$, we are still unable to show that 
D-SSG can serve the entire frame with a sufficiently high probability. 
Thus, we cannot compare D-SSG with FBS as in \cite{sharma11,sharma11b}. 

Fortunately, Lemma~\ref{lem:success} provides an alternate avenue. 
Specifically, for a set of $n$ packets, even though D-SSG may not 
serve any \emph{given} subset of $n-H\sqrt{n}$ packets with a 
sufficiently high probability, it will serve \emph{some} subset of 
$n-H\sqrt{n}$ packets with a sufficiently high probability. Further, 
this subset must contain the oldest $H\sqrt{n}$ packets for a large
$n$, if we choose $f(n)$ in Lemma~\ref{lem:success} such that $f(n) 
> H\sqrt{n}$ for large $n$. Note that D-SSG still leaves (at most) 
$H\sqrt{n}$ packets to the next time-slot. If we can ensure that in 
the next time-slot, D-SSG serves all of these $H\sqrt{n}$ leftover 
packets, we would then at worst suffer an additional one-time-slot 
delay. Indeed, Lemma~\ref{lem:success} guarantees this with high 
probability. Intuitively, we would then be able to show that D-SSG 
attains a near-optimal delay rate-function as given in Eq. (\ref{eq:norf}). 

To make this argument rigorous, we next compare D-SSG with a new policy 
called \textbf{Greedy Frame-Based Scheduling (G-FBS)}. Note that G-FBS 
is only for assisting our analysis, and will not be used as an actual 
scheduling algorithm. We first fix a properly chosen parameter $h>0$.
In the G-FBS policy, packets are grouped into frames satisfying the
following requirements: 1) No two packets in the same frame have a 
delay difference larger than $h$ time-slots. This guarantees that
in a frame, no more than $H=Lh$ packets from the same queue can be
filled into a single frame; 2) Each frame has a capacity of 
$n_0=n-H\sqrt{n}$ packets, i.e., at most $n_0$ packets can be filled 
into a frame; 3) As packets arrive to the system in each time-slot, 
the frames are created by filling the packets sequentially. Specifically, 
packets that arrive earlier are filled into the frame with a higher 
priority, and packets from queues with a smaller index are filled with 
a higher priority when multiple packets arrive in the same time-slot. 
Once any of the above requirements is violated, the current frame will 
be closed and a new frame will be open. We also assume that there is a 
``leftover" frame, called \emph{L-frame} for simplicity, with a capacity 
of $H\sqrt{n}$ packets. The L-frame is for storing the packets that were 
not served in the previous time-slot and were carried over to the current 
time-slot. At the beginning of each time-slot, we combine the HOL frame 
and the L-frame into a ``super" frame, called \emph{S-frame} for simplicity, 
with a capacity of $n$ packets. It is easy to see that in the S-frame, no 
more than $2H$ packets are from the same queue. Note that if there are less 
than $n$ packets in the S-frame, we can artificially add some dummy packets 
with a delay of zero at the end of the S-frame so that the S-frame is fully 
filled, but still need to guarantee that no more than $2H$ packets from the 
same queue can be filled into the S-frame. In each time-slot, G-FBS runs 
the D-SSG policy, but restricted to only the 
$n$ packets of the S-frame. We call it a \emph{success}, if D-SSG can 
schedule at least $n_0$ packets, including the oldest $f(n)$ packets, 
from the S-frame, where $f(n) < \frac {n}{2}$ is any function that 
satisfies that $f(n) \in o(n/\log^2 n)$ and $f(n) \in \omega(\sqrt{n})$. In each 
time-slot, if a success does not occur, then no packets will be served. 
When there is a success, the G-FBS policy serves all the packets that 
are scheduled by D-SSG restricted to the S-frame in that time-slot. 
\emph{Lemma~\ref{lem:success} implies that in each time-slot, a success 
occurs with probability at least $1-2(1-q)^{n-f(n) \log^2 n}$.} A success 
serves all packets from the S-frame, except for at most $H\sqrt{n}=n-n_0$ 
packets, and these served packets include the oldest $f(n)$ packets. The 
packets that are not served will be stored in the L-frame, and carried 
over to the next time-slot (except for the dummy packets, which will be 
discarded). 

\emph{Remark:} 
Although G-FBS is similar to FBS policy \cite{sharma11,sharma11b}, it exhibits 
a key difference from FBS. In the FBS policy, in each time-slot, either an entire 
frame (i.e., all the packets in the frame) will be completely served or none of 
its packets will be served. Hence, it does not allow packets to be carried over 
to the next time-slot. In contrast, G-FBS allows leftover packets and is thus 
more flexible in serving frames. This property is the key reason that we can use 
lower-complexity policies like D-SSG. On the other hand, it leads to a small gap 
between the rate-functions achieved by G-FBS and delay-optimal policies (e.g., 
DWM and the hybrid policies). Nonetheless, this gap can be well characterized. 
Specifically, in the G-FBS policy, an L-frame contains at most $H\sqrt{n}$ packets 
that are not served whenever there is a success. 
Further, these (at most) $H\sqrt{n}$ leftover packets will be among the oldest 
$f(n)$ packets (in the S-frame) in the next time-slot for large $n$, due to our 
choice of $f(n) \in \omega (\sqrt{n})$. Hence, another success will serve all the 
leftover packets. This implies that at most $x+1$ successes are needed to completely 
serve $x$ frames, for any finite integer $x>0$. \emph{In fact, this property is the 
key reason for a one-time-slot shift in the guaranteed rate-function by G-FBS, which 
leads to the near-optimal delay rate-function}, as we show in the following theorem.

% Although G-FBS is similar to the FBS policy \cite{sharma11}, it exhibits the 
% following key difference. In the FBS policy, either a whole frame (i.e., all 
% the packets in the frame) will be completely served or none of its packets 
% will be served. Hence, it does not allow packets to be carried over to the next 
% frame. In contrast, G-FBS allows leftover packets and is thus more flexible in 
% serving frames. This property is the key reason that we can use a lower-complexity 
% policy (like D-QSG), which, however, may result in a smaller rate-function than
% that is achieved by a delay-optimal policy (like FBS). 
% 2) The FBS policy requires that not only the total number of packets in the 
% frame be no greater than $n-H$, but also the number of packets from each single 
% queue be no greater than $H$, where $H$ is a constant independent of $n$. Instead, 
% G-FBS only requires that the total number of packets in the frame is no greater 
% than $n_0=n-H\sqrt{n}$, and is thus more flexible in creating frames. 
% Interested readers are referred to \cite{sharma11} for detailed description 
% and discussion on the FBS policy.

%In the following theorem, we show that the G-FBS policy has a near-optimal rate-function.
\begin{theorem}
\label{thm:g-fbs}
Under Assumptions~\ref{ass:arr_bound} and \ref{ass:arr_ld}, G-FBS policy 
achieves a near-optimal rate-function, as given in Eq. (\ref{eq:norf}).
% Under Assumptions~\ref{ass:arr_bound} and \ref{ass:arr_ld}, for any fixed delay 
% threshold $b>0$, the rate-function achieved by G-FBS, $I(b)$, is no smaller than 
% $I^*(b-1)$, i.e., the optimal rate-function for $b-1$. That is, 
% \[
% I(b)=\liminf_{n \rightarrow \infty} \frac {-1} {n} \log \Prob \left( W(0) > b \right) 
% \ge I^*(b-1),
% \]
% for any fixed integer $b > 0$. 
\end{theorem}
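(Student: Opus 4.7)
The plan is to establish the bound $I(b) \ge I_U(b-1) \ge I^*(b-1)$ by showing that $\Prob(W(0)>b)$ is dominated by a union of at most $b$ ``bad'' events, each of whose rate-function matches one of the terms in $I_U(b-1)$. I would proceed in three stages: identify the possible failure modes that can leave an offending packet unserved, bound each one's probability using Lemma~\ref{lem:success} and Assumption~\ref{ass:arr_ld}, and apply a union bound.

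First, observe that $W(0)>b$ forces the existence of a packet that arrived at some time $\tau \le -b-1$ and is still present at time $0$. Call a slot $t$ a \emph{success-slot} if the event of Lemma~\ref{lem:success} occurs in that slot, and a \emph{failure-slot} otherwise. By Lemma~\ref{lem:success} and the assumption $f(n) \in o(n/\log^2 n)$, the per-slot failure probability is at most $2(1-q)^{n(1-o(1))}$, so, using that the connectivity is i.i.d.\ across slots, the rate-function of the event ``at least $k$ failure-slots occur in the window $[-b,-1]$'' is at least $k I_X$.

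Second, I would exploit the ``$x+1$ successes serve $x$ frames'' property highlighted in the remark preceding the theorem. If exactly $k$ failure-slots occur in $[-b,-1]$, there are $b-k$ success-slots, which together clear at least $b-k-1$ complete frames, i.e., at least $(b-k-1)(n - H\sqrt{n})$ packets. Since $f(n) \in \omega(\sqrt{n})$, the at most $H\sqrt{n}$ leftover packets of any success are always strictly among the oldest $f(n)$ of the next S-frame and are therefore guaranteed to be served at the very next success. A careful bookkeeping then shows that the offending packet must be served unless, for some $t \ge 1$, the cumulative arrivals in a window of length $t$ ending at the offending packet's frame-formation time exceed $n(t+b-1-k)$. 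Denoting the ``service-failure'' event $\mathcal{E}_k = \{\text{at least } k \text{ failure-slots in } [-b,-1]\}$ and the ``bursty-arrival'' event $\mathcal{B}_{t,k} = \{A(-t-b+1,-b) > n(t+b-1-k)\}$, this yields the inclusion $\{W(0)>b\} \subseteq \mathcal{E}_b \cup \bigcup_{k=0}^{b-1}\bigcup_{t \ge 1} (\mathcal{E}_k \cap \mathcal{B}_{t,k})$.

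Third, by a union bound, combined with Assumption~\ref{ass:arr_ld} (which gives $\Prob(\mathcal{B}_{t,k})$ a rate-function of at least $I_{AG}(t, b-1-k)$ asymptotically) and the per-slot failure bound above, I obtain
\[
I(b) \ge \min\Bigl\{ b I_X,\ \min_{0 \le k \le b-1} \{ k I_X + I_{AG}(b-1-k)\}\Bigr\} = I_U(b-1) \ge I^*(b-1),
\]
where the last step invokes Theorem~\ref{thm:ub} at threshold $b-1$. The main obstacle will be the bookkeeping in the second stage: I must verify rigorously that whenever $k$ failure-slots occur in $[-b,-1]$ and every bursty-arrival event $\mathcal{B}_{t,k}$ fails, G-FBS genuinely serves the offending packet by time $0$. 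This requires tracking how the L-frame carries leftovers across slots, identifying the right ``frame-formation time'' for the offending packet, and invoking $f(n) \in \omega(\sqrt{n})$ to guarantee reabsorption. This L-frame mechanism is precisely what produces the one-slot shift from $I_U(b)$ to $I_U(b-1)$ and distinguishes G-FBS's guarantee from the fully delay-optimal guarantee of FBS and DWM.
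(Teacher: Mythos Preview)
Your high-level strategy is sound and matches the paper's: bound $\Prob(W(0)>b)$ by a union of bursty-arrival and service-failure events, then identify the dominant term with $I_U(b-1)$. But the specific decomposition you propose has a real gap, not merely a bookkeeping obstacle: you count failure-slots only in $[-b,-1]$, whereas failures that occur \emph{before} time $-b$ also delay the offending packet and are not captured by any of your events $\mathcal{E}_k$ or $\mathcal{B}_{t,k}$.

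Concretely, your claimed inclusion $\{W(0)>b\}\subseteq\mathcal{E}_b\cup\bigcup_{k,t}(\mathcal{E}_k\cap\mathcal{B}_{t,k})$ is false. Take $b=2$, let the frame queue be empty at time $-6$, let one full frame (about $n_0$ arrivals) arrive in each of slots $-5,-4,-3$ with all three being failure-slots, and let $-2,-1$ be success-slots with no new arrivals. Two successes clear two frames, so the frame from time $-3$ is still present at time $0$ and $W(0)=3>2$. Yet there are zero failures in $[-2,-1]$ (so $\mathcal{E}_1,\mathcal{E}_2$ fail), and $A(-t-1,-2)\le 3n_0<(t+1)n$ for every $t\ge 1$ (so every $\mathcal{B}_{t,0}$ fails). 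The paper avoids this by conditioning on $L(-b)$, the last time before $-b$ at which the frame queue was empty, and then counting successes over the \emph{entire} interval $[L(-b)+1,-1]$; the number of failures that enters the bound is the number in this full interval, which is what makes the pairing with $I_{AG}(b-1-c)+cI_X$ come out correctly.

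There is a second, more technical, issue you will hit even after repairing the decomposition: the union over $t\ge 1$ is infinite, and the large-deviations estimate $\Prob(\mathcal{B}_{t,k})\le e^{-n(I_{AG}(t,b-1-k)-\epsilon)}$ holds only for $n$ large enough \emph{depending on $t$}, so a naive union bound does not sum. The paper handles this by splitting at a finite $t^*$: for $t\le t^*$ it uses the frame-counting decomposition term by term, and for $t>t^*$ it uses a separate uniform concentration argument (directly from Assumption~\ref{ass:arr_ld} together with Lemma~\ref{lem:success}) to make the tail sum $O(e^{-nI^\epsilon_0})$.
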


%\begin{proof}
The proof of Theorem~\ref{thm:g-fbs} follows a similar line of argument 
as in the proof for rate-function delay optimality of FBS (Theorem~2 in 
\cite{sharma11b}). We consider all the events that lead to the delay-violation
event $\{W(0)>b\}$, which can be caused by two factors: bursty arrivals 
and sluggish service. On the one hand, if there are a large number of 
arrivals in certain period, say of length $t$ time-slots, which exceeds 
the maximum number of packets that can be served in a period of $t+b+1$ 
time-slots, then it unavoidably leads to a delay-violation. On the other 
hand, suppose that there is at least one packet arrival at certain time, 
and that under G-FBS, a success does not occur in any of the following 
$b+1$ time-slots (including the time-slot when the packet arrives), then 
it also leads to a delay-violation. Each of these two possibilities has 
a corresponding rate-function for its probability of occurring. Large-deviations 
theory then tells us that the rate-function for delay-violation is 
determined by the smallest rate-function among these possibilities (i.e., 
``rare events occur in the most likely way"). We can then show that $I(b) 
\ge I_U(b-1) \ge I^*(b-1)$ for any integer $b>0$, where $I(\cdot)$ is the 
rate-function attained by G-FBS, $I_U(\cdot)$ is the upper bound that we 
derived in Section~\ref{sec:ub}, and $I^*(\cdot)$ is the optimal rate-function, 
respectively. 
% We provide the detailed proof of Theorem~\ref{thm:g-fbs} in our online technical report \cite{ji12}.
We provide the detailed proof of Theorem~\ref{thm:g-fbs} in Appendix~\ref{app:thm:g-fbs}.

\emph{Remark:}
Note that the gap between the optimal rate-function and the above near-optimal 
rate-function is likely to be quite small. For example, in the case where the  
arrival is either 1 or 0, the near-optimal rate-function implies $I(b) \ge 
\frac{b}{b+1} I^*(b)$, since we have $I^*(b) = (b+1) \log \frac{1}{1-q}$ for 
this case \cite{ji13a}.

Finally, we make use of the following dominance property of D-SSG over G-FBS. 

\begin{lemma}
\label{lem:dssg-dom}
For any given sample path and for any value of $h$, by the end of any time-slot
$t$, D-SSG has served every packet that G-FBS has served.
\end{lemma}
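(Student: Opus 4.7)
The plan is to prove, by induction on $t$, that the set of packets D-SSG has served by the end of time-slot $t$ contains the set of packets G-FBS has served by the end of $t$. Since both policies serve each queue in FIFO order, this set inclusion is equivalent to the per-queue count inequality that for every $i$, the cumulative number of packets D-SSG has served from $Q_i$ is at least the corresponding number for G-FBS. The base case is trivial. For the inductive step I would run a finer, round-by-round induction within time-slot $t$, maintaining this per-queue count inequality after each of the $n$ server-allocation rounds in $t$. If both policies allocate server $S_k$ to the same queue, or only D-SSG allocates $S_k$, or neither allocates, the invariant is preserved trivially. The only delicate rounds are those in which G-FBS picks some $Q_{i^G}$ while D-SSG either picks a different queue $Q_{i^D}\neq Q_{i^G}$ or sits idle; in such a round the invariant requires the D-SSG-over-G-FBS surplus $\delta_{i^G}$ for $Q_{i^G}$, measured at the start of round $k$, to already be at least $1$, so as to absorb G-FBS's one-round increment on $Q_{i^G}$.

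The idle case is the easy half: G-FBS can match $S_k$ to $Q_{i^G}$ only if $Q_{i^G}$ still has an S-frame packet that G-FBS has not served, while D-SSG can sit idle at $Q_{i^G}$ only if $Q_{i^G}$ is empty in the D-SSG state, so D-SSG has served strictly more packets of $Q_{i^G}$ and $\delta_{i^G}\geq 1$ is automatic. For the harder case $Q_{i^D}\neq Q_{i^G}$, I would chain four HOL-delay inequalities. Write $W^D_{i,k}$ and $W^G_{i,k}$ for the HOL delay of $Q_i$ under D-SSG and under G-FBS entering round $k$. The round-$(k-1)$ invariant plus FIFO gives $W^D_{i,k}\leq W^G_{i,k}$ for every $i$; D-SSG's preference for $Q_{i^D}$ over $Q_{i^G}$ gives $W^D_{i^D,k}\geq W^D_{i^G,k}$; and, provided $Q_{i^D}$ still has a live S-frame packet at round $k$ of G-FBS, the G-FBS preference gives $W^G_{i^G,k}\geq W^G_{i^D,k}$. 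These four close a cycle that forces $W^D_{i^G,k}\leq W^G_{i^G,k}$; were this an equality then all four inequalities would be equalities, and the smallest-index tie-breaking rules of both policies would require $i^D<i^G$ and $i^G<i^D$ at once, a contradiction. So the inequality is strict, which is exactly $\delta_{i^G}\geq 1$.

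The main obstacle is the residual subcase where $Q_{i^D}$ has no S-frame packet available to G-FBS at round $k$, since then the G-FBS preference comparison between $Q_{i^D}$ and $Q_{i^G}$ is unavailable and the four-cycle above breaks. I would close this subcase by exploiting the FCFS frame construction of G-FBS. Since G-FBS has already served all of $Q_{i^D}$'s S-frame packets in earlier rounds of $t$, the inductive invariant forces D-SSG's current HOL of $Q_{i^D}$ to lie at a FIFO index strictly beyond $Q_{i^D}$'s last S-frame packet, hence in a frame opened after the current HOL frame. Because frames are opened in FCFS arrival order, every packet in such a later frame arrived strictly after every packet in the HOL frame, so $W^D_{i^D,k}$ is strictly less than the minimum delay in the HOL frame; meanwhile $W^G_{i^G,k}$ is at least this minimum because G-FBS's current HOL of $Q_{i^G}$ lies in the S-frame (which comprises the HOL frame and the strictly older L-frame). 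Combining with D-SSG's preference $W^D_{i^D,k}\geq W^D_{i^G,k}$ gives $W^G_{i^G,k}>W^D_{i^G,k}$, again yielding $\delta_{i^G}\geq 1$. This frame-structure step is the crux: it is the only place where greedy HOL-delay inequalities alone do not suffice, and one must exploit the strict delay separation across G-FBS frame boundaries.
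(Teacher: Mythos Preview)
Your round-by-round induction maintaining a per-queue served-count invariant is a legitimate route, and it is structurally different from the paper's proof, which argues by contradiction at the packet level: the paper introduces an extended weight $\hat{w}(p)=(t-t_p)+\tfrac{n+1-q_p}{n+1}+\tfrac{L+1-x_p}{(L+1)(n+1)}$ that makes all packet weights distinct, takes the first packet $p$ served by G-FBS but not by D-SSG, and shows that every server is either disconnected from $p$'s queue or is, in G-FBS, already committed to a packet of weight larger than $\hat{w}(p)$.

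That said, your Subcase~B contains a real gap. The assertion that ``every packet in such a later frame arrived strictly after every packet in the HOL frame'' is false: frames are filled in FCFS order with ties broken by \emph{queue index}, so when a frame closes because it hits capacity $n_0$, the first packet of the next frame may have the \emph{same} arrival time as packets in the HOL frame (merely a larger queue index). Concretely, if every queue receives one packet at time~$0$, then frame~$1$ holds queues $1,\dots,n_0$ and frame~$2$ holds queues $n_0{+}1,\dots,n$, all with identical delay; your strict inequality $W^D_{i^D,k}<\min\{\text{delay in HOL frame}\}$ collapses to an equality, and the chain no longer forces $W^G_{i^G,k}>W^D_{i^G,k}$. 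The same issue afflicts your claim that the L-frame is ``strictly older'' than the HOL frame. The fix is precisely the paper's device: replace raw delay by the lexicographic order (arrival time, then queue index, then within-queue index), equivalently by $\hat{w}$. Because the G-FBS frame-filling order coincides with decreasing $\hat{w}$, every non-S-frame packet has strictly smaller $\hat{w}$ than every S-frame packet; and because D-SSG with its smallest-index tie-break is exactly ``serve the connected queue whose HOL has maximal $\hat{w}$,'' your Subcase~B chain, rerun with $\hat{w}$ in place of $W$, yields a strict inequality at the FIFO-position level and hence $\delta_{i^G}\ge 1$. With this correction your argument goes through; compared to the paper's compressed server-side contradiction it makes the round-level mechanism explicit, at the cost of a longer case split.
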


We prove Lemma~\ref{lem:dssg-dom} by contradiction. The proof follows a 
similar argument as in the proof of Lemma~7 in \cite{sharma11b}, and is 
provided in Appendix~\ref{app:lem:dssg-dom}.
% We prove Lemma~\ref{lem:dssg-dom} by contradiction. The proof follows a 
% similar argument as in the proof of Lemma~7 in \cite{sharma11b}, and is 
% provided in our online technical report \cite{ji13b}.
Then, the near-optimal rate-function of D-SSG (Theorem~\ref{thm:dssg-lb}) 
follows immediately from Lemma~\ref{lem:dssg-dom} and Theorem~\ref{thm:g-fbs}. 

\emph{Remark:} Note that D-SSG combined with DWM-$n$ policy, can 
also yield an $O(n^{2.5} \log n)$-complexity hybrid policy that 
is both throughput-optimal and rate-function delay-optimal. We
omit the details since the treatment follows similarly as that 
for hybrid DWM-$n$-MWS policy \cite{ji13a}.

So far, we have shown that our proposed low-complexity D-SSG policy achieves 
both throughput optimality and near-optimal delay rate-function. 
In the next section, we will show through simulations that \high{in all scenarios 
we consider, \emph{D-SSG not only exhibits a near-optimal delay rate-function, 
but also empirically has a similar delay performance to the rate-function 
delay-optimal policies such as DWM and the hybrid DWM-$n$-MWS policy.}}

\section{Simulation Results} \label{sec:sim}
%%%%%%%%%%%%%%%%%%%%%%%%%%%%%%%%%%%%%%%%%%%%
In this section, we conduct simulations to compare scheduling performance 
of our proposed D-SSG policy with DWM, hybrid DWM-$n$-MWS (called Hybrid 
for short), D-MWS, and Q-SSG. We simulate these policies in Java and compare 
the empirical probabilities that the largest HOL delay in the system in any 
given time-slot exceeds an integer threshold $b$, i.e., $\Prob (W(0)>b)$.

% \begin{figure*}
% \centering
% \parbox{0.3\linewidth}{
% \epsfig{file=,width=1\linewidth}
% \caption{Performance comparison of different scheduling policies in the case with 
% homogeneous \emph{i.i.d.} channels, for delay threshold $b=4$.}
% \label{fig:mc}}
% \qquad
% \begin{minipage}{0.3\linewidth}
% \epsfig{file=,width=1\linewidth}
% \caption{Performance comparison of different scheduling policies in the case with 
% homogeneous \emph{i.i.d.} channels, for $n=10$ channels or users.}
% \label{fig:mc_n}
% \end{minipage}
% \qquad
% \begin{minipage}{0.3\linewidth}
% \epsfig{file=,width=1\linewidth}
% \caption{Performance comparison of different scheduling policies 
% in the case with Markov-chain driven heterogeneous channels, for delay threshold $b=4$.}
% \label{fig:heter_greedy}
% \end{minipage}
% \end{figure*}

Same as in \cite{ji13a}, we consider bursty arrivals that are driven by 
a two-state Markov chain and that are correlated over time. (We obtained 
similar results for \emph{i.i.d.} 0-$L$ arrivals, and omit them here.) 
For each user, there are 5 packet-arrivals when the Markov chain is in 
state 1, and there is no arrivals when it is in state 2. The transition 
probability of the Markov chain is given by the matrix $[0.5,0.5;0.1,0.9]$,
% $\begin{bmatrix}
% 0.5 & 0.5 \\
% 0.1 & 0.9
% \end{bmatrix}$,
and the state transitions occur at the end of each time-slot. The arrivals 
for each user are correlated over time, but they are independent across users. 
For the channel model, we first assume \emph{i.i.d.} ON-OFF channels with unit 
capacity, and set $q=0.75$. We later consider more general scenarios with 
heterogeneous users and bursty channels that are correlated over time. We 
run simulations for a system with $n$ servers and $n$ users, where $n \in 
\{ 10,20,\dots,100 \}$. The simulation period lasts for $10^7$ time-slots 
for each policy and each system.

\begin{figure}[t]
\centering
\epsfig{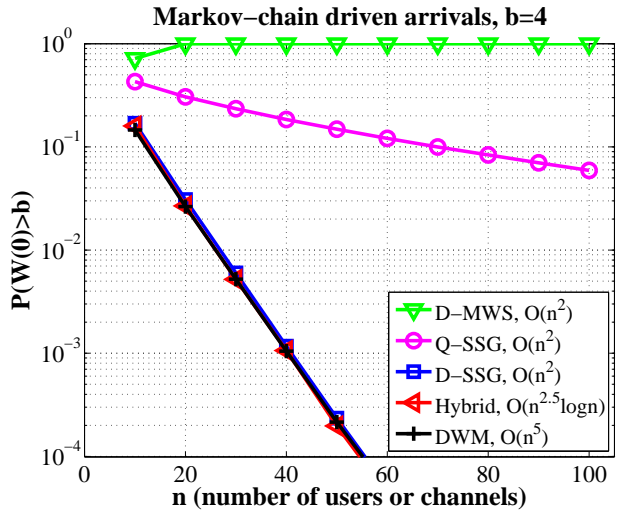}
\caption{Performance comparison of different scheduling policies 
in the case with homogeneous \emph{i.i.d.} channels, for $b=4$.}
\label{fig:mc}
\end{figure}

\begin{figure}[t]
\centering
\epsfig{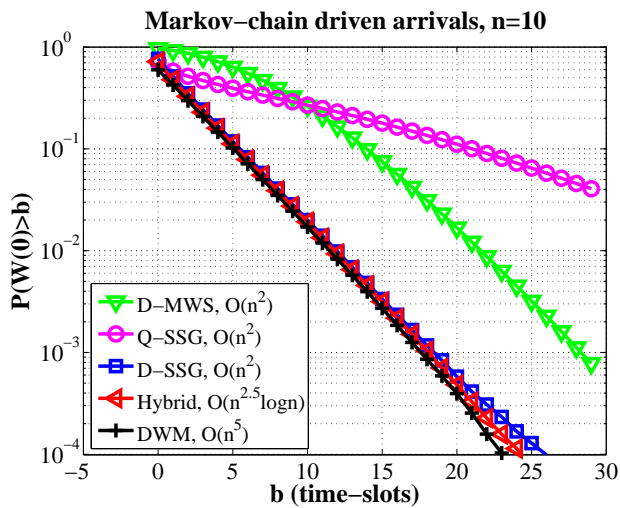}
\caption{Performance comparison of different scheduling policies 
in the case with homogeneous \emph{i.i.d.} channels, for $n=10$.}
\label{fig:mc_n}
\end{figure}

The results are summarized in Fig.~\ref{fig:mc}, where the complexity 
of each policy is also labeled. In order to compare the rate-function 
$I(b)$ as defined in Eq.~(\ref{eq:rf}), we plot the probability over 
the number of channels or users, i.e., $n$, for a fixed value of 
threshold $b$. The negative of the slopes of the curves can be viewed 
as the rate-function for each policy. In Fig.~\ref{fig:mc}, we report 
the results only for $b=4$, and the results are similar for other values 
of threshold $b$. From Fig.~\ref{fig:mc}, we observe that D-SSG has a 
similar delay performance to that of DWM and Hybrid, which are both known 
to be rate-function delay-optimal. This not only supports our theoretical 
results that D-SSG guarantees a near-optimal rate-function, but also 
implies that D-SSG empirically performs very well while enjoying a lower 
complexity. Further, we observe that D-SSG consistently outperforms its 
queue-length-based counterpart, Q-SSG, despite the fact that in \cite{bodas10}, 
it has been shown through simulations that Q-SSG \emph{empirically} achieves 
near-optimal queue-length performance. This provides a further evidence 
that good queue-length performance does not necessarily translate into 
good delay performance. The results also show that D-MWS yields a zero 
rate-function, as expected.

We also plot the probability for delay threshold $b$ as in \cite{bodas09,
bodas10,bodas11a,sharma11,sharma11b,ji13a} to investigate the performance 
of different policies for fixed $n$. In Fig.~\ref{fig:mc_n}, we report the 
results for $n=10$, and the results are similar for other values of $n$. 
From Fig.~\ref{fig:mc_n}, we observe that D-SSG consistently performs closely 
to DWM and Hybrid for almost all values of $b$ that we consider. We also observe 
that D-SSG consistently outperforms its queue-length-based counterpart, Q-SSG.

In addition, we compute the average time required for the operations of 
each policy within one scheduling cycle, when $n=100$. Running simulations 
in a PC with Intel Core i7-2600 3.4GHz CPU and 8GB memory, D-SSG requires 
roughly 0.3 millisecond to finish all of the required operations within 
one scheduling cycle (which, for example, is 1 millisecond in LTE systems), 
while the two-stage Hybrid policy needs 7-10 times more. This, along with 
the above simulation results, implies that in practice D-SSG is more suitable 
for actual implementations than the hybrid policies, although D-SSG does not 
guarantee rate-function delay optimality.

\begin{figure}[t]
\centering
\epsfig{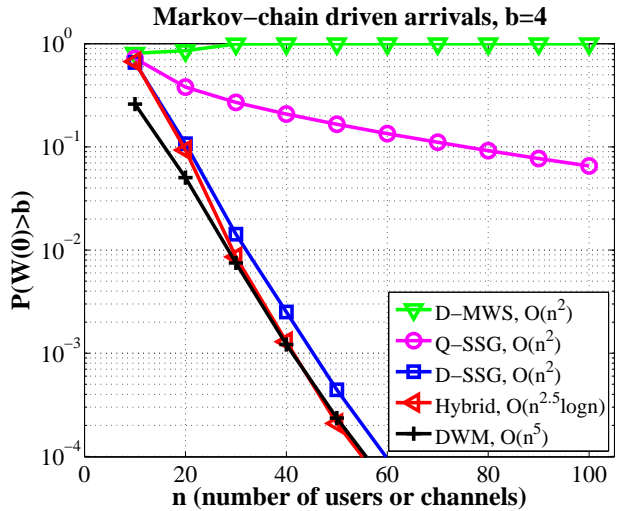}
\caption{Performance comparison of different scheduling policies in the 
case with heterogeneous users and Markov-chain driven channels, for $b=4$.}
\label{fig:heter_greedy}
\end{figure}

Further, we evaluate scheduling performance of different policies 
in more realistic scenarios, where users are \emph{heterogeneous} 
and channels are \emph{correlated over time}. Specifically, we consider 
channels that can be modeled as a two-state Markov chain, where the 
channel is ``ON" when the Markov chain is in state 1, and is ``OFF" 
when it is in state 2. We assume that there are two classes of 
users: users with an odd index are called \emph{near-users}, and 
users with an even index are called \emph{far-users}. Different
classes of users see different channel conditions: near-users see 
better channel condition, and far-users see worse channel condition. 
We assume that the transition probability matrices of channels for 
near-users and far-users are $[0.833, 0.167; 0.5 , 0.5]$ and 
$[0.5, 0.5; 0.167, 0.833]$, respectively. The arrival processes are 
assumed to be the same as in the previous case. 

The results are summarized in Fig.~\ref{fig:heter_greedy}. We observe 
similar results as in the previous case with homogeneous users and 
\emph{i.i.d.} channels in time. In particular, D-SSG exhibits a 
rate-function that is similar to that of DWM and Hybrid, although 
its delay performance is slightly worse. Note that in this scenario, 
a rate-function delay-optimal policy is \emph{not} known yet. Hence, 
for future work, it would be interesting to understand how to design 
rate-function delay-optimal or near-optimal policies in general scenarios.

%%%%%%%%%%%%%%%%%%%%%%%%%%%%%%%%%%%%
\section{Conclusion} \label{sec:con}
%%%%%%%%%%%%%%%%%%%%%%%%%%%%%%%%%%%%
In this paper, we developed a practical and low-complexity greedy scheduling 
policy (D-SSG) that not only achieves throughput optimality, but also guarantees 
a near-optimal delay rate-function, for multi-channel wireless networks. Our 
studies reveal that throughput optimality is relatively easier to achieve in 
such multi-channel systems, while there exists an explicit trade-off between 
complexity and delay performance. If one can bear a minimal drop in the delay 
performance, lower-complexity scheduling policies can be exploited.

The analytical results in this paper are derived for the \emph{i.i.d.} ON-OFF 
channel model with unit channel capacity. An interesting direction for future 
work is to study general multi-rate channels that can be correlated over time. 
We note that this problem will become much more challenging. For example, even 
for an \emph{i.i.d.} 0-$K$ channel model with channel capacity $K>1$, it is 
still unclear whether there exists a scheduling policy that can guarantee both 
optimal throughput and optimal/near-optimal asymptotic delay performance. Another 
direction for future work is to consider heterogeneous users with different 
arrival processes and different delay requirements. In these more general 
scenarios, it may be worth exploring how to find efficient schedulers that can 
guarantee a nontrivial lower bound of the optimal rate-function, if it turns out 
to be too difficult to achieve or prove the optimal asymptotic delay performance 
itself. Nonetheless, we believe that the results derived in this paper will 
provide useful insights for designing high-performance scheduling policies for 
more general scenarios.

%\newpage

% \high{
% \section*{Acknowledgment}
% The authors would like to thank Manu Sharma at Qualcomm Inc. for his 
% valuable comments on this work. 
% }

\appendices 
\section{Proof of Theorem~\ref{thm:ub}}  \label{app:thm:ub}
%%%%%%%%%%%%%%%%%%%%%%%%%%%%%%%%%%%%%%%%%%%%%%%%%%%%%%%%%
We consider event $\mE_1$ and a sequence of events $\mE^c_2$ implying 
the occurrence of event $\{W(0)>b\}$. 

{\bf Event} $\mE_1$: Suppose that there is a packet that arrives to 
the network in time-slot $-b-1$. Without loss of generality, we assume
that the packet arrives to queue $Q_1$. Further, suppose that $Q_1$ is 
disconnected from all the $n$ servers in all the time-slots from $-b-1$ 
to $-1$. 

Then, at the beginning of time-slot 0, this packet is still in the 
network and has a delay of $b+1$. This implies $\mE_1 \subseteq \{W(0)>b\}$. 
Note that the probability that event $\mE_1$ occurs can be computed as
\[
%\textstyle
\Prob(\mE_1) = (1-q)^{n(b+1)}=e^{-n(b+1)I_X}.
\]
Hence, we have 
\[
%\textstyle
\Prob(W(0)>b) \ge e^{-n(b+1)I_X},
\]
and thus
\[
%\textstyle
\limsup_{n \rightarrow \infty} \frac {-1}{n} \log \Prob(W(0)>b) \le (b+1)I_X.
\]

{\bf Event} $\mE^c_2$: Consider any fixed $c \in \{0,1,\dots,b\}$. 
Fix any $\epsilon>0$, and choose $t$ such that $I_{AG}(t,b-c) \le
I_{AG}(b-c)+\epsilon$. Suppose that from time-slot $-t-b$ to $-b-1$, 
the total number of packet arrivals to the system is greater than 
$nt + n(b-c)$, and let $p_{(b-c)}$ denote the probability that this
event occurs. Then, from the definitions of $I_{AG}(t,x)$ and 
$I_{AG}(t)$, we know
\[
%\textstyle
\liminf_{n \rightarrow \infty} \frac {-1}{n} \log p_{(b-c)} 
= I_{AG}(t,b-c) \le I_{AG}(b-c)+\epsilon. 
\]
Clearly, the total number of packets that are served in any time-slot 
is no greater than $n$. Hence, at the end of time-slot $-b-1$, there 
are at least $n(b-c)+1$ packets remaining in the system. Moreover, at 
the end of time-slot $-c-1$, the system contains at least one packet 
that arrived before time-slot $-b$. Without loss of generality, we assume 
that this packet is in $Q_1$. Now, assume that $Q_1$ is disconnected from 
all the $n$ servers in the next $c$ time-slots, i.e., from time-slot $-c$ 
to $-1$. This occurs with probability $(1-q)^{cn}=e^{-ncI_X}$, independently 
of all the past history. Hence, at the beginning of time-slot 0, there is 
still a packet that arrived before time-slot $-b$. Hence, we have $W(0)>b$ 
in this case. This implies $\mE^c_2 \subseteq \{W(0)>b\}$. Note that the 
probability that event $\mE^c_2$ occurs can be computed as
\[
%\textstyle
\Prob(\mE^c_2) = p_{(b-c)}e^{-ncI_X}.
\]
Hence, we have 
\[
%\textstyle
\Prob(W(0)>b) \ge p_{(b-c)}e^{-ncI_X},
\]
and thus
\[
%\textstyle
\limsup_{n \rightarrow \infty} \frac {-1}{n} \log \Prob(W(0)>b) \le I_{AG}(b-c) + \epsilon + cI_X.
\]
Since the above inequality holds for any $c \in \{0,1,\dots,b\}$ and all $\epsilon>0$, 
by letting $\epsilon$ tend to 0 and taking the minimum over all $c \in \{0,1,\dots,b\}$, 
we have
\[
\begin{split}
& \limsup_{n \rightarrow \infty} \frac {-1}{n} \log \Prob(W(0)>b) \\
& \le \min_{c \in \{0,1,\dots,b \}} \{I_{AG}(b-c) + cI_X \}.
\end{split}
\]

Considering both event $\mE_1$ and events $\mE^c_2$, we have
\[
\begin{split}
& \limsup_{n \rightarrow \infty} \frac {-1}{n} \log \Prob(W(0)>b) \\
& \le \min \{\min_{c \in \{0,1,\dots,b \}} \{I_{AG}(b-c) + cI_X \}, (b+1)I_X\}.
\end{split}
\]

%%%%%%%%%%%%%%%%%%%%%%%%%%%%%%%%%%%%%%%%%%%%%%%%%%%%%%%%%%%%%%%
\section{Strictly Positive Rate-function}  \label{app:posi_rf}
%%%%%%%%%%%%%%%%%%%%%%%%%%%%%%%%%%%%%%%%%%%%%%%%%%%%%%%%%%%%%%%
Consider \emph{i.i.d.} 0-$L$ arrivals: in each time-slot, there are $L$ packet 
arrivals with probability $\alpha$, and no arrivals otherwise. Clearly, the mean 
arrival rate $\alpha L$ must be smaller than 1, otherwise the system is not 
stabilizable and thus the rate-function of the delay-violation probability becomes 
zero under any scheduling policy. Hence, we want to show that our derived upper 
bound $I_U(b)$ is strictly positive in the case of \emph{i.i.d.} 0-$L$ arrivals 
with mean arrival rate $\alpha L < 1$.

Recall that $I_U(b) = \min \{(b+1) I_X, \min_{0 \le c \le b} \{ I_{AG}(b-c) 
+ c I_X \} \}$, where $I_X = \log \frac {1} {1-q}$, $I_{AG}(x) = \inf_{t>0} 
I_{AG}(t,x)$, and $I_{AG}(t,x) = \liminf_{n \rightarrow \infty} \frac {-1}{n} 
\log \Prob(A(-t+1,0) > n(t+x))$. It is easy to see that $I_X > 0$ and $I_{AG}(x) 
\ge 0$ for all $x \ge 0$. Hence, in order to show $I_U(b)>0$ it suffices to 
show that $I_{AG}(b) > 0$. Also, note that $I_{AG}(t,x)$ is a non-decreasing 
function of $x$, and thus $I_{AG}(x)$ is also non-decreasing. Therefore, it 
suffices to show $I_{AG}(0) > 0$. Let $N(\tau)$ denote the number of queues
that have arrivals in time-slot $\tau$, and let $N(t_1,t_2) = \sum_{\tau=t_1}^{t_2}
N(\tau)$. Also, let $\alpha^{\prime} = \frac {1}{L}$. Then, we have
\[
\begin{split}
&\Prob(A(-t+1,0) > nt) \\
\le & \Prob(A(-t+1,0) \ge nt) \\
= & \Prob(N(-t+1,0) \ge \alpha^{\prime} nt) \\
\le & e^{-nt D(\alpha \| \alpha^{\prime})}, 
\end{split}
\]
where the last inequality is from the Chernoff bound, and $D(x\|y) = x \log \frac {x} {y} 
+ (1-x) \log \frac {1-x} {1-y}$, is the Kullback-Leibler divergence. 
Hence, $I_{AG}(t,0) = \liminf_{n \rightarrow \infty} \frac {-1}{n} \log \Prob(A(-t+1,0) > nt) 
\ge t D(\alpha \| \alpha^{\prime})$, and $I_{AG}(0) = \inf_{t>0} I_{AG}(t,0) \ge D(\alpha \| 
\alpha^{\prime}) > 0$. This completes the proof.

%%%%%%%%%%%%%%%%%%%%%%%%%%%%%%%%%%%%%%%%%%%%%%%%%%%%%%%%%%%%%%%%%%%%%%%%%%%
%\section{Proof of Lemma~\ref{lem:equivalent}}  \label{app:lem:equivalent}
%%%%%%%%%%%%%%%%%%%%%%%%%%%%%%%%%%%%%%%%%%%%%%%%%%%%%%%%%%%%%%%%%%%%%%%%%%%

%%%%%%%%%%%%%%%%%%%%%%%%%%%%%%%%%%%%%%%%%%%%%%%%%%%%%%%%%%%%%%%%%%%%%%%%%%%%%%%%%%%%%%%%%%%%%
\section{Delay-based Queue-Side-Greedy (D-QSG) And Sample-Path Equivalence} \label{app:sec:dqsg}
%%%%%%%%%%%%%%%%%%%%%%%%%%%%%%%%%%%%%%%%%%%%%%%%%%%%%%%%%%%%%%%%%%%%%%%%%%%%%%%%%%%%%%%%%%%%%
Delay-based Queue-Side-Greedy (D-QSG) policy, in an iterative manner too, schedules the oldest 
packets in the system one-by-one whenever possible. In this sense, D-QSG can be viewed 
as an intuitive approximation of the Oldest Packet First (OPF) policies\footnote{A 
scheduling policy $\mathbf{P}$ is said to be an OPF policy if in any time-slot, policy 
$\mathbf{P}$ can serve the $k$ oldest packets in that time-slot for the largest possible 
value of $k \in \{1,2,\dots,n\}$.} \cite{ji13a} that have been shown to be rate-function 
delay-optimal. We later prove an important sample-path equivalence result 
(Lemma~\ref{lem:equivalent}) that will be used in proving Lemma~\ref{lem:success}
and thus the main result of this paper (Theorem~\ref{thm:dssg-lb}). 
% An important policy in the class of OPF policies is the DWM-$n$ 
% policy \cite{ji13a}, which in each time-slot maximizes the sum of the delay 
% of the packets scheduled among the $n$ oldest packets in the system. While 
% DWM-$n$ is rate-function delay-optimal, and has a lower complexity $O(n^{2.5} 
% \log n)$ compared to $O(n^5)$ of DWM, DWM-$n$ is not throughput-optimal in 
% general. However, DWM-$n$ can be combined with certain throughput-optimal 
% policies to develop two-stage hybrid policies that are both throughput-optimal 
% and rate-function delay-optimal \cite{ji13a}. For example, DWM-$n$ combined 
% with a simple extension of the Delay-based MaxWeight Scheduling (D-MWS) policy 
% \cite{ji13a,andrews04,eryilmaz05} yields a $O(n^{2.5} \log n)$ complexity hybrid 
% policy, called the DWM-$n$-MWS policy. 

We start by presenting some additional notations. In the D-QSG policy, 
there are at most $n$ rounds in each time-slot $t$. By slightly abusing
the notations, we let $Q_i^k(t)$, $Z^k_{i,l}(t)$ and $W_i^k(t)=Z^k_{i,1}(t)$ 
denote the length of queue $Q_i$, the delay of the $l$-th packet of $Q_i$, 
and the HOL delay of $Q_i$ after the $k$-th round in time-slot $t$ under 
D-QSG, respectively. Let $\Upsilon_k(t)$ denote the set of indices of the 
available servers at the beginning of the $k$-th round, and let $\Psi_k(t)$ 
denote the set of queues that have the largest HOL delay among all the 
queues that are connected to at least one server in $\Upsilon_k(t)$ at 
the beginning of the $k$-th round, i.e., $\Psi_k(t) \triangleq \{ 1 \le 
i \le n ~|~ W_i^{k-1}(t) \cdot \mathbb{1}_{\{\sum_{j \in \Upsilon_k(t)} 
C_{i,j}(t) > 0\}} = \max_{1 \le l \le n} W_l^{k-1}(t) \cdot \mathbb{1}_{\{\sum_{j 
\in \Upsilon_k(t)} C_{l,j}(t) > 0\}} \}$. Also, let $i(k,t)$ be the index 
of the queue that is served in the $k$-th round of time-slot $t$, and let 
$j(k,t)$ be the index of the server that serves $Q_{i(k,t)}$ in that round.
We then specify the operations of D-QSG as follows.

\noindent {\bf Delay-based Queue-Side-Greedy (D-QSG) policy:}
In each time-slot $t$, 

\begin{enumerate}
\item 
Initialize $k=1$ and $\Upsilon_1 = \{1,2,\dots,n\}$.
\item In the $k$-th round, allocate server $S_{j(k,t)}$ to $Q_{i(k,t)}$, where 
\[
\begin{split}
i(k,t) &= \min \{i ~|~ i \in \Psi_k(t) \}, \\
j(k,t) &= \min {\{j \in \Upsilon_k(t) ~|~ C_{i(k,t),j}(t)=1\}}.
\end{split}
\]
That is, in the $k$-th round, we consider the queues that have the largest HOL 
delay among those that have at least one available server connected (i.e., the 
queues in set $\Psi_k(t)$), and break ties by picking the queue with the smallest 
index (i.e., $Q_{i(k,t)}$). We then choose an available server that is connected 
to queue $Q_{i(k,t)}$, and break ties by picking the server with the smallest index
(i.e., server $S_{j(k,t)}$), to serve $Q_{i(k,t)}$. At the end of the $k$-th round, 
update the length of $Q_{i(k,t)}$ to account for service, i.e., set $Q_{i(k,t)}^k(t) 
= \left(Q_{i(k,t)}^{k-1}(t) - C_{i(k,t),j(k,t)}(t) \right)^+$ and $Q_i^k(t) = 
Q_i^{k-1}(t)$ for all $i \neq i(k,t)$. Also, update the HOL delay of $Q_{i(k,t)}$, 
by setting $W_{i(k,t)}^k(t) = Z^k_{i(k,t),1}(t) =  Z^{k-1}_{i(k,t),2}(t)$ if 
$Q_{i(k,t)}^k(t)>0$, and $W_{i(k,t)}^k(t)=0$ otherwise, and setting $W_i^k(t) 
= W_i^{k-1}(t)$ for all $i \neq i(k,t)$. 

\item Stop if $k$ equals $n$. Otherwise, increase $k$ by 1, set $\Upsilon_k(t) = 
\Upsilon_{k-1}(t) \backslash \{j(k,t)\}$, and repeat step 2.
\end{enumerate} 
 
\emph{Remark:} 
Note that D-QSG is only used for assisting the rate-function delay 
analysis of D-SSG and may not be suitable for practical implementation 
due to its $O(n^3)$ complexity. This is because there are at most $n$ 
rounds, and in each round, it takes $O(n^2 + n) = O(n^2)$ time to find 
a queue that has at least one connected and available server (which 
takes $O(n^2)$ time to check for all queues) and that has the largest 
HOL delay (which takes $O(n)$ time to compare).

% \begin{theorem}
% \label{thm:dssg}
% D-SSG policy is throughput-optimal under Assumption~\ref{ass:arr_slln}, 
% and achieves a near-optimal rate-function as given in (\ref{eq:norf}) 
% under Assumptions~\ref{ass:arr_bound} and \ref{ass:arr_ld}.
% \end{theorem}

The following lemma  states the sample-path equivalence property between 
D-QSG and D-SSG under the tie-breaking rules specified in this paper.

\begin{lemma}
\label{lem:equivalent}
For the same sample path, i.e., same realizations of arrivals and channel 
connectivity, D-QSG and D-SSG pick the same schedule in every time-slot.
\end{lemma}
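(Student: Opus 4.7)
The plan is to identify a single canonical matching $M^{\star}$ on the given sample path and show that both D-QSG and D-SSG produce exactly $M^{\star}$; a straightforward induction on the time-slot index then yields equality of schedules in every time-slot, since identical per-slot matchings keep the queue states coupled across policies. Fix a time-slot $t$ together with the corresponding queue contents and connectivity matrix. Define $M^{\star}$ by ordering all packets currently in the system by a strict priority---decreasing delay, ties broken by smaller queue index and then by FIFO position within a queue---and, processing packets in this priority order, assigning each packet to the smallest-index server that is both unassigned and connected to its queue (skipping the packet if no such server exists). Because both D-SSG and D-QSG break ties by smallest queue index (and smallest server index where applicable), a common matching at the (server, packet) level automatically implies the (server, queue) equality asserted in the lemma.

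Showing D-QSG $=M^{\star}$ is essentially by construction. After $m$ services to queue $Q_i$ its HOL delay equals the original delay of the $(m+1)$-th packet of $Q_i$ (services are FIFO), so picking the largest-HOL eligible queue in each D-QSG round is identical to picking the highest-priority not-yet-processed packet whose queue still has an available connected server, followed by the smallest-index connected server---which is exactly how $M^{\star}$ is built.

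For D-SSG $=M^{\star}$, I would induct on the server index $j$, with inductive hypothesis that the packets D-SSG has served through round $j-1$ coincide with the packets $M^{\star}$ assigns to $S_1,\dots,S_{j-1}$. Let $p^{\star}$ denote the packet that $M^{\star}$ assigns to $S_j$ (or declare $p^{\star}$ nonexistent if $M^{\star}$ leaves $S_j$ idle). The crux is the following claim: for every packet $p'$ whose priority is strictly greater than $p^{\star}$ and whose queue is connected to $S_j$, we must have $M^{\star}(p')=S_{j'}$ for some $j'<j$. Indeed, when $M^{\star}$ processes $p'$ (strictly before $p^{\star}$), $S_j$ has not yet been assigned, so $S_j$ is an available connected server for $Q(p')$; therefore $M^{\star}$ assigns some $S_{j''}\le S_j$ to $p'$, and $j''=j$ would force $p'=p^{\star}$, a contradiction. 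By the inductive hypothesis, every such $p'$ has already been served by D-SSG before round $j$. Hence at the start of D-SSG's round $j$, the HOL of every queue connected to $S_j$ has priority at most that of $p^{\star}$, and $Q(p^{\star})$ itself achieves this maximum, because any earlier packet of $Q(p^{\star})$ is of strictly higher priority and thus, by the claim, has already been served. D-SSG's round $j$ therefore selects $Q(p^{\star})$ and serves $p^{\star}$; the idle case is handled by the same argument, yielding that every queue connected to $S_j$ is empty in D-SSG's round $j$ state.

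The main obstacle is precisely this inductive step. Viewed procedurally, $M^{\star}$ serves packets in a global priority order, and by the time it reaches $p^{\star}$ it may have served several packets that $M^{\star}$ assigns to servers with index strictly larger than $j$---packets D-SSG has not yet touched through round $j-1$. The key claim above is what bridges this gap: it shows that every such ``extra'' packet invisible to D-SSG in round $j$ must belong to a queue not connected to $S_j$, so its absence from D-SSG's round-$j$ candidate set cannot possibly alter $S_j$'s choice.
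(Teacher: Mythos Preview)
Your proof is correct and rests on the same key observation as the paper's, namely that any packet with priority strictly higher than $p^{\star}$ whose queue is connected to $S_j$ must already have been matched to some $S_{j'}$ with $j'<j$. The difference is organizational: the paper inducts on the packet priority index~$k$, showing directly that $x_k$ receives the same server under D-QSG and D-SSG (or is unserved by both), whereas you introduce the canonical matching $M^{\star}$, dispose of D-QSG${}=M^{\star}$ in one line, and then induct on the server index~$j$ to show D-SSG${}=M^{\star}$. These are dual inductions over the same bipartite structure; your key claim is exactly the contrapositive of the paper's observation that every server in $\mS_{j(k+1)-1}$ is either disconnected from $Q(x_{k+1})$ or already allocated to a packet in $\mP_k$. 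Your decoupling via $M^{\star}$ is slightly cleaner conceptually and makes the ``D-QSG side'' trivial, at the cost of a short extra argument (which you sketch) that $M^{\star}$'s packet-by-packet processing really does coincide with D-QSG's round-by-round HOL selection.
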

\begin{proof}
We prove Lemma~\ref{lem:equivalent} by induction.
% We prove Lemma~\ref{lem:equivalent} by induction, and provide the 
% proof in our online technical report \cite{ji12}. 
It suffices to prove that for any given system, i.e., for any given set of 
packets after arrivals and for any channel realizations, both D-SSG and D-QSG 
pick the same schedule. Suppose that there are $K$ packets in the system. Let 
$x_k$ denote the $k$-th oldest packet in the system. We want to show that packet 
$x_k$ is either served by the same server under both D-SSG and D-QSG, or is not 
served by any server under both D-SSG and D-QSG. We denote the set of the $k$ 
oldest packets by $\mP_k \triangleq \{x_r ~|~ r \le k\}$, and denote the set of 
the first $k$ servers by $\mS_k \triangleq \{S_j ~|~ j \le k \}$. Let $S_{j(r)}$ 
denote the server allocated to serve the $r$-th oldest packet under D-QSG. We 
prove it by induction method.

{\bf Base case:}
Consider packet $x_1$, i.e., the oldest packet, and consider two cases:
under D-QSG, 1) packet $x_1$ is served by $S_{j(1)}$; 2) packet $x_1$ is 
not served by any server.

In Case 1), we want to show that packet $x_1$ is also served by the same 
server $S_{j(1)}$ under D-SSG. Note that packet $x_1$ is the oldest packet 
in the system and is the first packet to be considered under D-QSG. Since 
it is served by $S_{j(1)}$, from the tie-breaking rule of D-QSG, we know  
that the queue that contains packet $x_1$ is disconnected from all the 
servers in set $\mS_{j(1)}$ except server $S_{j(1)}$. Now, we consider the 
server allocation under D-SSG, which allocates servers one-by-one in an 
increasing order of the server index. Since all the servers in set $\mS_{j(1)}$ 
except for server $S_{j(1)}$ are disconnected from the queue containing packet 
$x_1$, these servers cannot be allocated to packet $x_1$ in the first $j(1)-1$ 
rounds under D-SSG. While in the $j(1)$-th round, D-SSG must allocate server 
$S_{j(1)}$ to packet $x_1$, since the queue that contains packet $x_1$ is the 
queue that has the largest HOL among the queues that are connected to server 
$S_{j(1)}$.

In Case 2), packet $x_1$ is the first packet to be considered under D-QSG,
but is not served by any server. This implies that no servers are connected 
to the queue that contains packet $x_1$. Hence, packet $x_1$ cannot be served 
under D-SSG either.

Combining the above two cases, we prove the base case.

{\bf Induction step:}
Consider an integer $k \in \{1,2,\dots, K-1\}$. Suppose that every
packet in set $\mP_k$ is either served by the same server under 
both D-QSG and D-SSG, or is not served by any server under both 
D-QSG and D-SSG. We want to show that this also holds for every
packet in set $\mP_{k+1}$. Clearly, it suffices to consider only 
packet $x_{k+1}$ (i.e., the $(k+1)$-th oldest packet in the system), 
as the other packets all satisfy the condition from the induction 
hypothesis. We next consider two cases: under D-QSG, 1) packet 
$x_{k+1}$ is scheduled by a server under D-QSG; 2) $x_{k+1}$ is 
not served by any server.

In Case 1), suppose that packet $x_{k+1}$ is served by server $S_{j(k+1)}$ 
under D-QSG. We want to show that packet $x_{k+1}$ is also served by 
server $S_{j(k+1)}$ under D-SSG. We first show that under D-SSG, packet 
$x_{k+1}$ cannot be served in the first $j(k+1)-1$ rounds. Note that 
under D-QSG, packet $x_{k+1}$ is served by server $S_{j(k+1)}$. This 
implies that any server in set $\mS_{j(k+1)-1}$ is either disconnected 
from the queue that contains packet $x_{k+1}$ or has already been allocated 
to packets in set $\mP_k$ under D-QSG. This, along with the induction 
hypothesis, further implies that under D-SSG, in the first $j(k+1)-1$ 
rounds, the servers under consideration are either disconnected from 
the queue that contains packet $x_{k+1}$ or allocated to packets in 
set $\mP_k$. Hence, packet $x_{k+1}$ cannot be scheduled in the 
first $j(k+1)-1$ rounds under D-SSG. Next, we want to show that 
packet $x_{k+1}$ must be served by server $S_{j(k+1)}$ in the 
$j(k+1)$-th round under D-SSG. Let $\mP^{\prime}_k \subseteq \mP_k$ 
denote the set of packets among the $k$ oldest packets that are not 
served under both D-QSG and D-SSG. Then, all the queues that contain
packets in set $\mP^{\prime}_k$ must be disconnected from server $S_{j(k+1)}$, 
otherwise some packet $x_r \in \mP^{\prime}_k$ should be served by 
server $S_{j(k+1)}$ under D-QSG. On the other hand, the induction 
hypothesis implies that any packet $x_r \in \mP_k \backslash \mP^{\prime}_k$ 
must be served by some server $S_{j(r)}$, under D-SSG, 
where $j(r) \neq j(k+1)$. Hence, D-SSG does not allocate server 
$S_{j(k+1)}$ to any packet in set $\mP_k$. Therefore, in the $(k+1)$-th 
round, D-SSG must allocate server $S_{j(k+1)}$ to packet $x_{k+1}$, 
since the queue that contains packet $x_{k+1}$ has the largest HOL 
delay among the queues that are connected to server $S_{j(k+1)}$.

In Case 2), packet $x_{k+1}$ is not served by any server under D-QSG. 
This implies that the queue that contains packet $x_{k+1}$ is disconnected 
from all the servers in set $\mS_n \backslash \{S_{j(r)} ~|~ r \in \mP_k 
\backslash \mP^{\prime}_k \}$, i.e., the set of available servers when 
considering packet $x_{k+1}$. On the other hand, the induction hypothesis 
implies that under D-SSG, all the servers in set $\{S_{j(r)} ~|~ r \in 
\mP_k \backslash \mP^{\prime}_k \}$ are also allocated to packets in set 
$\mP_k \backslash \mP^{\prime}_k$. Hence, packet $x_{k+1}$ cannot be served 
by any server under D-SSG either.

Combining the above two cases, we prove the induction step. This
completes the proof.
\end{proof}

Note that under D-SSG, in each round, when a server has multiple connected 
queues that have the largest HOL delay, we break ties by picking the queue 
with the smallest index. Presumably, one can take other arbitrary tie-breaking 
rules. However, it turns out to be much more difficult to directly analyze 
the rate-function performance for a greedy policy from the server side (like 
D-SSG) without using the above equivalence property. For example, as we 
mentioned earlier, the authors of \cite{bodas10,bodas11a} were only able to 
prove a positive (queue-length) rate-function for Q-SSG in more restricted 
scenarios. Hence, our choice of the above simple tie-breaking rule is in fact 
quite important for proving the above sample-path equivalence result, which 
in turn plays a critical role in proving a key property of D-SSG (Lemma~\ref{lem:success}) 
and thus near-optimal rate-function of D-SSG (Theorem~\ref{thm:dssg-lb}). 
Nevertheless, we would expect that one can choose arbitrary tie-breaking 
rules for D-SSG in practice.

%%%%%%%%%%%%%%%%%%%%%%%%%%%%%%%%%%%%%%%%%%%%%%%%%%%%%%%%%%%%%%%%%%%
\section{Proof of Lemma~\ref{lem:success}}  \label{app:lem:success}
%%%%%%%%%%%%%%%%%%%%%%%%%%%%%%%%%%%%%%%%%%%%%%%%%%%%%%%%%%%%%%%%%%%
We then divide the proof into two parts (Lemmas~\ref{lem:first} and \ref{lem:atleast}).

\begin{lemma}
\label{lem:first}
Consider a set of $n$ packets. Consider any function $f(n) < \frac {n}{2}$, 
which is strictly increasing with $n$. The D-SSG policy is applied to schedule 
these $n$ packets. Then, there exists a finite integer $N_{X1}>0$ such that 
for all $n \ge N_{X1}$, with probability no smaller than $1-(1-q)^{n-f(n)\log^2n}$, 
D-SSG schedules all the oldest $f(n)$ packets among the $n$ packets.
\end{lemma}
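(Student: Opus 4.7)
The plan is to invoke the sample-path equivalence between D-SSG and D-QSG established in Lemma~\ref{lem:equivalent}, which lets us reduce the analysis to the sequential, oldest-first D-QSG policy. Since D-SSG and D-QSG produce identical schedules on every realization of the channel states, it suffices to bound the probability that D-QSG fails to serve at least one of the oldest $f(n)$ packets.

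For D-QSG, the key structural observation is as follows. When D-QSG processes the $k$-th oldest packet (with $k \leq f(n)$), at most $k-1 < f(n)$ servers have been allocated in the earlier rounds. Hence, this packet can fail to be served only if every server that is connected to its queue has already been used, and in particular only if the number of servers connected to that queue is strictly less than $f(n)$. Writing $X_k$ for the number of servers connected to the queue containing the $k$-th oldest packet, we therefore have $\Prob(k\text{-th packet fails}) \leq \Prob(X_k < f(n))$; and because the channel states $C_{i,j}$ are i.i.d. Bernoulli$(q)$ across servers, $X_k$ is marginally Binomial$(n,q)$.

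To bound $\Prob(X_k < f(n))$, I would take an elementary union bound over the $\binom{n}{n-f(n)+1}$ subsets of $n-f(n)+1$ servers whose simultaneous disconnection is implied by the event $\{X_k < f(n)\}$:
\[
\Prob(X_k < f(n)) \leq \binom{n}{f(n)-1}(1-q)^{n-f(n)+1} \leq n^{f(n)}(1-q)^{n-f(n)+1}.
\]
Rewriting $n^{f(n)} = (1-q)^{-f(n)\log n / I_X}$ with $I_X = \log(1/(1-q))$ and applying another union bound over the $f(n)$ oldest packets yields
\[
\Prob(\text{some oldest packet fails}) \leq f(n)\cdot(1-q)^{n - f(n)(1+\log n/I_X) + 1}.
\]
Since $f(n)(1 + \log n/I_X) + \log f(n)/I_X - 1 \leq f(n)\log^2 n$ for all sufficiently large $n$, the right-hand side is at most $(1-q)^{n - f(n)\log^2 n}$ once $n$ exceeds some threshold $N_{X1}$, which is the required bound.

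The main obstacle is not conceptual but book-keeping: one must verify that all the lower-order overheads (the outer $f(n)$ factor, the binomial coefficient, and the ``$+1$'' slack) can be absorbed into the generous $f(n)\log^2 n$ correction in the exponent for large enough $n$. It is also worth noting that the looser hypothesis $f(n) < n/2$ here, as opposed to the stronger $f(n)\in o(n/\log^2 n)$ growth used in Lemma~\ref{lem:success}, causes no difficulty: the conclusion is trivially valid whenever $f(n)\log^2 n \geq n$ (the bound becomes at least $1$), and only the complementary regime $f(n) < n/\log^2 n$ needs to be handled by the computation above.
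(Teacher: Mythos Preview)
Your proposal is correct and follows essentially the same approach as the paper: reduce to D-QSG via Lemma~\ref{lem:equivalent}, observe that each of the oldest $f(n)$ packets is served whenever its queue is connected to at least $f(n)$ servers, crudely bound the binomial lower tail by $\mathrm{poly}(n)\cdot(1-q)^{n-f(n)}$, and union-bound over the at most $f(n)$ relevant queues, absorbing the polynomial factors into the $f(n)\log^2 n$ slack. The only cosmetic differences are that the paper writes the tail as $\sum_{j<f(n)}\binom{n}{j}q^j(1-q)^{n-j}\le f(n)n^{f(n)}(1-q)^{n-f(n)}$ and then applies Bernoulli's inequality to the product over queues, whereas you use a subset union bound for the tail and a straight union bound over packets; your closing remark that the regime $f(n)\log^2 n\ge n$ is trivial is a useful observation that the paper glosses over.
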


\begin{proof}
Since Lemma~\ref{lem:first} is focused on the oldest $f(n)$ packets in the set, 
it is easier to consider the D-QSG policy instead, which in an iterative manner 
schedules the oldest packets first. Due to the sample-path equivalence between 
D-SSG and D-QSG (Lemma~\ref{lem:equivalent}), it is sufficient to prove that the 
result of Lemma~\ref{lem:first} holds for D-QSG.

Suppose that the oldest $f(n)$ packets among the $n$ packets are from $k$ different 
queues, where $k \le f(n)$. It is easy to see that if each of the $k$ queues is 
connected to no less than $f(n)$ servers, then all of these oldest $f(n)$ packets 
will be served. Specifically, because D-QSG gives a higher priority to an older packet, 
the above condition guarantees that when D-QSG schedules any of the oldest $f(n)$ 
packets, there will always be at least one available server that is connected to 
the queue containing this packet.

Now, consider any queue $Q_i$. We want to compute the probability that
$Q_i$ is connected to no less than $f(n)$ servers. We first compute the
probability that $Q_i$ is connected to less than $f(n)$ servers:
\[
\begin{split}
\Prob( & Q_i~\text{is connected to less than}~f(n)~\text{servers})\\
= & \sum_{j=0}^{f(n)-1} \Prob( Q_i~\text{is connected to}~j~\text{servers})\\
= & \sum_{j=0}^{f(n)-1} \binom{n} {j} q^{j} (1-q)^{n-j} \\
\le & f(n) n^{f(n)} (1-q)^{n-f(n)}.
\end{split}
\] 

Next, choose $N_{X1}$ such that $f(n) n^{f(n)} < (\frac {1}{1-q})^{n-f(n)}$ 
and $f^2(n)n^{f(n)} \le (\frac {1}{1-q})^{f(n)(\log^2n-1)}$ for all 
$n \ge N_{X1}$. Such an $N_{X1}$ exists because 
$\log (f(n) n^{f(n)}) = \Theta(f(n) \log n)$ and $\log ((\frac {1}{1-q})^{n-f(n)}) 
= \Theta(n)$, hence, we have $\log (f(n) n^{f(n)}) < \log((\frac {1}{1-q})^{n-f(n)})$
and thus $f(n) n^{f(n)} < (\frac {1}{1-q})^{n-f(n)}$ for large enough $n$;
and similarly, because $\log (f^2(n)n^{f(n)}) = \Theta(f(n) \log n)$ and $\log 
((\frac {1}{1-q})^{f(n)(\log^2n-1)}) = \Theta(f(n) \log^2n)$, hence, we have
$\log (f^2(n)n^{f(n)}) \le \log ((\frac {1}{1-q})^{f(n)(\log^2n-1)})$ and 
thus $f^2(n)n^{f(n)} \le (\frac {1}{1-q})^{f(n)(\log^2n-1)}$ for large enough
$n$. Then, the probability that each of the $k$ queues is connected to no less 
than $f(n)$ servers is:
\[
\begin{split}
\Prob( & \text{Each of the}~k~\text{queues is connected to no less than}~f(n)~\text{servers})\\
\ge & (1 - f(n) n^{f(n)} (1-q)^{n-f(n)})^k \\
\stackrel{(a)} \ge & (1 - f(n) n^{f(n)} (1-q)^{n-f(n)})^{f(n)} \\
\stackrel{(b)} \ge & 1 - f^2(n) n^{f(n)} (1-q)^{n-f(n)} \\
\stackrel{(c)} \ge & 1 - (1-q)^{n-f(n)\log^2 n}
\end{split}
\] 
for all $n \ge N_{X1}$, where (a) is from our choice of $N_{X1}$ and the 
fact that $k \le f(n)$, (b) is from our choice of $N_{X1}$ and Bernoulli's 
inequality (i.e., $(1+x)^r \ge 1+rx$ for every real number $x \ge -1$ and 
every integer $r \ge 0$), and (c) is from our choice of $N_{X1}$. This 
completes the proof.
\end{proof}

\begin{lemma}
\label{lem:atleast}
Consider a set of $n$ packets satisfying that no more than $2H$ packets 
are from the same queue, where $H>4$ is any integer constant independent
of $n$. The D-SSG policy is applied to schedule these $n$ packets. Then, 
there exists a finite integer $N_{X2}>0$ such that for all $n \ge N_{X2}$, 
with probability no smaller than $1-(1-q)^n$, D-SSG schedules at least 
$n-H\sqrt{n}$ packets among the $n$ packets.
\end{lemma}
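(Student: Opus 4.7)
The approach is to bound D-SSG's failure structure directly via a union bound on random bipartite subgraphs. Let $W$ denote the number of \emph{wasted servers} at the end of the schedule, i.e., servers that do not serve any packet. Since each of the $n$ servers serves at most one packet and there are exactly $n$ packets in total, the number of \emph{unserved packets} equals $W$, so the lemma reduces to establishing $\Prob[W > H\sqrt{n}] \le (1-q)^n$ for all $n \ge N_{X2}$. I argue directly in terms of the round-by-round operation of D-SSG (working with D-QSG would be equivalent by Lemma~\ref{lem:equivalent}).

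The key structural observation is the following. If server $S_j$ is wasted, then at the start of round $j$ no queue with a packet still present is connected to $S_j$; that is, $C_{i,j}=0$ for every queue $Q_i$ with at least one remaining packet at round $j$. Furthermore, any queue holding unserved packets at the \emph{end} of the schedule must have had packets remaining at every earlier round, since D-SSG removes packets only by serving them. Letting $U_f$ denote the set of queues with at least one unserved packet at the end and $W_f$ the set of wasted servers, it follows that $C_{i,j}=0$ for every pair $(i,j) \in U_f \times W_f$. The hypothesis that no more than $2H$ packets come from a single queue then forces $|U_f| \ge \lceil |W_f|/(2H) \rceil$, since the $|W_f|$ unserved packets must span at least that many distinct queues.

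Combining these facts, the event $\{W > H\sqrt{n}\}$ implies the existence of index sets $U_f, W_f \subseteq \{1,\ldots,n\}$ with $|W_f|=w > H\sqrt{n}$ and $|U_f|=k$ satisfying $\lceil w/(2H) \rceil \le k \le w$, such that all independent $\text{Bernoulli}(q)$ variables $\{C_{i,j}\}_{(i,j)\in U_f\times W_f}$ vanish. A union bound therefore yields
\[
\Prob[W > H\sqrt{n}] \le \sum_{w > H\sqrt{n}}\sum_{k=\lceil w/(2H)\rceil}^{w} \binom{n}{w}\binom{n}{k}(1-q)^{wk}.
\]
Bounding $\binom{n}{w}\binom{n}{k} \le n^{w+k}$ and using $wk \ge w^2/(2H)$, each summand is at most $\exp\!\bigl(2w \log n - \tfrac{w^2}{2H}\log\tfrac{1}{1-q}\bigr)$. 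For $w \ge H\sqrt{n}$ and $H>4$, a short monotonicity check shows this exponent is maximized at $w=H\sqrt{n}$ and is at most $-2n\log\tfrac{1}{1-q}$ once $n$ is sufficiently large. Since there are at most $n^2$ summands, $\Prob[W > H\sqrt{n}] \le n^2(1-q)^{2n} \le (1-q)^n$ for all $n \ge N_{X2}$, as desired.

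The main obstacle is calibrating the combinatorial $n^{w+k}$ factor against the connectivity decay $(1-q)^{wk}$. Without the per-queue cap of $2H$, all $w=H\sqrt{n}$ wasted servers could in principle be disconnected from just a single queue, yielding only a $(1-q)^{H\sqrt{n}}$ decay that is far too weak to beat the target $(1-q)^n$. The multiplicity bound is precisely what forces $|U_f| \ge \sqrt{n}/2$ and hence the quadratic-in-$w$ exponent that dominates the polynomial $n^{2w}$ and closes the argument for any $H>4$.
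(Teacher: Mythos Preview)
Your argument is correct, and it takes a genuinely different route from the paper's own proof. The paper works \emph{sequentially}: it fixes a candidate set $\Xi=\{S_{r_1},\dots,S_{r_{H\sqrt n}}\}$ of wasted servers, observes that at the start of round $r_j$ at least $n-r_j+1$ packets (hence $\ge (n-r_j+1)/(2H)$ nonempty queues) remain, and bounds the conditional probability that $S_{r_j}$ sees no connected nonempty queue by $(1-q)^{(n-r_j+1)/(2H)}$; the column-wise independence of the channel variables makes the product bound legitimate, and taking the worst indices $r_j=n-H\sqrt n+j$ yields $(1-q)^{\frac{H}{4}n+\frac{\sqrt n}{4}}$, after which a single union bound over $\binom{n}{H\sqrt n}$ server sets finishes. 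By contrast, you use a \emph{static} structural observation about the terminal state: the wasted servers $W_f$ and the still-nonempty queues $U_f$ form a zero-connectivity biclique, and $|U_f|\ge |W_f|/(2H)$ by the per-queue cap. Your union bound is then over pairs $(U,W)$ rather than over server sets alone.

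What each buys: the paper's round-by-round calculation exploits the precise dynamics of D-SSG but requires a conditional-independence justification across rounds; your biclique argument is cleaner and uses only that the policy is \emph{work-conserving} (a server is never wasted when a connected nonempty queue exists). In particular, your proof applies verbatim to any work-conserving scheduler satisfying the same $2H$ cap, not just D-SSG. Two minor notes: the inner sum over $k$ is unnecessary (you could fix $k=\lceil w/(2H)\rceil$ by passing to a subset of $U_f$), and your monotonicity claim for the exponent $2w\log n-\tfrac{w^2}{2H}\log\tfrac1{1-q}$ on $w\ge H\sqrt n$ follows because its unconstrained maximizer is $O(\log n)$, well below $H\sqrt n$ for large $n$.
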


\begin{proof} 
Consider the D-SSG policy. We first compute the probability that some $H\sqrt{n}$ 
packets are not scheduled by D-SSG, which is equivalent to the event that some 
$H\sqrt{n}$ servers are not allocated to any packet by D-SSG. 

Consider any arbitrary set of servers $\Xi = \{S_{r_j} ~|~ j=1,2,\dots,H\sqrt{n}\}$,
where $r_i < r_j$ if $i < j$.
Clearly, we have $r_j \le n-H\sqrt{n}+j$ for all $j \in \{1,2,\dots,H\sqrt{n}\}$.
Consider the $r_j$-th server $S_{r_j}$. Then, the number of remaining packets is at least 
$n-r_j+1$ at the beginning of the $r_j$-th round. Since no more than $2H$ packets are from 
the same queue, there are at least $\lceil \frac {n-r_j+1} {2H} \rceil$ queues that are 
non-empty at the beginning of the $r_j$-th round. Then, the probability that server $S_{r_j}$ 
is not allocated to any packet is no greater than $(1-q)^{\lceil \frac{n-r_j+1} {2H} \rceil} 
\le (1-q)^{\frac{n-r_j+1} {2H}}$. Hence, 
\[
\begin{split}
\Prob(& \text{None of the servers in a given set}~\Xi~\text{is allocated}) \\
\le & \prod_{j=1}^{H\sqrt{n}} (1-q)^{\frac {n-r_j+1} {2H}} \\
\le & \prod_{j=1}^{H\sqrt{n}} (1-q)^{\frac {n-(n-H\sqrt{n}+j)+1} {2H}} \\
\le & (1-q)^{\frac {1} {2H} (1+2+\dots+H\sqrt{n})} \\
= & (1-q)^{\frac {H}{4} n + \frac {\sqrt{n}} {4}} \\
\end{split}
\]

Since $H>4$, there exists an $N_{X2}$ such that $n^{H\sqrt{n}} (1-q)^{\frac {H}{4} n 
+ \frac {\sqrt{n}} {4}} \le (1-q)^n$ for all $n \ge N_{X2}$. Such an $N_{X2}$ exists 
because $\log (n^{H\sqrt{n}}) = \Theta (\sqrt{n} \log n)$ and $\log ((\frac 
{1}{1-q})^{\frac {H}{4} n + \frac {\sqrt{n}} {4} - n}) = \Theta (n^{\frac{H}{4}-1})$,
hence, $\log (n^{H\sqrt{n}}) \le \log ((\frac {1}{1-q})^{\frac {H}{4} n + \frac {\sqrt{n}} 
{4} - n})$ and thus $n^{H\sqrt{n}} (1-q)^{\frac {H}{4} n + \frac {\sqrt{n}} {4}} \le (1-q)^n$ 
for large enough $n$. Then, we can compute the probability that some $H\sqrt{n}$ servers 
are not allocated as
\[
\begin{split}
\Prob(& \text{Some}~H\sqrt{n}~\text{servers are not allocated}) \\
\le & \binom{n} {H\sqrt{n}}\Prob(\text{None of the servers in a given set}~\Xi~\text{is allocated}) \\
\le & n^{H\sqrt{n}} (1-q)^{\frac {H}{4} n + \frac {\sqrt{n}} {4}} \\
\le & (1-q)^n
\end{split}
\]
for all $n \ge N_{X2}$, where the last inequality is due to our choice of $N_{X2}$. 

Therefore, we have
\[
\begin{split}
\Prob(& \text{At least}~n-H\sqrt{n}~\text{packets are scheduled}) \\
= & 1 - \Prob(\text{Less than}~n-H\sqrt{n}~\text{packets are scheduled}) \\
= & 1 - \Prob(\text{Greater than}~H\sqrt{n}~\text{packets are not scheduled}) \\
\ge & 1 - \Prob(\text{At least}~H\sqrt{n}~\text{packets are not scheduled}) \\
= & 1 - \Prob(\text{Some}~H\sqrt{n}~\text{packets are not scheduled}) \\
= & 1 - \Prob(\text{Some}~H\sqrt{n}~\text{servers are not allocated}) \\
\ge & 1-(1-q)^n,
\end{split}
\]
for all $n \ge N_{X2}$.
\end{proof}

By applying Lemmas~\ref{lem:first} and \ref{lem:atleast}, and choosing 
$N_X \triangleq \max \{N_{X1}, N_{X2}, N_{X3}\}$, where $N_{X3}$ is such
that $n-H\sqrt{n} > f(n)$ for all $n \ge N_{X3}$, we show that for all 
$n \ge N_X$, with probability no smaller than $1-2(1-q)^{n-f(n)\log^2 n}$, 
D-SSG schedules at least $n-H\sqrt{n}$ packets including the oldest $f(n)$ 
packets among the $n$ packets.

%%%%%%%%%%%%%%%%%%%%%%%%%%%%%%%%%%%%%%%%%%%%%%%%%%%%%%%%%%%%%%%%%
\section{Proof of Theorem~\ref{thm:g-fbs}}  \label{app:thm:g-fbs}
%%%%%%%%%%%%%%%%%%%%%%%%%%%%%%%%%%%%%%%%%%%%%%%%%%%%%%%%%%%%%%%%%
The proof follows a similar argument for the proof of Theorem~2 in 
\cite{sharma11b}. 

We start by defining $I_0 \triangleq I_U(b-1)=\min \{b I_X, \min_{0 \le c 
\le b-1} \{ I_{AG}(b-1-c) + c I_X \} \} \ge I^*(b-1)$. Consider any fixed 
$\epsilon>0$, and define $I^\epsilon_0 \triangleq \min \{b I_X, \min_{0 
\le c \le b-1} \{ I_{AG}(b-1-c) - \epsilon + c I_X \} \}$. Then, we have 
$\lim_{\epsilon \rightarrow 0} I^\epsilon_0 = I_0$. 

We then choose the value of parameter $h$ for G-FBS based on the statistics 
of the arrival process. We fix $\delta<\frac{2}{3}$ and $\eta<\frac{p}{2}$. 
Then, from Assumption~\ref{ass:arr_ld}, there exists a positive function 
$I_B(\eta,\delta)$ such that for all $n \ge N_B(\eta,\delta)$ and $t \ge 
T_B(\eta,\delta)$, we have
\[
\Prob \left( \frac {\sum_{\tau=r+1}^{r+t} \mathbb{1}_{\{|A(\tau) - pn|
> \eta n \}}} {t} > \delta \right) < \exp (-nt I_B(\eta,\delta)),
\]
for any integer $r$. We then choose 
\[
h = \max \left \{ T_B(\eta,\delta), 
\left \lceil \frac {1}{(p-\eta)(1-\frac{3\delta}{2})} \right \rceil,  
\left \lceil \frac {2I^\epsilon_0}{I_B(\eta,\delta)} \right \rceil, 4 \right \} + 1.
\]
The reason for choosing the above value of $h$ will become clear later on.
Recall from Assumption~\ref{ass:arr_bound} that $L$ is the maximum number
of packets that can arrive to a queue in any time-slot $t$. Then, $H=Lh$ is 
the maximum number of packets that can arrive to a queue during an interval 
of $h$ time-slots, and is thus the maximum number of packets from the same 
queue in a frame. This also implies that in the S-frame, no more than $2H$ 
packets are from the same queue.

Next, we define the following notions associated with the G-FBS policy. 
Let $F(t)$ denote the number of unserved frames in time-slot $t$, and let 
$R(t)$ denote the remaining available space (where the unit is packet) in 
the end-of-line frame at the end of time-slot $t$. Also, let $X_F(t)$ denote 
the indicator function of whether a success occurs in time-slot $t$. That 
is, $X_F(t)=1$ if there is a success, and $X_F(t)=0$ otherwise. Recall that 
$n_0=n-H\sqrt{n}$. Then, we can write a recursive equation for $F(t)$:
\begin{align}
F(t) &= (F(t-1) + \left \lceil \frac {A(t)-R(t-1)} {n_0} \right \rceil 
- X_F(t),0)^+, \\
R(t) &= \mathbb{1}_{ \{ F(t)>0 \} } \cdot ((R(t-1)-A(t)) \mod n_0 ).
\end{align}

Let $M(t) \le H\sqrt{n}$ denote the number of packets in the L-frame at 
the beginning of time-slot $t$, and let $P(t) \le n_0$ denote the number 
of packets in the HOL frame at the beginning of time-slot $t$. Then, at 
the beginning of time-slot $t$, the number of packets in the S-frame is 
equal to $M(t)+P(t)$. Let $D(t) \le M(t) + P(t)$ denote the number 
of packets served from the S-frame if a success occurs in time-slot $t$. 
Then, we have the following recursive equation for $M(t)$:
\[
M(t+1) = \left\{
\begin{array}{ll}
M(t) + P(t) - D(t), & ~\text{if}~ X_F(t)=1, \\
M(t), & ~\text{otherwise}.
\end{array}
\right.
\]
Also, we let 
\[
X_F(t_1,t_2)=\sum^{t_2}_{\tau=t_1} X_F(\tau) \mathbb{1}_{ \{ \{F(\tau)>0 \} \cup \{ M(\tau)>0 \} \} }
\]
denote the the total number of successes in the interval from time-slot 
$t_1$ to $t_2$ when the S-frame is non-empty (i.e., the number of unserved 
frames is greater than zero or the L-frame is non-empty). 

Note that the arriving time of a frame is the time when its first packet 
arrives. Let $R_0=R(t_1-1)$ denote the available space in the end-of-line 
frame at the end of time-slot $t_1-1$. Then, we let $A^{R_0}_F(t_1,t_2)$ 
denote the number of new frames that arrive from time-slot $t_1$ to $t_2$. 
When $R_0=0$, we use $A_F(t_1,t_2)$ to denote $A^{R_0}_F(t_1,t_2)$ for 
notational convenience.

Let $L(-b)$ be the last time before $-b$, when the number of unserved frames 
is equal to zero. Then, given that $L(-b)=-t-b-1$, where $t > 0$, the number 
of unserved frames never becomes zero during interval $[-t-b, -b-1]$. Let 
$U(0)$ denote the indicator function of whether at time-slot 0 the L-frame 
contains a packet that arrives before time-slot $-b$, i.e., $U(0)=1$ if at 
time-slot 0 the L-frame contains a packet that arrives before time-slot $-b$, 
and $U(0)=0$, otherwise. Let $\mE^{\alpha_1}_t$ denote the event that the 
number of frames that arrive during interval $[-t-b, -b-1]$ is greater than 
the total number of successes during interval $[-t-b, -1]$ when the S-frame 
is non-empty, i.e., 
\[
\mE^{\alpha_1}_t = \{A_F(-t-b,-b-1) > X_F(-t-b,-1) \}.
\]
Let $\mE^{\alpha_2}_t$ denote the event that the number of frames that arrive 
during interval $[-t-b, -b-1]$ is equal to the total number of successes during 
interval $[-t-b, -1]$ when the S-frame is non-empty, and at time-slot $0$ the 
L-frame contains a packet that arrives before time-slot $-b$, i.e., 
\[
\mE^{\alpha_2}_t = \{A_F(-t-b,-b-1) = X_F(-t-b,-1), U(0)=1\}.
\]
Letting $\mE^{\alpha}_t = \mE^{\alpha_1}_t \cup \mE^{\alpha_2}_t$, we have 
\begin{equation}
\label{eq:eq_events}
\begin{split}
\{L(-b) & =-t-b-1, W(0)>b\} \\
&= \{L(-b)=-t-b-1, \mE^{\alpha}_t\}.
\end{split}
\end{equation}
By taking the union over all possible values of $L(-b)$ and applying 
the union bound, we have
\begin{equation}
\label{eq:pwgtb}
%\begin{split}
\Prob( W(0)>b) \le \sum_{t=1}^{\infty} \Prob(L(-b)=-t-b-1, \mE^{\alpha}_t).
%\end{split}
\end{equation}

We fix a finite time $t^*$ as 
\begin{equation}
\label{eq:tstar}
t^* \triangleq \max \{ T_1, \left \lceil \frac 
{I^\epsilon_0} { I_{BX} } \right \rceil \}, 
\end{equation}
where 
\begin{equation}
\label{eq:t1}
T_1 \triangleq \max \{
T_B(\hat{p}-p, \frac {1-\hat{p}} {6(L+2)}), \frac {18(1+\hat{p})} {(2+\hat{p})(1-\hat{p})} \}
\end{equation}
and 
\begin{equation}
\label{eq:ibx}
I_{BX} \triangleq \min \{ \frac {(1-\hat{p})I_X}{9}, I_B(\hat{p}-p,\frac {1-\hat{p}} {6(L+2)}) \}.
\end{equation}
%%whose value will later be specified in (\ref{eq:tstar}). 
Then, we split the summation in (\ref{eq:pwgtb}) as
\[
\Prob (W(0)>b) \le P_1 + P_2,
\]
where 
\[
P_1 \triangleq \sum_{t=1}^{t^*} \Prob(L(-b)=-t-b-1, \mE^{\alpha}_t),
\]
\[
P_2 \triangleq \sum_{t=t^*}^{\infty} \Prob(L(-b)=-t-b-1, \mE^{\alpha}_t).
\]

We divide the proof into two parts. 
In Part 1, we show that there exist a constant $C_1>0$ and a finite $N_1>0$ 
such that for all $n \ge N_1$, we have
\[
P_1 \le (C_1 + e^{g(n)}) t^* e^{-n I^\epsilon_0},
\]
where $g(n)$ is a function satisfying that $g(n) \in \omega (f(n)\log^2n)$ 
and $g(n) \in o(n)$. And in Part 2, we show that there exists a finite 
$N_2>0$ such that for all $n \ge N_2$, we have
\[
P_2 \le 4 e^{-n I^\epsilon_0}.
\]
Finally, combining both Parts, we have
\[
\Prob (W(0)>b) \le \left( (C_1 + e^{g(n)}) t^*  + 4 \right) e^{-n I^\epsilon_0},
\]
for all $n \ge N \triangleq \max \{N_1, N_2\}$. By letting $\epsilon$
tend to 0, and taking logarithm and limit as $n$ goes to infinity, we 
obtain $\liminf_{n \rightarrow \infty} \frac {-1} {n} \log \Prob 
\left( W(0) > b \right) \ge I_0$, and thus the desired results. 
% For a detailed proof, please refer to our online technical report \cite{ji13b}.

We next present Lemma~\ref{lem:full} that will be used in the proof. 

\begin{lemma}
\label{lem:full}
Consider the G-FBS policy.
Let $A(\cdot)$ be an arrival process. Consider the interval $[t_1, t_2]$,
and recall that $R_0$ is the available space in the end-of-line frame at 
the end of time-slot $t_1-1$. Suppose the following two conditions are
satisfied:
\begin{enumerate}
\item The number of unserved frames never becomes zero during interval 
$[t_1,t_2]$, i.e., $F(\tau)>0$ for all $\tau \in [t_1,t_2]$.
\item For any $h-1$ consecutive time-slots in the interval, the cumulative 
arrivals are greater than or equal to $n_0$, i.e., $A(r,r+h-2) \ge n_0$, 
for any $r \in \{t_1, t_1+1, \dots, t_2-h+2\}$.
\end{enumerate}
Then the following holds,
\[
\begin{split}
A^{R_0}_F(t_1,t_2) &= \left \lceil \frac {A(t_1,t_2) - R_0} {n_0} \right \rceil 
\le \left \lceil \frac {A(t_1,t_2)} {n_0} \right \rceil, \\
R(t_1) &= (R_0 - A(t_1,t_2)) \mod n_0.
\end{split}
\]
\end{lemma}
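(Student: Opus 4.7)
The plan is to reduce the lemma to a clean structural statement about how frames close: namely, that under the two stated conditions, every frame active during $[t_1, t_2]$ is sealed when and only when it reaches its capacity of $n_0$ packets, and never because of the $h$-slot delay-gap constraint. Once this reduction is established, the two formulas for $A^{R_0}_F(t_1,t_2)$ and the residual space fall out of a direct counting argument that accounts for the $R_0$ packets needed to finish the inherited end-of-line frame and for groups of $n_0$ packets that each spawn a subsequent new frame.

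First I would prove the key structural claim: any frame whose first packet arrives at a time $r \in [t_1, t_2-h+2]$ reaches its capacity before time $r + h - 1$. Condition~(2) supplies $A(r, r+h-2) \ge n_0$, and these $n_0$ packets are filled FCFS into that frame until its capacity is exhausted, so the frame closes by capacity no later than time $r + h - 2$; this is strictly before the earliest slot $r + h + 1$ at which the delay-gap constraint could force a closure. The initial end-of-line frame inherited at time $t_1$ requires a separate argument: using condition~(1) to guarantee that this frame is still open and non-degenerate, together with condition~(2) applied at $r = t_1$, I can show that within $h-1$ time-slots of $t_1$ the $R_0$ remaining slots are filled and the frame closes by capacity, again strictly before the delay-gap could force it closed.

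With every closure attributed to capacity, the rest is arithmetic. The first $\min(R_0, A(t_1,t_2))$ arrivals complete the inherited frame; the remaining $(A(t_1,t_2) - R_0)^+$ arrivals are then partitioned into blocks of $n_0$, each block opening one new frame. Hence the number of frames opened during $[t_1, t_2]$ equals $\lceil (A(t_1,t_2) - R_0)/n_0 \rceil$, and the residual space of the end-of-line frame equals $(R_0 - A(t_1,t_2)) \bmod n_0$. A short induction on the time index against the recursions for $F(\cdot)$ and $R(\cdot)$ stated just before the lemma verifies both expressions.

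The main obstacle I expect is the boundary handling of the initial end-of-line frame. The clean situation governed by condition~(2) applies to frames whose entire life lies within $[t_1, t_2]$; the inherited frame lives partly in the past and its arrival-time history is only implicitly constrained. Extracting from condition~(1) enough information about the age of the packets already sitting in this frame at time $t_1 - 1$, so as to rule out a delay-gap closure during $[t_1, t_2]$, is the one nontrivial step. Once this technicality is resolved, the remaining work is a routine counting/induction argument that I would relegate to a short calculation.
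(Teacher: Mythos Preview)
Your plan is exactly the inductive argument the paper points to (the paper omits the proof and defers to Lemma~9 of \cite{sharma11b}): show that under conditions~(1)--(2) every frame closes by reaching capacity $n_0$ rather than via the $h$-gap rule, then count arrivals in blocks of $n_0$ to read off the two formulas. Your instinct that the inherited end-of-line frame is the one delicate boundary case is correct, but be aware that condition~(1) alone does not bound the age of that frame's first packet; in the paper's two invocations of the lemma this never bites, since either $R_0=0$ (Part~1 of the proof of Theorem~\ref{thm:g-fbs}, where $F(t_1-1)=0$) or each sub-interval begins immediately after a burst slot, so the inherited frame is at most one slot old (Part~2).
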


\emph{Remarks:} The result of the above lemma implies that every frame that 
arrives in interval $[t_1,t_2]$ (except for the last frame) has exactly $n_0$ 
packets. The proof follows an inductive argument for the proof of Lemma~9 in 
\cite{sharma11b}, and is thus omitted.

{\bf Part 1:}
Consider any $t \in \{1,2,\dots,t^*\}$.
Let $\mE_t$ be the set of sample paths in which $L(-b)=-t-b-1$ and $\mE^{\alpha}_t$
occurs, let $\mE^\beta_t$ be the set of sample paths in which $\frac {A(-t-b,-b-1)} 
{n} + 1 - \sum_{\tau=-t-b}^{-1} X_F(\tau) > 0$, and let $\mE^\gamma_t$ denote the 
set of sample paths in which there are at least $n_0$ packet arrivals to the system 
during every consecutive $h-1$ time-slots in the interval of $[-t-b,-b-1]$. Let $N_0$ 
be such that $f(n)>H\sqrt{n} (t^*+b)$ for all $n \ge N_0$.

First, we want to show that for all $n \ge N_0$, we have
\begin{equation}
\label{eq:sp}
\mE_t \subseteq (\mE^\gamma_t)^c \cup \mE^\beta_t.
\end{equation}
In the following, we will show that $\mE_t \cap \mE^\gamma_t \subseteq \mE^\beta_t$, 
and then it is easy to see that $\mE_t = (\mE_t \cap \mE^\gamma_t) \cup (\mE_t \cap 
(\mE^\gamma_t)^c) \subseteq (\mE^\gamma_t)^c \cup \mE^\beta_t$.

For every sample path in set $\mE_t \cap \mE^\gamma_t$, $L(-b)=-t-b-1$ is the last 
time before $-b-1$, when the number of unserved frames is equal to zero. Then, the 
number of unserved frames never becomes empty during interval $[-t-b,-b-1]$, i.e., 
\begin{equation}
\label{eq:ftau}
F(\tau)>0~\text{for all}~\tau \in [-t-b,-b-1]. 
\end{equation}
Also, there are at least $n_0$ packet arrivals to the system during every 
consecutive $h-1$ time-slots in the interval of $[-t-b,-b-1]$. The above 
implies that the conditions of Lemma~\ref{lem:full} hold, and hence, 
\begin{equation}
\label{eq:afe}
A_F(-t-b,-b-1)=\left \lceil \frac {A(-t-b,-b-1)} {n_0} \right \rceil.
\end{equation} 
Moreover, we have $F(\tau)>0$ or $M(\tau)>0$ for all $\tau \in [-b,-1]$. Otherwise, 
there must exist one time-slot $\tau^{\prime} \in [-b,-1]$ such that $F(\tau^{\prime})=0$ 
and $M(\tau^{\prime})=0$, which implies $W(0) \le b$ and thus contradicts with 
(\ref{eq:eq_events}). This, along with (\ref{eq:ftau}), implies
\begin{equation}
\label{eq:xf}
\begin{split}
X_F(-t-b,-1) &= \sum_{\tau=-t-b}^{-1} X_F(\tau) \mathbb{1}_{\{\{F(\tau)>0\} 
\cup \{M(\tau)>0 \}\}} \\ 
&= \sum_{\tau=-t-b}^{-1} X_F(\tau).
\end{split}
\end{equation}
Hence, for any sample path in set $\mE_t \cap \mE^\gamma_t$, we have $\left 
\lceil \frac {A(-t-b,-b-1)} {n_0} \right \rceil \ge \sum_{\tau=-t-b}^{-1} 
X_F(\tau)$. Let $l(n)$ denote the number of packets in the $(\sum_{\tau=-t-b}^{-1} 
X_F(\tau))$-th frame that arrives in the interval $[-t-b,-b-1]$. We want to show 
that for any sample path in the set $\mE_t \cap \mE^\gamma_t$, we have
\begin{equation}
\label{eq:ln}
l(n) > f(n)-H\sqrt{n}. 
\end{equation}
Now, recall that 
for every sample path in set $\mE_t$, there are two cases: either 
$\mE^{\alpha_1}_t$ or $\mE^{\alpha_2}_t$ occurs. We consider these 
two cases separately.

{\bf Case 1)} Suppose $\mE^{\alpha_1}_t$ occurs, i.e., $A_F(-t-b,-b-1) > X_F(-t-b,-1)$. 
Then, from (\ref{eq:afe}) and (\ref{eq:xf}), we have $\left \lceil \frac {A(-t-b,-b-1)} 
{n_0} \right \rceil > \sum_{\tau=-t-b}^{-1} X_F(\tau)$. Eq. (\ref{eq:afe}) also implies
that all the frames that arrive in the interval $[-t-b,-b-1]$, except the last frame, 
are fully filled (with $n_0$ packets). Since the $(\sum_{\tau=-t-b}^{-1} X_F(\tau))$-th 
frame is not the last frame, we have $l(n) = n_0 = n - H\sqrt{n} > f(n) - H\sqrt{n}$.

{\bf Case 2)} Suppose $\mE^{\alpha_2}_t$ occurs, i.e., $A_F(-t-b,-b-1) = X_F(-t-b,-1)$
and $U(0)=1$. Then, from (\ref{eq:afe}) and (\ref{eq:xf}), we have $\left \lceil \frac 
{A(-t-b,-b-1)} {n_0} \right \rceil = \sum_{\tau=-t-b}^{-1} X_F(\tau)$. This implies
that the $(\sum_{\tau=-t-b}^{-1} X_F(\tau))$-th frame is the last frame. Now, suppose
that $l(n) \le f(n)-H\sqrt{n}$. Then, the last success in the interval of $[-t-b,-1]$ 
would have completely cleared the $l(n)$ packets as they are among the oldest $f(n)$ 
packets. This implies $U(0)=0$, which contradicts with the assumption that $\mE^{\alpha_2}_t$ 
occurs. Therefore, we must have $l(n) > f(n)-H\sqrt{n}$.

Combining both cases, we show that (\ref{eq:ln}) holds for any sample path in set 
$\mE_t \cap \mE^\gamma_t$.

Recall that for any sample path in set of $\mE_t \cap \mE^\gamma_t$, all the frames 
that arrive in the interval of $[-t-b,-b-1]$, except the last frame, are fully filled 
(with $n_0$ packets), and that $\left \lceil \frac {A(-t-b,-b-1)} {n_0} \right \rceil 
\ge \sum_{\tau=-t-b}^{-1} X_F(\tau)$. Hence, we have
\[
\begin{split}
A(&-t-b,-b-1) \\
&\ge n_0 (\sum_{\tau=-t-b}^{-1} X_F(\tau) - 1) + l(n) \\
&= n(\sum_{\tau=-t-b}^{-1} X_F(\tau) - 1) + l(n) - H\sqrt{n}(\sum_{\tau=-t-b}^{-1} X_F(\tau) - 1) \\
&> n(\sum_{\tau=-t-b}^{-1} X_F(\tau) - 1),
\end{split}
\]
for all $n \ge N_0$, where the last inequality is from (\ref{eq:ln}), 
our choice of $N_0$ such that $f(n)>H\sqrt{n} (t^*+b)$ for all $n \ge 
N_0$, and $\sum_{\tau=-t-b}^{-1} X_F(\tau) \le t^* + b$. Hence, we have 
$\frac {A(-t-b,-b-1)} {n} + 1 - \sum_{\tau=-t-b}^{-1} X_F(\tau) > 0$. 
This implies $\mE_t \cap \mE^\gamma_t \subseteq \mE^\beta_t$ for all 
$n \ge N_0$, and thus (\ref{eq:sp}) holds. Therefore,
\[
\Prob (\mE_t) \le 1 - \Prob (\mE^\gamma_t) + \Prob (\mE^\beta_t).
\]
Along with the choice of $h$ (as chosen earlier), we can show that 
there exist $N_3>0$ and $C_1>0$ such that for all $n \ge N_3$, 
\begin{equation}
\label{eq:palpha}
\Prob (\mE^\gamma_t) > 1 - C_1 t e^{-n I^\epsilon_0} \ge 1 - C_1 t^* e^{-n I^\epsilon_0}.
\end{equation}
Here, we do not duplicate the detailed proof for (\ref{eq:palpha}), 
and refer the interested readers to \cite{sharma11b} for details.

Next, we calculate the probability that event $\mE^\beta_t$ occurs. 
Specifically, we want to show that there exists a finite $N_1>0$ such 
that for all $n \ge N_1$ and for all $t \le t^*$, we have
\[
\Prob (\mE^\beta_t) \le e^{g(n)} e^{-n I^\epsilon_0}.
\]

Recall that $I_{AG}(t,x)=\liminf_{n \rightarrow \infty} \frac {-1}{n} 
\log \Prob(A(-t+1,0) > n(t+x))$ and $I_{AG}(x) = \inf_{t>0} I_{AG}(t,x)$. 
Hence, for any fixed $\epsilon > 0$, there exists a finite $N_4$ such 
that for all $n \ge N_4$, we have 
\begin{equation}
\label{eq:uba}
\begin{split}
\Prob(A(-t+1,0) > n(t+x)) & \le e^{-n(I_{AG}(t,x)-\epsilon)} \\
& \le e^{-n(I_{AG}(x)-\epsilon)}.
\end{split}
\end{equation}

We next calculate an upper bound on the probability that during interval 
$[-t-b,-1]$, there are exactly $t+a$ successes, for some $a \le b$. Recall 
from Lemma~\ref{lem:success} that for all $n \ge N_X$, the probability of 
a success in each time-slot is no smaller than $1-2(1-q)^{n-f(n)\log^2n}$. 
Hence, we have 
\begin{equation}
\label{eq:ubs}
\begin{split}
\Prob ( & \sum_{\tau=-t-b}^{-1} X_F(\tau) = t+a) \\
& \le \binom{t+b} {t+a} (2(1-q)^{n-f(n)\log^2n})^{b-a} \\
& \le 2^{t+b} 2^{b-a} \left(\frac {1} {1-q} \right)^{(b-a)f(n)\log^2n} e^{-n(b-a) \log \frac {1} {1-q}} \\
& \le 2^{t+2b} \left(\frac {1} {1-q} \right)^{bf(n)\log^2n} e^{-n(b-a) I_X}.
\end{split}
\end{equation}
It is easy to observe that the right hand side is a monotonically increasing 
function in $a$.

We choose $N_5>0$ such that $(t^*+b+1) 2^{t^*+2b} \left(\frac {1} {1-q} 
\right)^{b f(n)\log^2n} \le e^{g(n)}$ for all $n \ge N_5$, and choose $N_1 
\triangleq \max \{N_0, N_X, N_3, N_4, N_5 \}$. Using the results from (\ref{eq:uba})
and (\ref{eq:ubs}), we have that for all $n \ge N_1$,
\[
\begin{split}
\Prob (& \mE^\beta_t) \\
& = \Prob ( \frac {A(-t-b,-b-1)} {n} + 1 - \sum_{\tau=-t-b}^{-1} X_F(\tau) > 0 ) \\
& = \sum_{a=0}^{t+b} \Prob ( \sum_{\tau=-t-b}^{-1} X_F(\tau) = a ) \Prob ( A(-t-b,-b-1) > (a-1)n)  \\
& \le (t+b+1) \max_{0 \le a \le t+b} \{ \Prob ( \sum_{\tau=-t-b}^{-1} X_F(\tau) = a ) \\
&~~~~~ \times \Prob ( A(-t-b,-b-1) > (a-1)n) \} \\
& \le (t+b+1) \max \{ \max_{a \in \{0,1, \dots, t \}} \{\Prob ( \sum_{\tau=-t-b}^{-1} X_F(\tau) = a )\}, \\
&~~~~~ \max_{a \in \{1, \dots, b \}} \{ \Prob ( \sum_{\tau=-t-b}^{-1} X_F(\tau) = t+a ) \\
&~~~~~ \times \Prob ( A(-t-b,-b-1) > (t+a-1)n) \} \} \\
& \stackrel{(a)}\le (t+b+1) \max \{ 2^{t+2b} \left(\frac {1} {1-q} \right)^{b f(n) \log^2 n} e^{-nb I_X}, \\
&~~~~~ \max_{a \in \{1, \dots, b \}} \{ \Prob ( \sum_{\tau=-t-b}^{-1} X_F(\tau) = t+a ) \\
&~~~~~ \times \Prob ( A(-t-b,-b-1) > (t+a-1)n) \} \} \\
& \stackrel{(b)}\le (t+b+1) \max \{ 2^{t+2b} \left(\frac {1} {1-q} \right)^{b f(n) \log^2 n} e^{-nb I_X}, \\
&~~~~~ \max_{a \in \{1, \dots, b \}} \{ 2^{t+2b} \left(\frac {1} {1-q} \right)^{b f(n) \log^2 n} e^{-n(b-a) 
I_X} e^{-n (I_{AG}(a-1)-\epsilon)}  \} \} \\
& \le (t+b+1) 2^{t+2b} \left(\frac {1} {1-q} \right)^{b f(n) \log^2 n} \\
&~~~~~ \times \max \{e^{-nb I_X}, e^{-n \min_{a \in \{1,\dots,b \}} \{I_{AG}(a-1) - \epsilon + (b-a) I_X\}} \} \\
& \stackrel{(c)}\le (t^*+b+1) 2^{t^*+2b} \left(\frac {1} {1-q} \right)^{b f(n) \log^2 n} \\
&~~~~~ e^{-n \min \{ b I_X, \min_{c \in \{0,1,\dots,b-1 \}} \{I_{AG}(b-1-c) - \epsilon + c I_X\} \} } \\
& \stackrel{(d)}\le e^{g(n)} e^{-n I^\epsilon_0},
\end{split}
\]
where (a) is from the monotonicity of the right hand side of (\ref{eq:ubs}),
(b) is from (\ref{eq:uba}) and (\ref{eq:ubs}), (c) is from changing variable
by setting $c=b-a$, and (d) is from our choice of $N_1$.

Summing over $t=1$ to $t^*$, we have 
\[
P_1 = \sum_{t=1}^{t^*} \Prob (\mE_t) \le \sum_{t=1}^{t^*} (1 - \Prob (\mE^\gamma_t) + \Prob (\mE^\beta_t)) 
\le (C_1 + e^{g(n)}) t^* e^{-n I^\epsilon_0},
\]
for all $n \ge N_1$.

{\bf Part 2:}
We want to show that there exists a finite $N_2>0$ such that for all $n \ge N_2$, we have
\[ 
P_2 \le 4 e^{-n I^\epsilon_0}.
\]

As in the proof for Theorem~2 of \cite{sharma11b}, for any fixed real number 
$\hat{p} \in (p,1)$, we consider the arrival process $\hat{A}(\cdot)$, by 
adding extra dummy arrivals to the original arrival process $A(\cdot)$. The 
resulting arrival process $\hat{A}(\cdot)$ is simple, and has the following 
property:
\[
\hat{A}(\tau) = \left\{
\begin{array}{ll}
\hat{p} n, & \text{if}~ A(\tau) \le \hat{p} n,\\
L n, & \text{if}~ A(\tau) > \hat{p} n.
\end{array}
\right. 
\]
Hence, if we can find an upper bound on $\hat{A}_F(-t-b,-b-1)$, by our 
construction, then it is also an upper bound on $A_F(-t-b,-b-1)$.

Consider any $t \ge t^*$. Let $B=\{b_1, b_2, \dots, b_{|B|}\}$ be the set 
of time-slots in the interval from $-t-b$ to $-b-1$ when $\hat{A}(\tau) = Ln$. 
Given $L(-b)=-t-b-1$, from Lemma~\ref{lem:full}, we have that,
\[
\begin{split}
\hat{A}_F( & -t-b,-b-1) \\
& \le \sum_{r=1}^{|B|-1} \left \lceil \frac {\hat{A}(b_r + 1, b_{r+1}-1)} {n_0} 
\right \rceil + \sum_{r=1}^{|B|} \left \lceil \frac {\hat{A}(b_r,b_r)} {n_0} \right \rceil \\
&~~~~~ + \left \lceil \frac {\hat{A}(-t-b,b_1 - 1)} {n_0} \right \rceil
+ \left \lceil \frac {\hat{A}(b_{|B|} + 1, -b-1)} {n_0} \right \rceil \\
& \le \sum_{r=1}^{|B|-1} \frac {\hat{A}(b_{r+1} - 1 - b_r)\hat{p}n} {n_0} + |B|-1
+ \sum_{r=1}^{|B|} \frac {Ln} {n_0} + |B| \\
&~~~~~ + \frac {(b_1+t+b) \hat{p} n} {n_0} + 1 + \frac {(-b-1 - b_{|B|})\hat{p} n} {n_0} + 1 \\
& \le \frac {(t-|B|)\hat{p} n} {n_0} + |B| \frac {Ln} {n_0} + 2 |B| + 1 \\
& \le \frac {n} {n_0} (\hat{p}t + (L+2)|B| + 1).
\end{split}
\]

From Assumption~\ref{ass:arr_ld} on the arrival process we know that for 
large enough $n$ and $t$, $|B|$ can be made less than an arbitrarily small
fraction of $t$. Further, we can show that for $n \ge (\frac {H(2+\hat{p})} 
{1-\hat{p}})^2, t > \frac {18(1+\hat{p})} {(2+\hat{p})(1-\hat{p})}$ and 
$|B|<\frac {1-\hat{p}}{6(L+2)}t$, we have $A_F(-t-b,-b-1) < (\frac {2+\hat{p}}{3})t-1$. 
This is derived by substituting the values of $n,t$ and $|B|$ in the equation above,
\[
\begin{split}
A_F( & -t-b,-b-1) \\
& \le \hat{A}_F(-t-b,-b-1) \\
& \le \frac {n} {n_0} (\hat{p}t + (L+2)|B| + 1) \\
& < \frac {2+\hat{p}} {1+2\hat{p}} \left( \hat{p}t+\frac {1-\hat{p}} {6}t + 
(\frac {1-\hat{p}} {6}t - \frac {1+2\hat{p}}{2+\hat{p}}) \right) \\
& = \frac {2+\hat{p}} {1+2\hat{p}} \left( \frac {1+2\hat{p}} {3}t - 
\frac {1+2\hat{p}}{2+\hat{p}} \right ) \\
& \le (\frac {2+\hat{p}} {3}) t - 1.
\end{split}
\]
Then, it follows that
\begin{equation}
\label{eq:prob_arr}
\begin{split}
\Prob ( & A_F(-t-b,-b-1) \ge (\frac {2+\hat{p}} {3}) t - 1, L(-b)=-t-b-1) \\
& = 1 - \Prob(A_F(-t-b,-b-1) < (\frac {2+\hat{p}} {3}) t - 1, L(-b)=-t-b-1) \\
& \le 1 - \Prob (|B| \le \frac {1-\hat{p}} {6(L+2)} t) \\
& = \Prob (|B| > \frac {1-\hat{p}} {6(L+2)} t) \\
& \le e^{-ntI_B(\hat{p}-p,\frac {1-\hat{p}} {6(L+2)})},
\end{split}
\end{equation}
for all $n \ge N_6 \triangleq \max \{ N_B(\hat{p}-p, \frac {1-\hat{p}} {6(L+2)}),
(\frac {2(2+\hat{p})} {1-\hat{p}})^2 \}$ and $t \ge T_1$, where the last inequality
is from Assumption~\ref{ass:arr_ld} and (\ref{eq:t1}).

We now state a lemma that will be used in the proof.
\begin{lemma}
\label{lem:brv}
Let $X_i$ be a sequence of binary random variables satisfying
\[
\Prob (X_i=0) < c(n) e^{-nd}, ~\text{for all}~ i,
\]
where $c(n)$ is a polynomial in $n$ of finite degree. Let $N^{\prime}$
be such that $c(n) < e^{\frac {nd} {2}}$ for all $n \ge N^{\prime}$.
Then, for any real number $a \in (0,1)$, we have
\[
\Prob \left (\sum_{i=1}^{t} X_i < (1-a)t \right) \le e^{- \frac {tnad} {3}} 
\]
for all $n \ge \max \{ \frac {12} {ad}, N^{\prime} \}$.
\end{lemma}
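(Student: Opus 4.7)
The plan is to establish this concentration inequality via a union bound over subsets of indices where $X_i = 0$, combined with standard binomial coefficient estimates and the polynomial-growth assumption on $c(n)$. The event $\{\sum_{i=1}^{t} X_i < (1-a)t\}$ is equivalent to the event that at least $k \triangleq \lceil at \rceil$ of the variables $X_i$ take the value $0$. Note the application of this lemma in the proof of Theorem~\ref{thm:g-fbs} relies on the $X_i$'s being independent (or on a conditional version of the $\Prob(X_i=0)$ bound), which is what permits probabilities of intersections to factor; this is essential for the union bound step below.

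First, I would apply the union bound over all $\binom{t}{k}$ index sets $S \subseteq \{1,\dots,t\}$ of size $k$, and use the factorization of probabilities for the event $\{X_i = 0 \text{ for all } i \in S\}$, to obtain
\[
\Prob\!\Bigl(\sum_{i=1}^{t} X_i < (1-a)t\Bigr) \;\le\; \binom{t}{k} \bigl(c(n) e^{-nd}\bigr)^{k}.
\]
Next, I would apply the classical estimate $\binom{t}{k} \le (et/k)^{k} \le (e/a)^{k}$ (using $k \ge at$), together with the hypothesis $c(n) < e^{nd/2}$ valid for $n \ge N'$, to upper-bound this by $\bigl(e/a\bigr)^{k} e^{-ndk/2}$.

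The remaining step is a direct calculation: taking logarithms and using $k \ge at$, the right-hand side is bounded by $\exp\bigl(at\log(e/a) - ntad/2\bigr)$, so it suffices to show
\[
at\log(e/a) - \tfrac{ntad}{2} \;\le\; -\tfrac{tnad}{3},
\]
which, after dividing by $at$, reduces to $\log(e/a) \le nd/6$. Since $\log(e/a) = 1 - \log a < 1/a$ for $a \in (0,1)$, the condition $n \ge 12/(ad)$ supplies the needed slack (in fact it gives roughly a factor-of-two margin), completing the argument.

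The main obstacle is really just the bookkeeping in this last step: ensuring the constants line up to yield the clean threshold $n \ge 12/(ad)$ rather than a more convoluted bound. A subsidiary point worth flagging is that the proof implicitly needs the $X_i$'s to satisfy a joint (or at least sequentially conditional) version of the upper bound on $\Prob(X_i = 0)$; independence is the simplest hypothesis that guarantees this, and is consistent with how the lemma is invoked in bounding the number of successes of G-FBS across distinct time-slots. No refined concentration inequality (Chernoff, Bernstein, etc.) appears to be needed, since the one-sided nature of the event (lower tail of a sum of indicators whose mean is overwhelmingly close to $t$) is already captured by the crude union bound, thanks to the exponentially small per-slot failure probability $c(n) e^{-nd}$.
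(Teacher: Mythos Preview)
Your proof is correct. The paper itself does not give an explicit argument for this lemma; it simply writes ``The proof follows immediately from Lemma~1 of \cite{sharma11b}.'' Your union-bound-over-subsets argument, together with the crude binomial estimate $\binom{t}{k}\le(e/a)^k$ and the hypothesis $c(n)<e^{nd/2}$, is exactly the kind of elementary derivation one would expect behind that citation, and the constants line up cleanly with the stated threshold $n\ge 12/(ad)$.

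Your flag about independence is well taken: as stated, the lemma only assumes a marginal bound $\Prob(X_i=0)<c(n)e^{-nd}$, which is not enough to factor $\Prob(\cap_{i\in S}\{X_i=0\})$. In the paper's application the $X_F(\tau)$ depend only on the \emph{i.i.d.}\ channel realizations in time-slot $\tau$, so independence across $\tau$ holds and the factorization is legitimate; the cited Lemma~1 of \cite{sharma11b} presumably carries that hypothesis. This is a gap in the lemma's \emph{statement}, not in your proof.
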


\begin{proof}
The proof follows immediately from Lemma~1 of \cite{sharma11b}.
\end{proof}

Moreover, we know from Lemma~\ref{lem:success} that for each $\tau$, 
$X_F(\tau)=0$ with probability less than $2(1-q)^{n-f(n) \log^2 n} 
= 2(\frac {1} {1-q})^{f(n) \log^2 n} e^{-nI_X}$ for all $n \ge N_X$. 
Choose $N_7$ such that $2(\frac {1} {1-q})^{f(n) \log^2 n} < e^{\frac 
{nI_X}{2}}$ for all $n \ge N_7$. Hence, choosing $N_8 \triangleq \max 
\{N_X, N_7, \frac {36} {(1-\hat{p})I_X} \}$ and using Lemma~\ref{lem:brv}, 
we have 
\begin{equation}
\label{eq:prob_suc}
\begin{split}
\Prob ( & X_F(-t-b,-1)<(\frac {2+\hat{p}} {3})t, L(-b)=-t-b-1) \\
& \le \Prob (X_F(-t-b,-1)<(\frac {2+\hat{p}} {3}) (t+b), L(-b)=-t-b-1) \\
& \le e^{-n(t+b)(\frac {1-\hat{p}}{9})I_X} \\
& \le e^{-nt\frac {(1-\hat{p})I_X}{9}}, 
\end{split}
\end{equation}
for all $n \ge N_8$ and $t>0$.

From (\ref{eq:prob_arr}) and (\ref{eq:prob_suc}), we have that for all $n \ge 
N_9 \triangleq \max \{ N_6, N_8 \}$ and $t \ge T_1$,
\[
\begin{split}
\Prob ( & A_F(-t-b,-b-1)+1-X_F(-t-b,-1)>0, L(-b)=-t-b-1) \\
& \le 1-(1-e^{-nt(\frac {1-\hat{p}}{9})I_X})(1-e^{-ntI_B(\hat{p}-p,\frac {1-\hat{p}} {6(L+2)})}) \\
& \le 2 e^{-ntI_{BX}},
\end{split}
\]
where the last inequality is from (\ref{eq:ibx}).

Then, summing over all $t \ge t^*$, we have that for all $n \ge N_2 \triangleq 
\max \{ N_9, \left \lceil \frac {\log 2} {I_{BX}} \right \rceil \}$,
\[
\begin{split}
P_2 & = \sum_{t=t^*}^{\infty} \Prob(L(-b)=-t-b-1, \mE^{\alpha}_t) \\
& \le \sum_{t=t^*}^{\infty} \Prob(L(-b)=-t-b-1, \\
&~~~~~~~~~~~ A_F(-t-b,-b-1) + 1 > X_F(-t-b,-1)) \\
& \le \sum_{t=t^*}^{\infty} 2 e^{-ntI_{BX}} \\
& \le \frac {2 e^{-nt^*I_{BX}}} {1-e^{-nI_{BX}}} \\
& \stackrel{(a)}\le 4 e^{-n t^* I_{BX}} \\
& \stackrel{(b)}\le 4 e^{-n I^\epsilon_0},
\end{split}
\]
where (a) is from our choice of $N_2$, and (b) is from (\ref{eq:tstar}).

Combining both parts, the result of the theorem then follows.

%%%%%%%%%%%%%%%%%%%%%%%%%%%%%%%%%%%%%%%%%%%%%%%%%%%%%%%%%%%%%%%%%
\section{Proof of Lemma~\ref{lem:dssg-dom}}  \label{app:lem:dssg-dom}
%%%%%%%%%%%%%%%%%%%%%%%%%%%%%%%%%%%%%%%%%%%%%%%%%%%%%%%%%%%%%%%%%
Consider two queueing systems, $\bar{Q}_1$ and $\bar{Q}_2$, both of 
which have the same arrival and channel realizations. We assume that 
$\bar{Q}_1$ adopts the D-SSG policy and $\bar{Q}_2$ adopts the G-FBS 
policy (with any value of $h$). We define the weight of a packet $p$ 
in time-slot $t$ as its delay, i.e., $w(p) = t - t_p$, where $t_p$ 
denotes the time when packet $p$ arrives to the system. Note that 
different packets (in the same queue or in different queues) may have 
the same delay. In order to make each packet in the system have a unique 
weight, we redefine the weight of a packet $p$ as $\hat{w}(p) = t - t_p 
+ \frac {n+1-q_p} {n+1} + \frac {L+1-x_p} {(L+1) (n+1)}$, where $q_p$ 
denotes the index of the queue that contains packet $p$ and $x_p$ 
denotes that packet $p$ is the $x_p$-th arrival to queue $q_p$ in 
time-slot $t_p$. For two packets $p_1$ and $p_2$, we say $p_1$ is 
older than $p_2$ if $\hat{w}(p_1) > \hat{w} (p_2)$. Note that as in 
\cite{sharma11,sharma11b}, we use weight $\hat{w}(\cdot)$ instead of 
$w(\cdot)$ for ease of analysis only.

Let $R_i(t)$ denote the set of packets present in system $\bar{Q}_i$ 
at the end of time-slot $t$, for $i=1,2$. Then, it suffices to show 
that $R_1(t) \subseteq R_2(t)$ for all time-slot $t$. We let $A(t)$ 
denote the set of packets that arrive in time-slot $t$. Let $X_i(t)$ 
denote the set of packets that depart system $\bar{Q}_i$ at time $t$, 
for $i=1,2$. Hence, we have
\[
R_i(t+1) = (R_i(t) \cup A(t+1)) \backslash X_i(t+1),~\text{for}~ i=1,2.
\]

We then proceed our proof by contradiction. Suppose that $R_1(t) 
\nsubseteq R_2(t)$ for some time-slot $t$. Without loss of generality,
we assume that time-slot $\tau$ is the first time such that $R_1(\tau) 
\nsubseteq R_2(\tau)$ occurs. Hence, there must exist a packet, say $p$, 
such that $p \in R_1(\tau)$ but $p \notin R_2(\tau)$. Because $\tau$ is 
the first time when such an event occurs, packet $p$ must depart from 
system $\bar{Q}_2$ in time-slot $\tau$ while it is not served in system
$\bar{Q}_1$, i.e., $p \in X_2(\tau)$ and $p \notin X_1(\tau)$.
Let $B_i(v)$ denote the set of packets in $R_i(\tau-1) \cup A(\tau)$
with weight greater than $v$, for $i=1,2$. Clearly, we have $B_1(v) 
\subseteq B_2(v)$ for all $v$, as $R_1(\tau-1) \subseteq R_2(\tau-1)$ 
by assumption. Let $\mS_i(v)$ denote the set of servers that are chosen 
to serve packets in $B_i(v)$, for $i=1,2$. 

Now, consider time-slot $\tau$. Given that packet $p$ is not served in system 
$\bar{Q}_1$, we want to show that packet $p$ cannot be served in system 
$\bar{Q}_2$ either. From the hypothesis, we know that packet $p$ must be 
disconnected from any server in set $\mS \backslash \mS_1(\hat{w}(p))$, 
otherwise due to the operations of D-SSG, packet $p$ must be served by some 
server in set $\mS \backslash \mS_1(\hat{w}(p))$ in system $\bar{Q}_1$. 
Note that both systems have the same channel realizations. Hence, packet
$p$ cannot be served by any server in set $\mS \backslash \mS_1(\hat{w}(p))$ 
in system $\bar{Q}_2$ either. Next, we show that packet $p$ is not served 
by any server in set $\mS_1(\hat{w}(p))$ in system $\bar{Q}_2$.
Consider any server $S_{j(r)} \in \mS_1(\hat{w}(p))$, which 
serves some packet $x_r \in B_1(\hat{w}(p))$ in system $\bar{Q}_1$ (under 
D-SSG). Then, one of the following must occur in system $\bar{Q}_2$ (under 
G-FBS, which runs D-SSG over the oldest $n$ packets): either server $S_{j(r)}$ 
also serves packet $x_r$, or server $S_{j(r)}$ is allocated to serve some 
packet with a larger weight than packet $x_r$ due to the operations of D-SSG 
and the fact that $B_1(\hat{w}(p)) \subseteq B_2(\hat{w}(p))$. This implies 
that packet $p$ is not served by any server in set $\mS_1(\hat{w}(p))$ in 
system $\bar{Q}_2$. Therefore, packet $p$ is not served by any server in 
system $\bar{Q}_2$. This leads to a contradiction that packet $p$ departs 
from system $\bar{Q}_2$ in time-slot $\tau$, and thus completes the proof.

%%%%%%%%%%%%%%%%%%%%%%%%%%%%%%%%%%%%%%%%%%%%%%%%%%%%%%%%%%%%%%%%%%
\section{Proof of Theorem~\ref{thm:dssg-to}}  \label{app:thm:dssg-to}
%%%%%%%%%%%%%%%%%%%%%%%%%%%%%%%%%%%%%%%%%%%%%%%%%%%%%%%%%%%%%%%%%%
We first restate the definition of a class of throughput-optimal MWF
policies, and then show that D-QSG is an MWF policy.

Recall that $Q_i(t)$ denotes the length of queue $Q_i$ at the beginning
of time-slot $t$ immediately after packet arrivals, $Z_{i,l}(t)$ denotes
the delay of the $l$-th packet of $Q_i$ at the beginning of time-slot $t$, 
$W_i(t) = Z_{i,1}(t)$ denotes the delay of the HOL packet of $Q_i$ at the 
beginning of time-slot $t$, and $C_{i,j}(t)$ denotes the connectivity 
between $Q_i$ and $S_j$ in time-slot $t$. Also, let $\mS_j(t)$ 
denote the set of queues being connected to server $S_j$ in time-slot 
$t$, i.e., $\mS_j(t) = \{1 \le i \le n ~|~ C_{i,j}(t)=1 \}$, and let
$\Gamma_j(t)$ denote the subset of queues in $\mS_j(t)$ that have the 
largest weight in time-slot $t$, i.e., $\Gamma_j(t) \triangleq \{i \in 
\mS_j(t) ~|~ W_i(t) = \max_{l \in \mS_j(t)} W_l(t) \}$. 
Let $i(j,t)$ be the index of the queue that is served by server $S_j$ in 
time-slot $t$, under a scheduling policy $\mathbf{P}$. Policy $\mathbf{P}$ 
is said to be an MWF policy if there exists a constant $M>0$ such that, in 
any time-slot $t$ and for all $j \in \{1,2,\dots,n \}$, queue $Q_{i(j,t)}$ 
satisfies that $W_{i(j,t)}(t) \ge Z_{r,M}(t)$ for all $r \in \Gamma_j(t)$ 
such that $Q_r(t) \ge M$. It has been proven that any MWF policy is 
throughput-optimal \cite{ji13a}.

Then, it remains to show that D-SSG is an MWF policy. Let $M=n$. We want 
to show that in any time-slot $t$ and for all $j \in \{1,2,\dots,n \}$, 
D-SSG allocates server $S_j$ to serve queue $Q_{i(j,t)}$, which satisfies 
that $W_{i(j,t)}(t) \ge Z_{r,n}(t)$ for all $r \in \Gamma_j(t)$ such that 
$Q_r(t) \ge n$.

Consider any round $j$ in any time-slot $t$, where server $S_j$ needs to be 
allocated. Also, consider any $r \in \Gamma_j(t)$ such that $Q_r(t) \ge n$. 
Suppose that server $S_j$ is allocated to serve queue $Q_{i(j,t)}$, then we have 
that $W_{i(j,t)}(t) \ge W^{j-1}_{i(j,t)}(t) \ge W^{j-1}_r(t) \ge Z_{r,n}(t)$, 
where the first inequality is because the HOL packet of queue $Q_{i(j,t)}$ at 
the beginning of time-slot $t$ must have a delay no smaller than the HOL packet 
of queue $Q_{i(j,t)}$ at the beginning of the $j$-th round, the second inequality 
is due to the operations of D-SSG, and the last inequality is because the HOL packet 
of queue $Q_r$ at the beginning of the $j$-th round must have a delay no smaller 
than the $n$-th packet in queue $Q_r$ at the beginning of time-slot $t$. This implies 
that the sufficient condition is satisfied under D-SSG. Therefore, D-SSG is an MWF 
policy and is thus throughput-optimal.

\bibliographystyle{IEEEtran}
\bibliography{ofdm}

% Generated by IEEEtran.bst, version: 1.13 (2008/09/30)
\begin{thebibliography}{10}
\providecommand{\url}[1]{#1}
\csname url@samestyle\endcsname
\providecommand{\newblock}{\relax}
\providecommand{\bibinfo}[2]{#2}
\providecommand{\BIBentrySTDinterwordspacing}{\spaceskip=0pt\relax}
\providecommand{\BIBentryALTinterwordstretchfactor}{4}
\providecommand{\BIBentryALTinterwordspacing}{\spaceskip=\fontdimen2\font plus
\BIBentryALTinterwordstretchfactor\fontdimen3\font minus
  \fontdimen4\font\relax}
\providecommand{\BIBforeignlanguage}[2]{{%
\expandafter\ifx\csname l@#1\endcsname\relax
\typeout{** WARNING: IEEEtran.bst: No hyphenation pattern has been}%
\typeout{** loaded for the language `#1'. Using the pattern for}%
\typeout{** the default language instead.}%
\else
\language=\csname l@#1\endcsname
\fi
#2}}
\providecommand{\BIBdecl}{\relax}
\BIBdecl

\bibitem{ying06}
L.~Ying, R.~Srikant, A.~Eryilmaz, and G.~Dullerud, ``{A large deviations
  analysis of scheduling in wireless networks},'' \emph{IEEE Transactions on
  Information Theory}, vol.~52, no.~11, pp. 5088--5098, 2006.

\bibitem{stolyar08}
A.~Stolyar, ``{Large deviations of queues sharing a randomly time-varying
  server},'' \emph{Queueing Systems}, vol.~59, no.~1, pp. 1--35, 2008.

\bibitem{shakkottai08}
S.~Shakkottai, ``{Effective capacity and QoS for wireless scheduling},''
  \emph{IEEE Transactions on Automatic Control}, vol.~53, no.~3, pp. 749--761,
  2008.

\bibitem{venkat10}
V.~Venkataramanan and X.~Lin, ``{On wireless scheduling algorithms for
  minimizing the queue-overflow probability},'' \emph{IEEE/ACM Transactions on
  Networking}, vol.~18, no.~3, pp. 788--801, 2010.

\bibitem{bodas09}
S.~Bodas, S.~Shakkottai, L.~Ying, and R.~Srikant, ``{Scheduling in
  multi-channel wireless networks: Rate function optimality in the small-buffer
  regime},'' in \emph{ACM Proceedings of the eleventh international joint
  conference on Measurement and modeling of computer systems (SIGMETRICS)},
  2009, pp. 121--132.

\bibitem{courcoubetis96}
C.~Courcoubetis and R.~Weber, ``{Buffer overflow asymptotics for a buffer
  handling many traffic sources},'' \emph{Journal of Applied Probability}, pp.
  886--903, 1996.

\bibitem{shakkottai01}
S.~Shakkottai and R.~Srikant, ``{Many-sources delay asymptotics with
  applications to priority queues},'' \emph{Queueing Systems}, vol.~39, no.~2,
  pp. 183--200, 2001.

\bibitem{kittipiyakul09}
S.~Kittipiyakul and T.~Javidi, ``{Delay-optimal server allocation in multiqueue
  multiserver systems with time-varying connectivities},'' \emph{IEEE
  Transactions on Information Theory}, vol.~55, no.~5, pp. 2319--2333, 2009.

\bibitem{bodas10}
S.~Bodas, S.~Shakkottai, L.~Ying, and R.~Srikant, ``{Low-complexity scheduling
  algorithms for multi-channel downlink wireless networks},'' in \emph{The IEEE
  International Conference on Computer Communications (INFOCOM)}.\hskip 1em
  plus 0.5em minus 0.4em\relax IEEE, 2010, pp. 1--9.

\bibitem{bodas11a}
------, ``{Scheduling for small delay in multi-rate multi-channel wireless
  networks},'' in \emph{The IEEE International Conference on Computer
  Communications (INFOCOM)}.\hskip 1em plus 0.5em minus 0.4em\relax IEEE, 2011,
  pp. 1251--1259.

\bibitem{bodas11b}
S.~Bodas and T.~Javidi, ``{Scheduling for multi-channel wireless networks:
  Small delay with polynomial complexity},'' in \emph{2011 International
  Symposium on Modeling and Optimization in Mobile, Ad Hoc and Wireless
  Networks (WiOpt)}.\hskip 1em plus 0.5em minus 0.4em\relax IEEE, 2011, pp.
  78--85.

\bibitem{sharma11}
M.~Sharma and X.~Lin, ``{OFDM downlink scheduling for delay-optimality:
  Many-channel many-source asymptotics with general arrival processes},'' in
  \emph{The IEEE Information Theory and Applications Workshop (ITA)}, 2011.

\bibitem{sharma11b}
\BIBentryALTinterwordspacing
------, ``{OFDM downlink scheduling for delay-optimality: Many-channel
  many-source asymptotics with general arrival processes},'' Purdue University,
  Tech. Rep., 2011. [Online]. Available:
  \url{https://engineering.purdue.edu/%7elinx/papers.html}
\BIBentrySTDinterwordspacing

\bibitem{ji13c}
B.~Ji, C.~Joo, and N.~B. Shroff, ``{Delay-Based Back-Pressure Scheduling in
  Multihop Wireless Networks},'' \emph{IEEE/ACM Transactions on Networking},
  vol.~21, no.~5, pp. 1539--1552, 2013.

\bibitem{ji13a}
B.~Ji, G.~R. Gupta, X.~Lin, and N.~B. Shroff, ``Low-complexity scheduling
  policies for achieving throughput and asymptotic delay optimality in
  multi-channel wireless networks,'' \emph{IEEE/ACM Transactions on
  Networking}, 2013, accepted for publication.

\bibitem{mekkittikul96}
A.~Mekkittikul and N.~McKeown, ``{A starvation-free algorithm for achieving
  100\% throughput in an input-queued switch},'' in \emph{Proc. of the IEEE
  International Conference on Communication Networks (ICCCN)}, 1996.

\bibitem{andrews01}
M.~Andrews, K.~Kumaran, K.~Ramanan, A.~Stolyar, P.~Whiting, and R.~Vijayakumar,
  ``{Providing quality of service over a shared wireless link},'' \emph{IEEE
  Communications magazine}, vol.~39, no.~2, pp. 150--154, 2001.

\bibitem{shakkottai02}
S.~Shakkottai and A.~Stolyar, ``{Scheduling for multiple flows sharing a
  time-varying channel: The exponential rule},'' \emph{Translations of the
  American Mathematical Society-Series 2}, vol. 207, pp. 185--202, 2002.

\bibitem{andrews04}
M.~Andrews, K.~Kumaran, K.~Ramanan, A.~Stolyar, R.~Vijayakumar, and P.~Whiting,
  ``{Scheduling in a queuing system with asynchronously varying service
  rates},'' \emph{Probability in the Engineering and Informational Sciences},
  vol.~18, pp. 191--217, 2004.

\bibitem{eryilmaz05}
A.~Eryilmaz, R.~Srikant, and J.~Perkins, ``{Stable scheduling policies for
  fading wireless channels},'' \emph{IEEE/ACM Transactions on Networking},
  vol.~13, no.~2, pp. 411--424, 2005.

\bibitem{sadiq09}
B.~Sadiq and G.~de~Veciana, ``{Throughput optimality of delay-driven MaxWeight
  scheduler for a wireless system with flow dynamics},'' in \emph{Proceedings
  of the 47th Annual Conference on Communication, Control and Computing
  (Allerton)}, 2009.

\bibitem{neely10}
M.~Neely, ``{Delay-based network utility maximization},'' in \emph{The 29th
  IEEE International Conference on Computer Communications (INFOCOM)}, 2010.

\bibitem{bramson08}
M.~Bramson, ``{Stability of queueing networks},'' \emph{Probability Surveys},
  vol.~5, no.~1, pp. 169--345, 2008.

\bibitem{dai95}
J.~Dai, ``{On positive Harris recurrence of multiclass queueing networks: a
  unified approach via fluid limit models},'' \emph{The Annals of Applied
  Probability}, pp. 49--77, 1995.

\end{thebibliography}

\end{document}